\documentclass[authoryear,reqno,11pt,a4paper,numafflabel]{elsarticle}   

\journal{Quantitative Finance}

\usepackage{amsmath,amssymb,graphicx,hyperref,setspace,ifthen,soul,amsthm,float} 
\usepackage{booktabs}
\usepackage{algorithm}
\usepackage{subfigure}
\usepackage{multirow}   
\usepackage{makecell}   
\usepackage{cleveref}
\usepackage{microtype}

\usepackage[authoryear]{natbib}
\robustify\bfseries
\hypersetup{
	pdftitle={},
	pdfauthor={Lilian Hu},
	pdfkeywords={Simulation of stochastic volatility models via operator splitting},
	colorlinks=true,
	linkcolor=red,
	citecolor=blue,
	urlcolor=blue,
	bookmarksnumbered=true,
	pdfstartview=
}

\date{\today}

\newtheorem{proposition}{Proposition}
\newtheorem{remark}{Remark}
\newtheorem{corollary}{Corollary}
\begin{document}
\setstretch{1.05}

\begin{frontmatter}

	\title{Simulation of stochastic volatility models via operator splitting schemes}

	\author[1]{Lilian Hu\corref{corrauthor}}
	\ead{l66hu@uwaterloo.ca}

	\author[2]{Congxin He}

	\author[2]{Yue Kuen Kwok}
	\author[3]{Gongqiu Zhang}

	\address[1]{\textit{\small{Department of Statistics and Actuarial Science, University of Waterloo, Canada}}}

	\address[2]{\textit{\small{Financial Technology Thrust,\\ Hong Kong University of Science and Technology (Guangzhou), China}}}

	\address[3]{\textit{\small{School of Science and Engineering,\\ Chinese University of Hong Kong, Shenzhen, China}}}

	\cortext[corrauthor]{Corresponding author}

	\begin{abstract}
		The standard Euler discretization schemes for numerical option pricing under stochastic volatility models are known to exhibit high biases and potential unreliability. The alternative use of the exact (unbiased) simulation approach invariably involves numerical evaluation of integrated variance (and / or volatility) conditional on terminal variance (volatility) value. To resolve the technical challenge, most simulation schemes either employ the tedious Fourier inversion of conditional characteristic function or numerical approximation by moment matched distribution. We propose a general framework of constructing efficient and reliable simulation schemes for stochastic volatility models via the Strang operator splitting approximation. The simulation procedure completely circumvents the necessity of evaluation of conditional integrated variance (and / or) volatility. Our simulation schemes compete favorably well with most existing exact simulation schemes and the biased Euler schemes in terms of accuracy, efficiency, reliability and ease of implementation. Extensive numerical tests were conducted to illustrate the versatility and success of our operator splitting approach for most common stochastic volatility models, such as the Heston-type models, lifted Heston model, Hull-White model, and Barndorff-Nielsen and Shephard model. We also establish the proof of second-order convergence of the operator splitting schemes. 
	\end{abstract}

	\begin{keyword}
		stochastic volatility models
		\sep Monte Carlo simulation
		\sep operator splitting schemes
		\sep lifted Heston model
		\sep Barndorff-Nielsen and Shephard model
	\end{keyword}

\end{frontmatter}
    \section{Introduction}\label{s:intro}\noindent
  There has been an extensive literature on simulation schemes for option pricing under various types of stochastic volatility models. The direct Euler / Milstein type discretization schemes are known to exhibit discretization biases and potential unreliability. In the case of the Heston model, the square root function in its variance dynamics violates the Lipschitz condition. This may hinder the proper numerical convergence of the Euler discretization schemes. Also, the occurrence of negative values of simulated variance under the Cox-Ingersoll-Ross (CIR) dynamics must be fixed heuristically. Readers may refer to \citet{Lord2010a} for comprehensive discussion of these fixes in discretization schemes.
  
  \citet{Broadie} pioneer the unbiased (exact) simulation of the asset price and variance for the Heston stochastic volatility model based on exact simulation of the conditional transition distribution of variance and time integrated variance conditional on the terminal variance value. The first simulation step of the terminal variance value is rather straightforward. However, the second step of simulation of conditional integrated variance poses some technical challenge. \citet{Broadie} propose to generate simulated conditional integrated variance via the numerical Fourier inversion of the corresponding characteristic function to obtain the distribution function and in conjunction with a root finding procedure in inverting the distribution function. This second step is proven to be highly computationally demanding, especially for numerical pricing of path dependent options with discrete monitoring of asset prices at multiple time points. The exact simulation approach of taking the Fourier inversion of the characteristic function of conditional time integral of variance or volatility has also been extended to other stochastic volatility models. These include the more recent works of \citet{cai2017sabr} for the SABR model, \citet{Li} for the Ornstein-Uhlenbeck (OU) driven model, \citet{Zeng2023} for the Heston model, 4/2 model and OU driven model, \citet{brigone2024104861} for the Hull-White (HW) model. 
%
   
  Recall that the bottleneck in achieving computational efficiency in the unbiased or so-called “exact” simulation approach is the numerical evaluation of the conditional integrated variance. To mitigate the tedious Fourier inversion procedure for each simulation path, \citet{Zeng2023} employ the Hilbert transform inversion in conjunction with interpolation techniques and manage to achieve relatively significant reduction in CPU time. Their simulation schemes are effective for pricing path dependent options. Instead of using the integral transform inversion to recover the conditional integrated variance, \citet{tse2012low} approximate this quantity by the moment matched Inverse Gaussian distribution. Similar technique of approximating the conditional integrated variance in the Heston model using the Inverse Gaussian limiting distribution is also proposed by \citet{AbiJaber2024heston}. In addition, \citet{kyriakou2023unified} utilize the Pearson curve to simulate the log-asset price via fitting the first four order moments of the conditional integrated variance. The moment matching approach has also been used by \citet{andersen2008simple} in his truncated Gaussian scheme for the Heston model in the simulation of the variance process. 
  
  Besides the above time-consuming exact simulation schemes, there have been several innovative simulation schemes that achieve remarkably high computational efficiency via explicit analytic formulas for the conditional integrated variance. For the Heston model, \citet{glasserman2011gamma} manage to derive the explicit analytic representation of the conditional integrated variance via an infinite series of gamma expansion. The computational efficiency of the new exact simulation is significantly enhanced when compared with that of \citet{Broadie}. \citet{malham2021series} propose an exact simulation scheme for the Heston model by deriving an alternative series expansion for the conditional integrated variance in terms of double infinite weighted sums of independent random variables through a measure change and decomposition of squared Bessel bridges. By exploring the Poisson conditioned gamma expansion of \citet{glasserman2011gamma}, \citet{choi_heston2024} achieve the most remarkable computational efficiency and accuracy when compared to all simulation schemes for the Heston model.  For the OU driven model, \citet{Choi2025OU} derive the time integrals of volatility and variance as an infinite sum of independent normal random variables via the Karhunen-Lo\`{e}ve expansion. For the same level of numerical accuracy, his simulation scheme is faster than the simulation scheme of \citet{Li} by a factor of several hundred. So far, such remarkable successes of computational efficiency using infinite series representation of integrated variance / volatility are limited to the Heston model and OU driven model.

  To resolve the limitation of reproducing the sudden spikes in the Heston-type models, 
  \citet{barndorff-nielsen2001non} introduce the non-Gaussian Ornstein-Uhlenbeck (OU) framework. 
  This is achieved by modeling the variance dynamics via a Background Driving Lévy Process (BDLP). 
  The Barndorff-Nielsen Shephard (BN-S) model effectively decouples the jump activity in returns from that in volatility, 
  offering a flexible structure to characterize market discontinuities. Despite the 
  theoretical elegance of the BN-S model, the search for efficient simulation schemes for pricing
  path dependent derivatives remains a computational challenge. The existing numerical schemes 
  broadly fall into two categories. The first category consists of semi-analytical methods based on 
  the use of Fourier transform. 
  For instance, \citet{Baviera2024} utilize the inversion of characteristic function 
  for pricing VIX futures. Such approach is highly efficient for European style products. However, it
   often falters when dealing with discretely monitored path dependent options 
  due to the lack of tractable characteristic 
  function for the integrated variance. 
  The second category of numerical schemes uses Monte Carlo simulation, which can handle complex payoff structures effectively. Furthermore, to avoid the bias introduced by the standard Euler-Maruyama discretization, research efforts have been devoted to designing exact simulation schemes.

  For the Gamma-OU and OU-Gamma processes, \citet{Dassio2019Gamma} derive exact sampling algorithms 
  based on the factorization of the Moment Generating Function (MGF), and later extend the approach to the 
  Tempered Stable-OU (TS-OU) and OU-TS models \citep{Dassio2021TS}. 
  Using an alternative exact simulation approach, \citet{sabino2020gamma} and \citet{sabino2020tempered} 
  propose the exact simulation schemes based on the probabilistic decomposition and 
  the series representation of the so-called $a$-remainder. 
  Though these two distinct exact simulation approaches are both
  insightful and analytically rigorous, a major shortcoming of the simulation schemes proposed by Sabino and his coauthors is that their 
  methods can only be applied to price some special types of options. Specifically, 
  their derivation approaches primarily focus on the 
  theoretical exact simulation of the variance process only. They do not consider 
  the pricing of standard equity options, leaving the critical challenge of the joint simulation of 
  the asset price and integrated variance largely unaddressed. 
  In summary, their schemes lack the versatility to efficiently handle standard equity options or 
  complex path dependent derivatives.
  To overcome the limitations of these variance-centric approaches, \citet{James2013} develop 
  an exact simulation algorithm for the joint simulation of asset price and variance for pricing path dependent options under the OU-Gamma specification
  using the Beta-Gamma algebra and Dirichlet means.
   However, the embedded cumbersome simulation techniques (such as the Double Coupling From The Past) make the 
  implementation highly complex, resulting in prohibitive computational cost for practical 
  option pricing calculations.

  This paper aims to provide a general framework of constructing accurate, reliable, efficient and easily implementable exact simulation schemes for most common stochastic volatility models. We employ the second-order operator splitting method \citep{strang1968construction} to the coupled system of stochastic differential equations (SDEs) of asset price and variance dynamics in the stochastic volatility models. The operator splitting method splits the original system of SDEs into two parts over a time step. The numerical procedure separately computes the solution to each part, then combines the two separate solutions to form an approximate solution to the original system of SDEs. The integral representation of the solution of the resulting approximate SDEs involves only the terminal variance, without the necessity of conditional integrated variance. Therefore, the resulting simulation schemes circumvent the time-consuming numerical evaluation of conditional integrated variance, which is typically required in earlier exact simulation schemes. The major source of numerical errors arises from the operator splitting procedure. Our numerical tests reveal second-order convergence to the option prices. Since the asset prices at successive time points are simulated, the operator splitting approach is best suited for pricing path dependent options. Related techniques of operator splitting have been used by \citet{bayer2024efficient} for the lifted Heston model and \citet{Choi2025SABR} for the SABR model with good successes. In this paper, we formulate this operator splitting approach more systematically and apply the approach to most common types of stochastic volatility models, including the Heston-type models, lifted Heston model, HW model and BN-S model. Our numerical experiments demonstrate that the operator splitting approach outperforms Algorithm 2 in \citet{brigone2024104861} for the HW model, Hilbert interpolation schemes for the 4/2 model in \citet{Zeng2023}, weak simulation scheme for the lifted Heston model in \citet{bayer2024efficient} and Euler schemes under all of the above stochastic volatility models.
  We also establish the theoretical proof of second-order convergence of the operator splitting schemes.

  The remaining sections of the paper are organized as follows. The next three sections focus on the theoretical derivation of the operator splitting schemes for various types of stochastic volatility models. Section 2 discusses the general theory and procedures of the operator splitting approach, as illustrated by the Heston model, 4/2 model and HW model. Section 3 shows the construction of the operator splitting scheme for the lifted Heston model and its solution procedures. Section 4 discusses the operator splitting schemes for the BN-S model for pricing path dependent options and VIX derivatives. Section 5 presents the theoretical proof of second-order convergence of the operator splitting schemes. Section 6 documents the numerical experiments that were conducted to demonstrate computational efficiency, robustness and versatility of the operator splitting schemes for pricing various types of options under various stochastic volatility models. The last section contains summary of the results and conclusive remarks.    
	\section{Second-order operator splitting scheme for the Heston model, 4/2 model and Hull-White model}\label{s:42_heston_model_split}\noindent
	The essence of the operator splitting procedure can be best illustrated by the Heston model, which is the prototype model in stochastic volatility models. We then extend the construction of the operator splitting scheme for the 4/2 model. Though the 4/2 model includes the Heston model as a special case, the operator splitting scheme for the 4/2 model is more involved. Lastly, we present the operator splitting scheme for the HW model.
	\subsection{Second-order operator splitting scheme for the Heston model}\label{ss:heston_split}\noindent
	 We start with a complete probability space $\{\Omega,F,Q\}$, where $Q$ is a risk neutral measure. Let $S_t$ denote the asset price process of the Heston model and define the log-asset price $X_t=\ln S_t$. The governing stochastic differential equations (SDEs) for $X_t$ and its instantaneous variance process $V_t$ under $Q$ for the Heston model are given by
	 \begin{equation}\label{eq:sde_heston}
		\begin{cases}
			\text{d}X_t=(r-\frac{V_t}{2})\,\text{d}t+\sqrt{V_t}\,\text{d}Z_t,\\
			\text{d}V_t=k(\theta-V_t)\,\text{d}t+\sigma\sqrt{V_t}\,\text{d}W_t,
		\end{cases}
	\end{equation}
	where the reversion speed $k$, mean reversion level $\theta$ and volatility of variance $\sigma$ are all positive, and $r$ is the riskless interest rate. Here, $V_t$ is governed by the Cox-Ingersoll-Ross (CIR) process. The correlation coefficient of the two Brownian motions $Z_t$ and $W_t$ is $\rho$. We write $$Z_t=\rho W_t+\sqrt{1-\rho^2}W_t^{\perp},$$ where $W_t^{\perp}$ is the Brownian motion independent of $W_t$. The direct integration of the dynamic equation of $X_t$ over $[t,\,t+h]$ gives the integral representation:
	\begin{equation*}
		X_{t+h}=X_t+rh-\frac 12 \int_t^{t+h}V_s\,\text{d}s+\rho\int_t^{t+h}\sqrt{V_s}\,\text{d}W_s+\sqrt{1-\rho^2}\int_t^{t+h}\sqrt{V_s}\,\text{d}W^{\perp}_s.
	\end{equation*}
	From the SDE of $V_t$ in Eq.~\eqref{eq:sde_heston}, we have 
	\begin{equation*}
		\sqrt{V_s}\,\text{d}W_s=\frac{\text{d}V_s}{\sigma}-\frac{k}{\sigma}(\theta-V_s)\,\text{d}s.
	\end{equation*}
	Combining the above two equations, we obtain the analytic representation of $X_{t+h}$ as follows:
	\begin{equation}\label{eq:asset_heston}
		\begin{aligned}
			X_{t+h}=X_t+rh+(\frac{\rho k}{\sigma}-\frac12)I_t^h+\frac{\rho}\sigma(V_{t+h}-V_t-k\theta h)+N\left(0,(1-\rho^2)I_t^h\right),
		\end{aligned} 
	\end{equation}
	where $I_t^h=\int_t^{t+h}V_s\,\text{d}s$ is the integrated variance over $[t,\,t+h]$. Here, $N(\mu,\sigma^2)$ refers to the normal random variable with mean $\mu$ and variance $\sigma^2$. 
	
	For pricing path dependent derivatives, the common procedure is to 
	simulate the conditional integrated variance $I_t^h|V_t,V_{t+h}$. 
	Correspondingly, it is necessary to simulate $V_{t+h}$ first at each time 
	step, then simulate the conditional integrated variance $I_t^h|V_t,\,V_{t+h}$. 
	The simulation of $V_{t+h}$ has been well studied in the literature. 
	\citet{andersen2008simple} propose the quadratic-exponential scheme
	 to overcome the challenge of dealing with the square root term in 
	 the variance process. We recommend the exact simulation scheme for 
	 the CIR process as proposed by \citet{alfonsi2015affine}, which is 
	 reliable and easily implementable (see \ref{ap:CIR_simulation}). 
	 To simulate $X_{t+h}$, the challenge is to simulate $I_t^h \mid V_t, V_{t+h}$, 
	 which involves the integration of the CIR process.
	
	\citet{Broadie} derive the characteristic function of the conditional 
	integrated variance and perform the simulation of $I_t^h|V_t,\,V_{t+h}$ via the 
	Fourier inversion of the corresponding characteristic function to obtain the distribution 
	function, then followed by the root-finding procedure to obtain the simulation value of conditional
	$I_t^h$. The integral transform inversion is performed for each simulation path, so it is computationally demanding. 
	An enhanced simulation procedure is adopted by \citet{Zeng2023} via their Hilbert 
	interpolation schemes. Computational efficiency of their scheme is enhanced by an interpolation scheme. For the Heston model, there are 
	other simulation approaches that are more efficient than the above integral 
	transform inversion procedure. Here, we summarize the three other more efficient Monte Carlo simulation methods
	 to simulate $I_t^h \mid V_t, V_{t+h}$. \citet{andersen2008simple} simply uses the 
	quadrature rule, like the trapezoidal rule, to evaluate the integrated 
	variance over $[t,\,t+h]$ once the simulated values of $V_t$ and $V_{t+h}$ are 
	available. \citet{tse2012low} and \citet{AbiJaber2024heston} use the inverse 
	Gaussian distribution as an approximation via moment matching. The most efficient 
	method is the Poisson-conditioned scheme for the direct simulation of $I_t^h$ via 
	an infinite series of gamma expansion \citep{choi_heston2024}. However, the success
	 of direct simulation used in Choi-Kwok's scheme is limited to the Heston model. Its 
	 extension to the 4/2 model is not feasible. 
	
	We illustrate how to use the second-order operator splitting approach to 
	derive an efficient simulation scheme that circumvents the evaluation 
	of conditional integrated variance. Our operator splitting approach shows general 
	applicability to most common stochastic volatility models, not limited to 
	the Heston model.
%
	It is desirable to choose the second-order operator splitting so that integration of the CIR process of $V_t$ is avoided. We split the operators in the system of the SDEs of the Heston model given in Eq.~\eqref{eq:sde_heston} as follows:
	\begin{subequations}
		\begin{equation}\label{eq:heston_split_first_sde}
			\begin{cases}
				\text{d}X^{(1)}_t=(r-\frac{V^{(1)}_t}2)\,\text{d}t+\sqrt{1-\rho^2}\sqrt{V^{(1)}_t}\,\text{d}W_t^{\perp},\\
				\text{d}V^{(1)}_t=k(\theta-V^{(1)}_t)\,\text{d}t,
			\end{cases}
		\end{equation}
		\begin{equation}\label{eq:heston_split_second_sde}
			\begin{cases}
				\text{d}X^{(2)}_t=\rho\sqrt{V^{(2)}_t}\,\text{d}W_t,\\
				\text{d}V^{(2)}_t=\sigma\sqrt{V^{(2)}_t}\,\text{d}W_t.
			\end{cases}
		\end{equation}
	\end{subequations}
	The solutions of above two pairs of SDEs are given by
	\begin{subequations}\label{eq:heston_split_sol}
		\begin{equation}\label{eq:heston_split_first_sol}
			\begin{cases}
				X^{(1)}_{t+h}=X^{(1)}_t+rh-\frac12 I^{(1),h}_t+N\left(0,(1-\rho^2)I^{(1),h}_t\right),\\
				V^{(1)}_{t+h}=e^{-kh}(V^{(1)}_t-\theta)+\theta,\\
				I^{(1),h}_t=\int_t^{t+h} V^{(1)}_s\,\text{d}s=\theta h+\frac{V^{(1)}_t-\theta}{k}(1-e^{-kh}),
			\end{cases} 
		\end{equation}
		\begin{equation}\label{eq:heston_split_second_sol}
			\begin{cases}
				X^{(2)}_{t+h}=X^{(2)}_t+\frac{\rho}{\sigma}(V^{(2)}_{t+h}-V^{(2)}_{t}),\\
				V^{(2)}_{t+h}=CIR(0,0,\sigma,h,V^{(2)}_{t}),
			\end{cases} 
		\end{equation}
	\end{subequations}
	where $CIR(\alpha,\beta,\sigma,h,V_0)$ refers to the solution 
	$V_h$ of the CIR process $$\text{d}V_t=(\alpha-\beta V_t)\,\text{d}t+\sigma\sqrt{V_t}\,\text{d}W_t,$$ 
	given $V_0$ and over time step $h$. Note that we choose the appropriate operator splitting 
	in $V_t$ so that 
	the SDE for $V_t^{(1)}$ involves no stochastic term. As a result, $I_t^{(1),h}=\int_t^{t+h} V_s^{(1)}\,\text{d}s$ can be expressed 
	explicitly in terms of $V_t^{(1)}$. We recommend the use of the simulation algorithms in 
	\citet[Prop~3.1.2]{alfonsi2015affine} to 
	simulate the CIR process for $4\alpha\geq\sigma^2$ and \citet[Prop~3.1.3]{alfonsi2015affine} for the other cases, 
	the details of which are summarized in \ref{ap:CIR_simulation}. 
	The second-order operator splitting procedure as advocated by \cite{strang1968construction} is illustrated generically via the following 
	differential equation:
	\begin{equation*}
		\frac{\text{d}}{\text{d}t}x(t)=U_1\left[x(t)\right]+U_2\left[x(t)\right],\,t>0,
	\end{equation*}
	where $U_1$ and $U_2$ are separate differential operators on $x(t)$. In the second-order operator splitting scheme, the solution operator $e^{h(U_1+U_2)}$ of the full differential equation over the time interval $[0,\,h]$ is approximated by 
	\begin{equation}\label{eq:split}
		x(h)=e^{h(U_1+U_2)}x(0)\approx e^{\frac{h}{2}U_1}e^{h U_2}e^{\frac{h}{2}U_1}x(0),
	\end{equation}
	where $e^{hU_1}$ and $e^{hU_2}$ represent the solution operators of the following splitting equations:
	\begin{equation*}
		\frac{\text{d}}{\text{d}t}x(t)=U_1\left[x(t)\right]\quad\text{and}\quad\frac{\text{d}}{\text{d}t}x(t)=U_2\left[x(t)\right],
	\end{equation*}
	respectively. By adopting the above operator splitting procedure to the splitting equations~\eqref{eq:heston_split_first_sde} and \eqref{eq:heston_split_second_sde}, we obtain the following second-order operator splitting scheme of the Heston model as stated in Proposition~\ref{prop:heston}. The proof of the second-order convergence of
	the scheme is established in \Cref{prop:terminal-weak-error}.
	\begin{proposition}\label{prop:heston}
		The solution of $(X,V)$ in the second-order operator splitting scheme of the Heston model in Eq.~\eqref{eq:sde_heston} is given by
		\begin{equation}\label{eq:sol_heston_split}
			\begin{cases}
				X^{\text{split}}_{t+h}=X_t+rh-\frac12I^{\text{split},h}_t+\frac{\rho}{\sigma}(\hat{V}^{(2)}_{t+h}-\tilde{V}^{(1)}_{t+h/2})+N\left(0,(1-\rho^2)I^{\text{split},h}_t\right),\\
				V^{\text{split}}_{t+h}=e^{-kh/2}(\hat{V}^{(2)}_{t+h}-\theta)+\theta,
			\end{cases}
		\end{equation}
		where $\tilde{V}^{(1)}_{t+h/2}=e^{-kh/2}(V_t-\theta)+\theta$, $\hat{V}^{(2)}_{t+h}\sim CIR(0,0,\sigma,h,\tilde{V}^{(1)}_{t+h/2})$ and $I^{\text{split},h}_t=\theta h+\frac{V_t+\hat{V}^{(2)}_{t+h}-2\theta}{k}(1-e^{-kh/2}).$
	\end{proposition}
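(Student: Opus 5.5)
The plan is to compose the three flows of the Strang splitting \eqref{eq:split} explicitly. We take $U_1$ to be the generator of the subsystem \eqref{eq:heston_split_first_sde} and $U_2$ that of \eqref{eq:heston_split_second_sde}. We then read off the state after each substep using the exact solutions \eqref{eq:heston_split_first_sol} and \eqref{eq:heston_split_second_sol}. The flows act on the pair $(X,V)$, so we simply track both components through three consecutive stages: a first half step $h/2$ of subsystem (1), a full step $h$ of subsystem (2), and a second half step $h/2$ of subsystem (1). Independent Brownian increments are used in each stage.

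\emph{Stage 1.} Start from $(X_t,V_t)$ and apply \eqref{eq:heston_split_first_sol} with step $h/2$. This gives $\tilde V^{(1)}_{t+h/2}=e^{-kh/2}(V_t-\theta)+\theta$. The $X$-component becomes $\tilde X=X_t+rh/2-\tfrac12 I_a+N(0,(1-\rho^2)I_a)$, where $I_a=\theta h/2+\frac{V_t-\theta}{k}(1-e^{-kh/2})$.

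\emph{Stage 2.} Apply \eqref{eq:heston_split_second_sol} over $h$ with initial variance $\tilde V^{(1)}_{t+h/2}$. This yields $\hat V^{(2)}_{t+h}\sim CIR(0,0,\sigma,h,\tilde V^{(1)}_{t+h/2})$ and $\hat X=\tilde X+\frac{\rho}{\sigma}(\hat V^{(2)}_{t+h}-\tilde V^{(1)}_{t+h/2})$. The identity behind this step is that $\rho\sqrt{V}\,dW=\frac{\rho}{\sigma}dV$ in subsystem (2), which holds pathwise.

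\emph{Stage 3.} Apply \eqref{eq:heston_split_first_sol} again over $h/2$, now starting from $(\hat X,\hat V^{(2)}_{t+h})$. This gives $V^{\text{split}}_{t+h}=e^{-kh/2}(\hat V^{(2)}_{t+h}-\theta)+\theta$, which is the stated variance. For the log-price it adds $rh/2-\tfrac12 I_b+N(0,(1-\rho^2)I_b)$, where $I_b=\theta h/2+\frac{\hat V^{(2)}_{t+h}-\theta}{k}(1-e^{-kh/2})$.

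To finish, we collect terms. The drifts add to $rh$. We have $I_a+I_b=\theta h+\frac{V_t+\hat V^{(2)}_{t+h}-2\theta}{k}(1-e^{-kh/2})=I^{\text{split},h}_t$. The two Gaussian terms come from the Brownian motion $W^\perp$ on disjoint time intervals, so they are independent of each other and of the CIR stage. Conditional on $(V_t,\hat V^{(2)}_{t+h})$ they are centered normals with variances $(1-\rho^2)I_a$ and $(1-\rho^2)I_b$, and their sum is therefore $N(0,(1-\rho^2)I^{\text{split},h}_t)$ in law. This yields \eqref{eq:sol_heston_split}.

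The only point needing care is this last merging of the two normals. We must check that $I_a$ and $I_b$ are deterministic given the variance values at the stage endpoints, so that conditional independence holds. We must also check that both are nonnegative: $V^{(1)}$ is a deterministic positive path whenever its initial value is nonnegative, and $\hat V^{(2)}\ge 0$. The statement is an equality in law of the one-step update, not pathwise, which is all that is needed for simulation.
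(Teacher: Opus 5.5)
Your proposal is correct and follows the paper's own proof (Appendix~B) essentially step for step. It runs the same three stages: a half step of sub-operator 1, then a full CIR step of sub-operator 2 that adds $\frac{\rho}{\sigma}(\hat V^{(2)}_{t+h}-\tilde V^{(1)}_{t+h/2})$ to $X$, then a second half step of sub-operator 1, after which $I_a+I_b$ is summed and the two conditionally independent normals are merged. Your extra remarks on conditional independence (via the independence of $W^\perp$ and $W$) and on the nonnegativity of $I_a,I_b$ only make explicit what the paper leaves implicit.
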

	 \begin{proof}
	 	See \ref{ap:proof_heston}.
	 \end{proof}
	  \begin{remark}\label{re:heston_posi}
	  	Note that $V_{t+h}^{\text{split}}=e^{-kh/2}\hat{V}^{(2)}_{t+h}+\theta(1-e^{-kh/2})$ is always nonnegative, provided that $\hat{V}^{(2)}_{t+h}\geq 0$. Observe that $\hat{V}^{(2)}_{t+h}$ obtained via simulation of the CIR process is automatically nonnegative. Therefore, unlike the Euler scheme, the above second-order operator splitting scheme does not require any truncation on $V$ to circumvent potential negative value of $V$.
%
	 \end{remark}
	  \begin{remark}\label{re:heston_choi}
	  	The most efficient simulation scheme for the Heston model would be the Poisson-conditioned simulation scheme in \citet{choi_heston2024}.
		Their scheme achieves high accuracy and computational efficiency
		even for large time step $h$. 
		For example, in the pricing of European call option, 
		their scheme can agree with the benchmark value at least up to 
		three significant digits under $h=T$, where $T$ is the time to 
		maturity while their CPU time is below 0.1 seconds 
		[see Table~4 in \citet{choi_heston2024}]. 
		However, their approach of using an infinite series of Poisson-conditioned
		gamma expansion of integrated variance cannot be extended to the 4/2 model due to the presence of the term $\frac{1}{V_t}$ in the SDE. Our operator splitting approach shows general applicability to most common stochastic volatility models. 
	  \end{remark}

	  \begin{remark}
The straightforward approximation of $I_t^h$ in Eq.~\eqref{eq:asset_heston} by 
the trapezoidal rule, $I_t^h \approx \frac{h}{2}(V_t+V_{t+h})$, gives an 
alternative simple simulation scheme for the Heston model as follows:

\begin{equation}
\begin{split}
    X_{t+h}^{\mathrm{trap,Heston}} &= X_t + \frac{h}{2}\left(\frac{\rho k}{\sigma} - \frac{1}{2}\right)(V_t + V_{t+h}) + \frac{\rho}{\sigma}(V_{t+h} - V_t - k\theta h) \\
    &\quad + N\left(0, \frac{h}{2}(1-\rho^2)(V_t+V_{t+h})\right).
\end{split}
\end{equation}
Surprisingly, this simulation scheme as derived from the trapezoidal rule performs quite well. 
Its performance is only slightly below that of the operator splitting scheme 
(\ref{eq:sol_heston_split}) in terms of accuracy versus simulation time 
tradeoff [see numerical comparison with the second-order operator splitting scheme (\ref{eq:sol_heston_split}) in Subsection \ref{subsection:heston_4_2_hull_white}].
\end{remark}
	  \subsection{Second-order operator splitting scheme for the 4/2 model}\label{ss:42_split}\noindent
	  In this subsection, we demonstrate the use of operator splitting as applied to the 4/2 model, which has one additional term $\frac1{V_t}$ in the variance of $X_t$ when compared with the Heston model. The governing SDEs of the log-asset price $X_t$ and $V_t$ of the 4/2 model \citep{Grasselli} are given by
	  \begin{equation}\label{eq:sde_42}
	  	\begin{cases}
	  		\text{d}X_t=(r-ab-\frac{a^2V_t}{2}-\frac{b^2}{2V_t})\,\text{d}t+(a\sqrt{V_t}+\frac{b}{\sqrt{V_t}})\,\text{d}Z_t,\\
	  		\text{d}V_t=k(\theta-V_t)\,\text{d}t+\sigma\sqrt{V_t}\,\text{d}W_t,
	  	\end{cases}
	  \end{equation}
	  where $k$, $\theta$ and $\sigma$ are all non-negative. The correlation coefficient of the two Brownian motions $Z_t$ and $W_t$ is $\rho$. We focus on the case where $b\neq 0$. When $b=0$, the 4/2 model reduces to the Heston model, which has been considered in Subsection~\ref{ss:heston_split}.
	  By performing the direct integration of the dynamic equation of $X_t$, we obtain
	  \begin{equation*}
	  	\begin{aligned}
	  		X_{t+h}&=X_t+(r-ab)h-\frac{a^2}2\int_t^{t+h}V_s\,\text{d}s-\frac{b^2}2\int_t^{t+h}\frac1{V_s}\,\text{d}s+\rho a\int_t^{t+h}\sqrt{V_s}\,\text{d}W_s\\&+\rho b\int_t^{t+h}\frac1{\sqrt{V_s}}\,\text{d}W_s+\sqrt{1-\rho^2}\int_t^{t+h}(a\sqrt{V_s}+\frac b{\sqrt{V_s}})\,\text{d}W_s^{\perp}.
	  	\end{aligned}
	  \end{equation*}
	  To resolve the occurrence of the reciprocal terms of $V_t$ and $\sqrt{V_t}$ in the dynamics of $X_t$, it is instructive to consider the dynamics of $\ln V_t$ as follows:
	  $$\text{d}\ln V_t=\left[(k\theta-\frac{\sigma^2}2)\frac 1{V_t}-k\right]\text{d}t+\frac{\sigma}{\sqrt{V_t}}\,\text{d}W_t.$$ 
	  Putting the results together, we obtain the following analytic representation formula of $X_{t+h}$ under the 4/2 model:
	  \begin{equation}\label{eq:asset_42}
	  	\begin{aligned}
	  		X_{t+h}&=X_t+(r-ab)h-\frac{a^2}2I_t^h-\frac{b^2}{2}J_t^h+\frac{\rho a}\sigma(V_{t+h}-V_t-k\theta h+kI_t^h)\\&+\frac{\rho b}{\sigma}\left[\ln{V_{t+h}}-\ln{V_t}+kh+(\frac{\sigma^2}{2}-k\theta)J_t^h\right]+N\left(0,(1-\rho^2)(a^2I_t^h+b^2J_t^h+2abh)\right),
	  	\end{aligned} 
	  \end{equation}
	  where $I_t^h=\int_t^{t+h}V_s\,\text{d}s$ and $J_t^h=\int_t^{t+h}\frac1{V_s}\,\text{d}s$.
	    When compared with the Heston model, there is an additional integrated reciprocal variance $J_t^h$.
		 For pricing path dependent derivatives, 
		it is necessary to simulate the conditional stochastic integrals $I_t^h|V_t,V_{t+h}$
		and $J_t^h|V_t,V_{t+h}$. \citet{Baldeaux} derive 
		the Laplace transform of the conditional distribution of the integrated 
		reciprocal variance $J_t^h$ in the 3/2 model, which 
		corresponds to the case of $a=0$ in the 4/2 model. 
		He simulates the conditional integrated reciprocal variance via 
		the inverse transform technique and root-finding procedure. 
		\citet{Zeng2023} derive the conditional moment generating function (MGF) of the terminal log-asset price of the 4/2 model and 
		construct their simulation scheme via their Hilbert interpolation 
		algorithm. These methods are invariably tedious to implement and 
		time-consuming. 
	    
	    By following a similar idea as illustrated in the Heston scheme, we propose the following operator splitting procedures to avoid the challenge involved in evaluation of $I_t^h$ and $J_t^h$ as follows:
	   \begin{subequations}
	  	\begin{equation}\label{eq:42_split_first_sde}
	  		\begin{cases}
	  			\text{d}X^{(1)}_t=(r-ab-\frac{a^2V^{(1)}_t}{2}-\frac{b^2}{2V^{(1)}_t})\,\text{d}t+\sqrt{1-\rho^2}(a\sqrt{V^{(1)}_t}+\frac{b}{\sqrt{V^{(1)}_t}})\,\text{d}W_t^{\perp},\\
	  			\text{d}V^{(1)}_t=(k\theta-\frac{\sigma^2}2-kV^{(1)}_t)\,\text{d}t=k(\tilde{\theta}-V^{(1)}_t)\,\text{d}t,
	  		\end{cases}
	  	\end{equation}
	  	\begin{equation}\label{eq:42_split_second_sde}
	  		\begin{cases}
	  			\text{d}X^{(2)}_t=\rho(a\sqrt{V^{(2)}_t}+\frac{b}{\sqrt{V^{(2)}_t}})\,\text{d}W_t,\\
	  			\text{d}V^{(2)}_t=\frac{\sigma^2}2\,\text{d}t+\sigma\sqrt{V^{(2)}_t}\,\text{d}W_t,
	  		\end{cases}
	  	\end{equation}
	  \end{subequations}
	  where $\tilde{\theta}=\theta-\frac{\sigma^2}{2k}.$ Since we assume $b\neq0$, the presence of the term $\frac1{V_t}$ dictates that the Feller condition $2k\theta\geq \sigma^2$ must hold in order to guarantee that $V$ cannot achieve the zero value. Accordingly, we assume $\tilde{\theta}\geq 0$.
	  
	  Note that we do not split the SDE of $V$ in the 4/2 model in the same manner as that of the Heston model in Eqs.~(\ref{eq:heston_split_first_sde}, \ref{eq:heston_split_second_sde}). 
	  Instead, we include an extra term $\frac{\sigma^2}2\,\text{d}t$ in the second step of operator splitting procedure in Eq.~\eqref{eq:42_split_second_sde} and substract the same term in the drift term of the first step of operator splitting procedure in Eq.~\eqref{eq:42_split_first_sde}. The reason is that we have the term $\frac{b}{\sqrt{V^{(2)}_t}}\,\text{d}W_t$ in the SDE of $X_t^{(2)}$, so we require the SDE of $\ln V_t^{(2)}$ to include $\frac 1{\sqrt{V^{(2)}_t}}\,\text{d}W_t$ in its SDE. Also, we need to choose an appropriate drift term of $V_t^{(2)}$ to avoid $\frac 1{V_t^{(2)}}$ in the SDE of $\ln V_t^{(2)}$, so that the integration of $\int_{t}^{t+h}\frac 1{V_s^{(2)}}\,\text{d}s$ does not appear in the analytic representation formula of $X_{t+h}^{(2)}$. Based on the SDE of $V_t^{(2)}$ in Eq.~\eqref{eq:42_split_second_sde}, we obtain \begin{equation*}
	  	\text{d}\ln{V^{(2)}_t}=\frac{\sigma}{\sqrt{V^{(2)}_t}}\,\text{d}W_t,
	  \end{equation*}
	  so that it does not include the integration of $\frac1{V^{(2)}_t}$ in the solution of $X^{(2)}_{t+h}$ from Eq.~\eqref{eq:42_split_second_sde}. 
	  
	  The solutions of the two-step operator splitting procedures for the 4/2 model are given by
	  \begin{subequations}
	  	\begin{equation}\label{eq:42_split_first_sol}
	  		\begin{cases}
	  			X^{(1)}_{t+h}=X^{(1)}_t+(r-ab)h-\frac12 (I^{(1),h}_t+J^{(1),h}_t)+N\left(0,(1-\rho^2)(a^2I^{(1),h}_t+b^2J^{(1),h}_t+2abh)\right),\\
	  			V^{(1)}_{t+h}=e^{-kh}(V^{(1)}_t-\tilde{\theta})+\tilde{\theta},\\
	  			I^{(1),h}_t=\int_t^{t+h} V^{(1)}_s\,\text{d}s=\tilde{\theta} h+\frac{V^{(1)}_t-\tilde{\theta}}{k}(1-e^{-kh}),\\
	  		\end{cases} 
	  	\end{equation}
	  	\begin{equation}\label{eq:42_split_second_sol}
	  		\begin{cases}
	  			X^{(2)}_{t+h}=X^{(2)}_t+\frac{\rho a}{\sigma}(-\frac{\sigma^2 h}{2}+V^{(2)}_{t+h}-V^{(2)}_{t})+\frac{\rho b}{\sigma}(\ln{V^{(2)}_{t+h}}-\ln{V^{(2)}_{t}}),\\
	  			V^{(2)}_{t+h}=CIR(\frac{\sigma^2}2,0,\sigma,h,V^{(2)}_{t}),
	  		\end{cases} 
	  	\end{equation}
	  \end{subequations}
	  where \begin{equation*}
	  		J^{(1),h}_t=\int_t^{t+h}\frac1{V^{(1)}_s}\,\text{d}s=\begin{cases}
	  			\frac1{k\tilde{\theta}}\ln{|\frac{V^{(1)}_t+\tilde{\theta}(e^{kh}-1)}{V^{(1)}_t}|},\,\text{for }\tilde{\theta}>0,\\
	  		\frac{e^{kh}-1}{kV^{(1)}_t},\,\text{for }\tilde{\theta}=0.
	  		\end{cases}
	  \end{equation*}
	 Based on the above representation of solutions of the split ODE systems, 
	 we deduce the operator splitting simulation scheme of the 4/2 model as summarized in 
	 Proposition~\ref{prop:42}.
	  \begin{proposition}\label{prop:42}
	  	The second-order operator splitting scheme of the 4/2 model, given $X_t$ and $V_t$, is given by
	  	\begin{equation}\label{eq:sol_split_42}
	  		\begin{cases}
	  			\begin{aligned}
	  				X^{\text{split}}_{t+h}&=X_t+rh-\frac{C_t^h}2+\frac{\rho a}{\sigma}(-\frac{\sigma^2 h}{2}+\hat{V}^{(2)}_{t+h}-\tilde{V}^{(1)}_{t+h/2})+\frac{\rho b}{\sigma}(\ln{\hat{V}^{(2)}_{t+h}}-\ln{\tilde{V}^{(1)}_{t+h/2}})\\&+N\left(0,(1-\rho^2)C_t^h\right),
	  			\end{aligned}
	  			\\
	  			V^{\text{split}}_{t+h}=(\hat{V}^{(2)}_{t+h}-\tilde{\theta})e^{-kh/2}+\tilde{\theta},\\
	  			C_t^h=2abh+a^2\left[\tilde{\theta}h+\frac{V_t+\hat{V}^{(2)}_{t+h}-2\tilde\theta}{k}(1-e^{-kh/2})\right]+b^2\tilde{J}_t^h,
	  		\end{cases}
	  	\end{equation}
	  	where $\tilde{V}^{(1)}_{t+h/2}=(V_t-\tilde{\theta})e^{-kh/2}+\tilde{\theta}$, $\hat{V}^{(2)}_{t+h}\sim CIR(\frac{\sigma^2}{2},0,\sigma,h,\tilde{V}^{(1)}_{t+h/2})$ and 
	  	\begin{equation*}
	  		\tilde{J}_t^h=\begin{cases}
	  			\frac{e^{kh/2}-1}{k}(\frac1{V_t}+\frac1{\hat{V}^{(2)}_{t+h}}), \text{ for }\tilde{\theta}=0,\\
	  			\frac1{k\tilde{\theta}}\left(\ln{|\frac{V_t+\tilde{\theta}(e^{kh/2}-1)}{V_t}|}+\ln{|\frac{\hat{V}^{(2)}_{t+h}+\tilde{\theta}(e^{kh/2}-1)}{\hat{V}^{(2)}_{t+h}}|}\right), \text{ for }\tilde{\theta}>0
	  		\end{cases}
	  	\end{equation*} 
	  \end{proposition}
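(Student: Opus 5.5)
The plan is to compose the explicit solution maps \eqref{eq:42_split_first_sol} and \eqref{eq:42_split_second_sol} in the Strang order $e^{\frac h2U_1}e^{hU_2}e^{\frac h2U_1}$ of \eqref{eq:split}, exactly as in the Heston case of Proposition~\ref{prop:heston}. Start from $(X_t,V_t)$ and run the first subsystem \eqref{eq:42_split_first_sde} over $[t,t+h/2]$. Since $V^{(1)}$ is deterministic, this gives $\tilde V^{(1)}_{t+h/2}=(V_t-\tilde\theta)e^{-kh/2}+\tilde\theta$. It also gives an $X$-increment of $(r-ab)\frac h2$ minus half the combined drift, plus an independent Gaussian with variance $(1-\rho^2)\big(a^2I_1+b^2J_1+2ab\frac h2\big)$. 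Here $I_1=\tilde\theta\frac h2+\frac{V_t-\tilde\theta}{k}(1-e^{-kh/2})$, and $J_1$ is the formula for $J^{(1),h}_t$ with $h$ replaced by $h/2$.

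Next run \eqref{eq:42_split_second_sde} over a full step $h$ from $\tilde V^{(1)}_{t+h/2}$. This yields $\hat V^{(2)}_{t+h}\sim CIR(\frac{\sigma^2}{2},0,\sigma,h,\tilde V^{(1)}_{t+h/2})$ and the $X$-increment $\frac{\rho a}{\sigma}(-\frac{\sigma^2h}{2}+\hat V^{(2)}_{t+h}-\tilde V^{(1)}_{t+h/2})+\frac{\rho b}{\sigma}(\ln\hat V^{(2)}_{t+h}-\ln\tilde V^{(1)}_{t+h/2})$. I would briefly justify this from It\^o's formula. First, $\mathrm d V^{(2)}-\frac{\sigma^2}{2}\mathrm dt=\sigma\sqrt{V^{(2)}}\,\mathrm dW$ gives the $a$-term. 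Second, $\mathrm d\ln V^{(2)}=\frac{\sigma}{\sqrt{V^{(2)}}}\mathrm dW$ gives the $b$-term. Positivity of $\hat V^{(2)}$ holds because the dimension $4\cdot\frac{\sigma^2}{2}/\sigma^2=2$ means zero is not attained, so the logarithm is well defined.

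Finally, apply the first subsystem again over $[t+h/2,t+h]$, starting from $\hat V^{(2)}_{t+h}$. This gives $V^{\text{split}}_{t+h}=(\hat V^{(2)}_{t+h}-\tilde\theta)e^{-kh/2}+\tilde\theta$, together with a second drift and an independent Gaussian built from $I_2$ and $J_2$, which are defined like $I_1$ and $J_1$ but with $V_t$ replaced by $\hat V^{(2)}_{t+h}$. The remaining work is bookkeeping of these pieces.

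First, the two $(r-ab)\frac h2$ terms sum to $(r-ab)h$. Second, $I_1+I_2=\tilde\theta h+\frac{V_t+\hat V^{(2)}_{t+h}-2\tilde\theta}{k}(1-e^{-kh/2})$, which is the bracket multiplying $a^2$ in $C_t^h$. Third, $J_1+J_2=\tilde J_t^h$ in both cases $\tilde\theta>0$ and $\tilde\theta=0$. Fourth, the two $2ab\frac h2$ contributions combine to $2abh$, and the two independent Gaussians merge into a single $N(0,(1-\rho^2)C_t^h)$.

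The main obstacle is reconciling the drift. The drift of the first subsystem is $-\frac12(a^2I+b^2J)-ab\,h$. Writing $-\frac12C_t^h$ therefore absorbs the $-ab\,h$ coming from the $2abh$ part of $C_t^h$. Consequently the deterministic part must appear as $rh$ rather than $(r-ab)h$, because otherwise $-ab\,h$ would be counted twice. I would check this carefully and confirm that $X_t+rh-\frac{C_t^h}{2}$ is exactly the total first-subsystem contribution. I would also read the shorthand $-\frac12(I+J)$ in \eqref{eq:42_split_first_sol} as $-\frac12(a^2I+b^2J)$, consistent with \eqref{eq:42_split_first_sde}. With these identifications the stated formula follows.
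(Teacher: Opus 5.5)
Your proposal is correct and follows the paper's route: the paper's proof composes the half-step, full-step, half-step solution maps exactly as in the Heston derivation of Appendix~\ref{ap:proof_heston}, sums the two half-step integrated quantities, and merges the two conditionally independent Gaussians. You are also right to read $-\frac12(I+J)$ in \eqref{eq:42_split_first_sol} as $-\frac12(a^2I+b^2J)$, and right that the $-ab\,h$ drift is absorbed into $-\frac12 C_t^h$, which is why the constant term is $rh$.
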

	  \begin{proof}
	  	Following a similar derivation procedure as that of the Heston model in \ref{ap:proof_heston}.
	  \end{proof}
	    
	  \begin{remark}\label{re:42_posi}
	  	Similar to Remark~\ref{re:heston_posi}, we do not require any 
		truncation on $V_{t+h}^{\text{split}}$ for the 4/2 model. 
		Since the Feller condition holds, we observe $\tilde{\theta}=\theta-\frac{\sigma^2}{2k}\geq0$. As shown in Eq.~\eqref{eq:42_split_first_sol}, $V^{(1)}_{t+h}=e^{-kh/2}V^{(1)}_t+\tilde{\theta}(1-e^{-kh/2})$ is always positive if $V^{(1)}_t> 0$. In Eq.~\eqref{eq:42_split_second_sol}, the CIR process $CIR(\frac{\sigma^2}2,0,\sigma,h,V_t^{(2)})$ satisfies the Feller condition, so the simulated random variable of this CIR process is always positive. 
	  \end{remark}
	  	  \subsection{Second-order operator splitting scheme for the Hull-White model}\label{ss:hull_white_model_split}\noindent 
	  The governing SDEs of the log-asset price $X_t$ and its variance process $V_t$ in the HW model are given by
	  \begin{equation}\label{eq:sde_hull_white}
	  	\begin{cases}
	  		\text{d}X_t=(r-\frac{V_t}2)\,\text{d}t+\sqrt{V_t}\,\text{d}Z_t,\\
	  		\text{d}V_t=\mu V_t\,\text{d}t+\xi V_t\,\text{d}W_t,
	  	\end{cases}
	  \end{equation}
	  where the correlation coefficient of the two Brownian motions $Z_t$ and $W_t$ is $\rho$. By following similar technical calculations, we obtain the analytic representation formulas of $X_{t+h}$ and $V_{t+h}$ as follows:
	  \begin{equation}\label{eq:formula_hull_white}
	  	\begin{cases}
	  		X_{t+h}=X_t+rh-\frac12I_t^h+\frac{2\rho}{\xi}(\sqrt{V_{t+h}}-\sqrt{V_t})+(\frac{\rho\xi}{4}-\frac{\rho\mu}{\xi})J_t^h+N\left(0,(1-\rho^2)I_t^h\right),\\
	  		V_{t+h}=V_te^{(\mu-\frac{\xi^2}{2})h+\xi \Delta W_t^h},
	  	\end{cases}
	  \end{equation}
	  where $I_t^h=\int_t^{t+h}V_s\,\text{d}s$, $J_t^h=\int_t^{t+h}\sqrt{V_s}\,\text{d}s$, $\Delta W_t^h=W_{t+h}-W_t$ and 
	  the normal random variable $N\left(0,(1-\rho^2)I_t^h\right)$ is independent 
	  of the variance process $V$. 
	  The challenge lies on the joint simulation of $I_t^h$ and $J_t^h$, 
	  which involves the integrals of the log normal random variable. 
	  \citet{brigone2024104861} derive the Laplace transform of the 
	  reciprocal of $J_t^h$ conditional on $V_t$ and $V_{t+h}$ 
	  and the Laplace transform of $X_{t+h}$ given $J_t^h$, $V_t$ and $V_{t+h}$. 
	  They simulate the log-asset price by inverting the above two Laplace transforms 
	  via the Fourier-cosine method. Their approach still requires 
	  the tedious inverse Fourier transform and root-finding procedure. In our operator splitting approach, we avoid the integration of the log normal random variable as follows:
	  \begin{subequations}
	  	\begin{equation}\label{eq:split_hull_sde1}
	  		\begin{cases}
	  			\text{d}X^{(1)}_t=(r-\frac{V_t^{(1)}}{2})\,\text{d}t+\sqrt{1-\rho^2}\sqrt{V^{(1)}_t}\,\text{d}W_t^{\perp},\\
	  			\text{d}V^{(1)}_t=(\mu-\frac{\xi^2}4)V^{(1)}_t\,\text{d}t,
	  		\end{cases}
	  	\end{equation}
	  	\begin{equation}\label{eq:split_hull_sde2}
	  		\begin{cases}
	  			\text{d}X^{(2)}_t=\rho\sqrt{V^{(2)}_t}\,\text{d}W_t,\\
	  			\text{d}V^{(2)}_t=\frac{\xi^2}{4}V^{(2)}_t\,\text{d}t+\xi V^{(2)}_t\,\text{d}W_t,
	  		\end{cases}
	  	\end{equation}
	  \end{subequations}
	  so that the solutions of these two parts are given by
	  \begin{subequations}\label{eq:split_hull_sol}
	  	\begin{equation}\label{eq:split_hull_sol1}
	  		\begin{cases}
	  			X^{(1)}_{t+h}=X^{(1)}_t+rh-\frac{I^{(1),h}_t}2+N\left(0,(1-\rho^2)I^{(1),h}_t\right),\\
	  			V^{(1)}_{t+h}=V^{(1)}_te^{ch}
	  		\end{cases}
	  	\end{equation}
	  	\begin{equation}\label{eq:split_hull_sol2}
	  		\begin{cases}
	  			X^{(2)}_{t+h}=X^{(2)}_t+\frac{2\rho}{\xi}(\sqrt{V^{(2)}_{t+h}}-\sqrt{V^{(2)}_t}),\\
	  			V^{(2)}_{t+h}=V^{(2)}_t e^{-\frac{\xi^2h}{4}+\xi\Delta W_t^h},
	  		\end{cases}
	  	\end{equation}
	  \end{subequations}
	  where $c=\mu-\frac{\xi^2}{4}$, $I^{(1),h}_t=\int_t^{t+h}V^{(1)}_s\,\text{d}s=\frac{V^{(1)}_t}{c}(e^{ch}-1)$ for $c\neq0$ and $I^{(1),h}_t=\int_t^{t+h}V^{(1)}_s\,\text{d}s=V^{(1)}_t h$ for $c=0$. By splitting the original SDEs of the HW model in Eq.~\eqref{eq:sde_hull_white} as in the above, we successfully avoid the integration of the random variable and only require to integrate the deterministic variable $V^{(1)}_s$ in Eq.~\eqref{eq:split_hull_sde1}.
	  %
	  By following a similar procedure as illustrated by the Heston model, we obtain the following proposition for the operator splitting scheme of the HW model.
	  \begin{proposition}\label{prop: hwmodel}
	  	The second-order operator splitting scheme of the HW model given $X_t$ and $V_t$ is given by
	  	\begin{equation}\label{eq:hull_sec_split}
	  		\begin{cases}
	  			X^{\text{split}}_{t+h}=X_t+rh-\frac12 \tilde{I}_t^h+\frac{2\rho}{\xi}(\sqrt{\hat{V}^{(2)}_{t+h}}-\sqrt{\tilde{V}^{(1)}_{t+h/2}})+N\left(0,(1-\rho^2)\tilde{I}_t^h\right),\\
	  			V^{\text{split}}_{t+h}=e^{ch/2}\hat{V}^{(2)}_{t+h},
	  		\end{cases}
	  	\end{equation}
	  	where $\tilde{V}^{(1)}_{t+h/2}=e^{ch/2}V_t$, $\hat{V}^{(2)}_{t+h}=\tilde{V}^{(1)}_{t+h/2} e^{-\frac{\xi^2}{4}h+\xi\Delta W^h_t}$ and
	  	\begin{equation*}
	  		\tilde{I}^h_t=\begin{cases}
	  			\frac{V_t+\hat{V}^{(2)}_{t+h}}{c}(e^{ch/2}-1) ,\text{ for } c\neq0,\\
	  			\frac{(V_t+\hat{V}^{(2)}_{t+h})h}2, \text{ for } c=0.
	  			\end{cases}
	  	\end{equation*} 
	  \end{proposition}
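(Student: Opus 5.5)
The plan is to compose the three exact sub-flows in the Strang order of Eq.~\eqref{eq:split}: a half step $h/2$ of system \eqref{eq:split_hull_sde1}, then a full step $h$ of system \eqref{eq:split_hull_sde2}, then another half step $h/2$ of \eqref{eq:split_hull_sde1}. Each sub-flow is solved with the explicit formulas in Eqs.~\eqref{eq:split_hull_sol1}--\eqref{eq:split_hull_sol2}, and the resulting expressions are collected. This mirrors the derivation for the Heston model in Proposition~\ref{prop:heston}.

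\textbf{First half step.} Starting from $(X_t,V_t)$, the deterministic variance flow over $h/2$ gives $\tilde V^{(1)}_{t+h/2}=e^{ch/2}V_t$. The log-price becomes $\tilde X^{(1)}_{t+h/2}=X_t+rh/2-\tfrac12 I_1+N_1$, where $N_1\sim N(0,(1-\rho^2)I_1)$. Here $I_1=\int_0^{h/2}V_te^{cs}\,\text{d}s=\frac{V_t}{c}(e^{ch/2}-1)$, or $V_th/2$ when $c=0$.

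\textbf{Middle full step.} We start \eqref{eq:split_hull_sde2} from $(\tilde X^{(1)}_{t+h/2},\tilde V^{(1)}_{t+h/2})$. By Itô, the drift $\frac{\xi^2}{4}$ exactly cancels the Itô correction $-\frac{\xi^2}{2}$ to leave $-\frac{\xi^2}{4}$. This gives the geometric form $\hat V^{(2)}_{t+h}=\tilde V^{(1)}_{t+h/2}e^{-\xi^2h/4+\xi\Delta W_t^h}$. For the log-price, we verify that $\text{d}\sqrt{V^{(2)}}=\tfrac{\xi}{2}\sqrt{V^{(2)}}\,\text{d}W$. The drift of $\sqrt{V}$ is $\tfrac12 V^{-1/2}\cdot\tfrac{\xi^2}{4}V-\tfrac18V^{-3/2}\xi^2V^2=0$. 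Hence $\rho\sqrt{V^{(2)}}\,\text{d}W=\frac{2\rho}{\xi}\,\text{d}\sqrt{V^{(2)}}$, and the increment is $\frac{2\rho}{\xi}(\sqrt{\hat V^{(2)}_{t+h}}-\sqrt{\tilde V^{(1)}_{t+h/2}})$. This pathwise identity is the key step that removes any integral of the lognormal process.

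\textbf{Last half step.} Starting from $\hat V^{(2)}_{t+h}$, the variance becomes $V^{\text{split}}_{t+h}=e^{ch/2}\hat V^{(2)}_{t+h}$. The log-price gains $rh/2-\tfrac12 I_2+N_2$, with $I_2=\frac{\hat V^{(2)}_{t+h}}{c}(e^{ch/2}-1)$ (or $\hat V^{(2)}_{t+h}h/2$ when $c=0$). Here $N_2\sim N(0,(1-\rho^2)I_2)$ is driven by $W^\perp$ on a fresh interval.

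\textbf{Combining the pieces.} We sum the increments: the drifts give $rh$, and $I_1+I_2=\tilde I^h_t$ for both $c\neq0$ and $c=0$. The only point needing care is the merger of the two Gaussians. $N_1$ and $N_2$ are driven by $W^\perp$, which is independent of $W$, so conditionally on $W$ they are independent centered normals with deterministic-given-$W$ variances. Their sum is therefore $N(0,(1-\rho^2)(I_1+I_2))$ conditionally on $W$. It is also conditionally independent of the $W$-dependent terms $\frac{2\rho}{\xi}(\sqrt{\hat V^{(2)}_{t+h}}-\sqrt{\tilde V^{(1)}_{t+h/2}})$ and $\tilde I^h_t$. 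This yields Eq.~\eqref{eq:hull_sec_split} in distribution.
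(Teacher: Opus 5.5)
Your proposal is correct and takes essentially the paper's route: the paper obtains this proposition by repeating the Heston composition of Appendix~\ref{ap:proof_heston} (half step of sub-operator~1, full step of sub-operator~2, half step of sub-operator~1), then summing $I_1+I_2=\tilde I_t^h$ and merging the two Gaussians. Your It\^o check that $\mathrm{d}\sqrt{V^{(2)}}=\frac{\xi}{2}\sqrt{V^{(2)}}\,\mathrm{d}W$ and your conditioning-on-$W$ argument for combining $N_1$ and $N_2$ spell out steps the paper leaves implicit; the one slip is wording, since ``exactly cancels'' should read ``partially cancels'', as $\frac{\xi^2}{4}-\frac{\xi^2}{2}=-\frac{\xi^2}{4}$.
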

	  \begin{remark}\label{re:hull_white}
	  	The second-order operator splitting scheme for the Hull-White model in \Cref{prop: hwmodel} does not need any truncation on the variance process $V$. This is because the solutions of $V$ in both in Eqs.~(\ref{eq:split_hull_sol}a,b) are both positive given that the initial value of $V_t$ is positive.
	  \end{remark}

	  \begin{remark}
Similar to the Heston model, we may use the trapezoidal rule to 
obtain the approximations: $I_t^h \approx \frac{h}{2}(V_t+V_{t+h})$ 
and $J_t^h \approx \frac{h}{2}(\sqrt{V_t}+\sqrt{V_{t+h}})$ in Eq.~\eqref{eq:formula_hull_white}
for the HW model. The resulting simplified trapezoidal simulation scheme 
of the HW model also performs quite well [see the numerical comparison with 
the operator splitting scheme (\ref{eq:hull_sec_split}) in Section \ref{subsection:heston_4_2_hull_white}].
\end{remark}
	  \section{Second-order operator splitting scheme for the lifted Heston model}\label{s:lifted_heston_model_split}\noindent
	  The lifted Heston model is the Markovian version of the rough Heston model 
	  as proposed by \citet{abi2019lifting}. We adopt the formulation of SDEs 
	  of the lifted Heston model in \citet{bayer2024efficient}. The variance 
	  process $V_t^N$ consists of $N$ components, where $V_t^N=\sum_{i=1}^Nw_iV_{t,i}$, $\theta\geq0$, $\lambda\geq0$.
	   Here, $w_i$ are called the weights. We adopt the assumption that $w_i\geq0,\,i=1,2,...,N.$ The governing SDEs of the log-asset price $X_t$ and the variance component $V_{t,i}$ are given by
	  \begin{equation}\label{eq:lifted_heston_sde}
	  	\begin{cases}
	  		\text{d}X_t=-\frac{V_t^N}{2}\,\text{d}t+\sqrt{V_t^N}\,\text{d}Z_t,\\
	  		\text{d}V_{t,i}=-x_i(V_{t,i}-v_0^i)\,\text{d}t+(\theta-\lambda V_t^N)\,\text{d}t+\nu\sqrt{V_t^N}\,\text{d}W_t,\, i=1,2,...,N,
	  	\end{cases}
	  \end{equation}
	  with initial condition: $V_{0,i}=v_0^i.$ Here, $x_i \ge 0$ are called the nodes. 
	  All variance components $V_{t,i}$ share the same Brownian motion $W_t$. 
	  The correlation coefficient of the two Brownian motions $Z_t$ and $W_t$ 
	  is $\rho$. By performing integration of the SDEs over $[t,\,t+h]$, we obtain
	  \begin{equation}\label{eq:asset_lifted_heston}
	  	\begin{aligned}
	  		X_{t+h}&=X_t-\frac{(x_iv_0^i+\theta)\rho h}{\nu}+\frac{\rho}{\nu}(V_{t+h,i}-V_{t,i})+(\frac{\rho\lambda}{\nu}-\frac12)\int_t^{t+h}V^N_s\,\text{d}s\\&+\frac{\rho x_i}{\nu}\int_t^{t+h}V_{s,i}\,\text{d}s+N\left(0,(1-\rho^2)\int_t^{t+h}V^N_s\,\text{d}s\right).
	  	\end{aligned}
	  \end{equation}
	  Again, the challenging part is to simulate the integrated variance terms $\int_t^{t+h}V_{s,i}\,\text{d}s$ and $\int_t^{t+h}V^N_s\,\text{d}s$ jointly. Accordingly, similar to the Heston model and 4/2 model, we split the above SDEs into the following two parts:
	  \begin{subequations}
	  	\begin{equation}\label{eq:lifted_heston_split_first_sde}
	  		\begin{cases}
	  			\text{d}X^{(1)}_t=-\frac{V_t^{N,(1)}}{2}\,\text{d}t+\sqrt{1-\rho^2}\sqrt{V_t^{N,(1)}}\,\text{d}W_t^{\perp},\\
	  			\text{d}V_{t,i}^{(1)}=-x_i(V_{t,i}^{(1)}-v_0^i)\,\text{d}t+(\theta-\lambda V_t^{N,(1)})\,\text{d}t,\,i=1,2,...,N,
	  		\end{cases}
	  	\end{equation}
	  	\begin{equation}\label{eq:lifted_heston_split_second_sde}
	  		\begin{cases}
	  			\text{d}X^{(2)}_t=\rho\sqrt{V_t^{N,(2)}}\,\text{d}W_t,\\
	  			\text{d}V^{(2)}_{t,i}=\nu\sqrt{V_t^{N,(2)}}\,\text{d}W_t,\,i=1,2,...,N,
	  		\end{cases}
	  	\end{equation}
	  \end{subequations}
	  where $V_t^{N,(1)}=\sum_{i=1}^Nw_iV_{t,i}^{(1)}$ and $V_t^{N,(2)}=\sum_{i=1}^Nw_iV_{t,i}^{(2)}$.
	  
	  We define $\mathbf{W}=[w_1,...w_i,...w_N]$ as the row vector whose components are the weights in $V_t^N$. The solution $V^{(1)}_{t,i}$ of the first set of SDEs in Eq.~\eqref{eq:lifted_heston_split_first_sde} can be found by the solution of following system of ordinary differential equations:
	  \begin{equation}\label{eq:V_i_first_matrix}
\frac{\mathrm{d}}{\mathrm{d}t}\mathbf{V}_t^{(1)}
+A\mathbf{V}_t^{(1)}
=\mathbf{b},
\end{equation}
where $\mathbf{V}_t^{(1)}=[V_{t,1}^{(1)},...V_{t,2}^{(1)},...V_{t,N}^{(1)}]^T$, $A=\lambda \mathbf{1}^T\mathbf{W}+\text{diag}(X)$ and $\mathbf{b}=\mathbf{1}^T\theta+\text{diag}(X)\mathbf{v_0}$. Here, $\mathbf{1}$ is the row vector of length $N$ and all elements are 1, and $\text{diag}(X)$ is the diagonal matrix with diagonal elements $\text{diag}(X)(i,i)=x_i,\,i=1,2,...,N,$ and $\mathbf{v}_0=[v^1_0,...v^i_0,...v^N_0]^T.$ For the solution of the second SDE in Eq.~\eqref{eq:lifted_heston_split_second_sde}, we have 
	  \begin{equation*}
	  	\text{d}V^{N,(2)}_t=\nu(\sum_{i=1}^Nw_i)\sqrt{V^{N,(2)}_t}\,\text{d}W_t,
	  \end{equation*}
	  which reveals that $V^{N,(2)}_t$ is a CIR process. As noted from Eq.~\eqref{eq:lifted_heston_split_second_sde}, all $V^{(2)}_{t,i}$ share the same Brownian motion $W_t$. We write $V^{(2)}_{t+h,i}=V^{(2)}_{t,i}+H,\,i=1,2,...,N,$. We then have $$V^{N,(2)}_{t+h}=\sum_{i=1}^N w_iV^{(2)}_{t+h,i}=\sum_{i=1}^N w_i V^{(2)}_{t,i}+(\sum_{i=1}^N w_i)H=V^{N,(2)}_t+(\sum_{i=1}^N w_i)H,$$ so we obtain
	  \begin{equation}\label{eq:V_t_i}
	  	V^{(2)}_{t+h,i}=V^{(2)}_{t,i}+H=V^{(2)}_{t,i}+\frac{V^{N,(2)}_{t+h}-V^{N,(2)}_{t}}{\sum_{i=1}^Nw_i}.
	  \end{equation}
	  Hence, the solutions of the two parts of the operator splitting procedures of the lifted Heston model are given by
	  \begin{subequations}
	  	\begin{equation}\label{eq:lifted_heston_split_first_sol}
	  		\begin{cases}
	  			X^{(1)}_{t+h}=X^{(1)}_t-\frac12 I_t^{h,1}+N\left(0,(1-\rho^2)I_t^{h,1}\right),\\
	  			\mathbf{V}^{(1)}_{t+h}=e^{-Ah}\mathbf{V}^{(1)}_{t}+A^{-1}(I-e^{-Ah})\mathbf{b},\\
	  			V^{N,(1)}_{t+h}=\mathbf{W}\mathbf{V}^{(1)}_{t+h},\\
	  			I_t^{h,1}=\int_t^{t+h} V^{N,(1)}_s\,\text{d}s=\mathbf{W}A^{-1}\mathbf{b}h+\mathbf{W}A^{-1}\left[(I-e^{-Ah})\mathbf{V}_t^{(1)}-A^{-1}(I-e^{-Ah})\mathbf{b}\right],\\
	  		\end{cases} 
	  	\end{equation}
	  	\begin{equation}\label{eq:lifted_heston_split_second_sol}
	  		\begin{cases}
	  			X^{(2)}_{t+h}=X^{(2)}_t+\frac{\rho}{\nu\sum_{i=1}^N w_i}(V^{N,(2)}_{t+h}-V^{N,(2)}_{t}),\\
	  			V^{N,(2)}_{t+h}=CIR(0,0,\nu\sum_{i=1}^N w_i,h,V^{N,(2)}_{t}),\\
	  			V^{(2)}_{t+h,i}=V^{(2)}_{t,i}+\frac{V^{N,(2)}_{t+h}-V^{N,(2)}_{t}}{\sum_{i=1}^Nw_i}, i = 1,2,...,N.
	  		\end{cases} 
	  	\end{equation}
	  \end{subequations}
	  where $I$ is the $N\times N$ identity matrix.
	  We summarize the second-order operator splitting scheme of the lifted Heston model in Proposition~\ref{prop:lifted_hes}.
	  \begin{proposition}\label{prop:lifted_hes}
	  	The second-order operator splitting scheme of the lifted Heston model, given $X_t$ and $\mathbf{V}_t=[V_{t,1},...V_{t,i},...V_{t,N}]^T$, is given by
	  	\begin{equation}\label{eq:sol_split_lifted_heston}
	  		\begin{cases}
	  			X^{\text{split}}_{t+h}=X_t-\frac{I_t^h}2+N\left(0,(1-\rho^2)I_t^h\right)+\frac{\rho}{\nu\sum_{i=1}^N w_i}(\hat{V}^{N,(2)}_{t+h}-\tilde{V}^{N,(1)}_{t+h/2}),\\
	  			\mathbf{V}^{\text{split}}_{t+h}=e^{-Ah/2}\mathbf{\hat{V}}^{(2)}_{t+h}+A^{-1}(I-e^{-Ah/2})\mathbf{b},\\
	  			I_t^h=\mathbf{W}A^{-1}\mathbf{b}h+\mathbf{W}A^{-1}\left[(I-e^{-Ah/2})(\mathbf{V}_t+\mathbf{\hat{V}}^{(2)}_{t+h})-2A^{-1}(I-e^{-Ah/2})\mathbf{b}\right],\\
	  			V^{N,\text{split}}_{t+h}= \mathbf{W}\mathbf{V}^{\text{split}}_{t+h},
	  		\end{cases}
	  	\end{equation}
	  	where $\mathbf{\tilde{V}}^{(1)}_{t+h/2}=e^{-Ah/2}\mathbf{V}_t+A^{-1}(I-e^{-Ah/2})\mathbf{b}$,\, $\tilde{V}^{N,(1)}_{t+h/2}=\vec{W}\mathbf{\tilde{V}}^{(1)}_{t+h/2},$\\ $\hat{V}^{N,(2)}_{t+h}=CIR(0,0,\nu\sum_{i=1}^N w_i,h,\tilde{V}^{N,(1)}_{t+h/2})$,\\ $\mathbf{\hat{V}}^{(2)}_{t+h}=\mathbf{\tilde{V}}^{(1)}_{t+h/2}+\frac{\hat{V}^{N,(2)}_{t+h}-\tilde{V}^{N,(1)}_{t+h/2}}{\sum_{i=1}^N w_i}$,  and $\mathbf{V}^{\text{split}}_{t+h}=[V_{t+h,1}^{\text{split}},V_{t+h,2}^{\text{split}},...V_{t+h,N}^{\text{split}}]^T$.
	  \end{proposition}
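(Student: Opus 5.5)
The plan is to compose the three flows of the Strang scheme \eqref{eq:split} explicitly, in the order: half step $h/2$ of the first split system \eqref{eq:lifted_heston_split_first_sde}, full step $h$ of the second system \eqref{eq:lifted_heston_split_second_sde}, then another half step $h/2$ of the first system. Each flow is solved by \eqref{eq:lifted_heston_split_first_sol} and \eqref{eq:lifted_heston_split_second_sol}. The output of each stage becomes the initial data of the next, as in the Heston case (\ref{ap:proof_heston}).

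\emph{First stage.} Start from $(X_t,\mathbf{V}_t)$ and apply \eqref{eq:lifted_heston_split_first_sol} with $h$ replaced by $h/2$. This gives
\[
\mathbf{\tilde V}^{(1)}_{t+h/2}=e^{-Ah/2}\mathbf{V}_t+A^{-1}(I-e^{-Ah/2})\mathbf{b},
\]
which is the matrix ODE solution of \eqref{eq:V_i_first_matrix}. The log-price increment is $-\tfrac12 I^{(a)}+N(0,(1-\rho^2)I^{(a)})$, where $I^{(a)}$ is the half-step integral from \eqref{eq:lifted_heston_split_first_sol} with initial vector $\mathbf{V}_t$.

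\emph{Second stage.} Starting from $\mathbf{\tilde V}^{(1)}_{t+h/2}$, set $\tilde V^{N,(1)}_{t+h/2}=\mathbf{W}\mathbf{\tilde V}^{(1)}_{t+h/2}$ and draw $\hat V^{N,(2)}_{t+h}$ from the driftless CIR law with volatility $\nu\sum_i w_i$. Each component is shifted by the common increment, as in \eqref{eq:V_t_i}, which yields $\mathbf{\hat V}^{(2)}_{t+h}$. The log-price gains $\frac{\rho}{\nu\sum_i w_i}(\hat V^{N,(2)}_{t+h}-\tilde V^{N,(1)}_{t+h/2})$.

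\emph{Third stage.} Apply the half-step first flow from $\mathbf{\hat V}^{(2)}_{t+h}$. This gives $\mathbf{V}^{\text{split}}_{t+h}=e^{-Ah/2}\mathbf{\hat V}^{(2)}_{t+h}+A^{-1}(I-e^{-Ah/2})\mathbf{b}$ and $V^{N,\text{split}}_{t+h}=\mathbf{W}\mathbf{V}^{\text{split}}_{t+h}$. It also adds a second term $-\tfrac12 I^{(b)}+N(0,(1-\rho^2)I^{(b)})$, where $I^{(b)}$ is computed with initial vector $\mathbf{\hat V}^{(2)}_{t+h}$.

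\emph{Combining the Gaussians.} The two normal variables are driven by increments of $W^\perp$ over disjoint subintervals. They are therefore independent of each other and of the $W$-driven CIR step, given the variances (which are deterministic or depend only on $W$). Hence their sum is $N(0,(1-\rho^2)(I^{(a)}+I^{(b)}))$.

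\emph{The integrated-variance formula.} It remains to check that $I^{(a)}+I^{(b)}$ equals the stated $I_t^h$. I would first verify the single-flow integral from \eqref{eq:V_i_first_matrix}: integrating $\mathbf{V}^{(1)}_s=A^{-1}\mathbf{b}+e^{-A(s-t)}(\mathbf{V}_t-A^{-1}\mathbf{b})$ over an interval of length $\tau$ gives
\[
\mathbf{W}A^{-1}\mathbf{b}\tau+\mathbf{W}A^{-1}(I-e^{-A\tau})(\mathbf{V}_t-A^{-1}\mathbf{b}).
\]
Here $A^{-1}$ commutes with $e^{-A\tau}$, and $A$ is assumed invertible. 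With $\tau=h/2$ for both half steps, and initial vectors $\mathbf{V}_t$ and $\mathbf{\hat V}^{(2)}_{t+h}$, the sum is
\[
\mathbf{W}A^{-1}\mathbf{b}h+\mathbf{W}A^{-1}\bigl[(I-e^{-Ah/2})(\mathbf{V}_t+\mathbf{\hat V}^{(2)}_{t+h})-2A^{-1}(I-e^{-Ah/2})\mathbf{b}\bigr],
\]
which is exactly the $I_t^h$ in \eqref{eq:sol_split_lifted_heston}.

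Summing the three log-price increments then gives $X^{\text{split}}_{t+h}$.

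The main obstacle is the bookkeeping in this last step. One must use the commutation of $A^{-1}$ with $e^{-Ah/2}$, and keep the two half-step contributions consistent with the constant term $\mathbf{W}A^{-1}\mathbf{b}h$. The remaining steps are direct substitutions into the earlier solved flows.
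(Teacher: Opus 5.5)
Your proposal is correct and follows the paper's route. The paper's proof takes the solved sub-flows \eqref{eq:lifted_heston_split_first_sol}--\eqref{eq:lifted_heston_split_second_sol} and composes them as half step, full step, half step, exactly as in the Heston derivation of \ref{ap:proof_heston}. Like you, it merges the two conditionally independent Gaussians and adds the two half-step integrated variances; your explicit check of the $I_t^h$ formula, using that $A^{-1}$ commutes with $e^{-Ah/2}$, supplies the bookkeeping the paper leaves implicit.
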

	   \begin{proof}
	  	Once Eqs.~(\ref{eq:lifted_heston_split_first_sol},\,\ref{eq:lifted_heston_split_second_sol}) are established, we follow the same procedure as illustrated by the Heston model in \ref{ap:proof_heston}.
	  \end{proof}
	  \begin{remark}\label{re:lifted_heston}
	  	Since the second part in Eq.~\eqref{eq:lifted_heston_split_second_sol} is dictated to simulate $V^{N,(2)}$ from a CIR process, 
		this would guarantee non-negativity of $V^{N,(2)}$ in the above second-order operator splitting 
		scheme.
	  	We can establish non-negativity of $V_{t+h}^{N,(1)}$ in 
		Eq.~\eqref{eq:lifted_heston_split_first_sol} under the assumptions 
		of $w_i\geq0$, $x_i\geq0$ and $v_0^i\geq0$ for $i=1,2,...N$. 
		To prove the claim, it suffices to show $$e^{-Ah}\mathbf{V}^{(1)}_{t}+A^{-1}(I-e^{-Ah})\mathbf{b}\geq0.$$ 
		Observing $A^{-1}e^{Ah}=e^{Ah}A^{-1}$, it is equivalent to show
	  	$$\mathbf{V}^{(1)}_t\geq(I-e^{Ah})A^{-1}\mathbf{b}.$$ 
		Note that $e^{Ah}=\sum_{i=0}^\infty \frac{(Ah)^i}{i!}$, so 
		$$(I-e^{Ah})A^{-1}\mathbf{b}=-\sum_{i=1}^\infty \frac{h^i}{i!} A^{i-1} \mathbf{b}.$$ 
		As we assume $w_i\geq0$, $x_i\geq0$ and $v_0^i\geq0$, so all elements 
		in $A$ are non-negative and $\mathbf{b}\geq0$, therefore $(I-e^{Ah})A^{-1}\mathbf{b}\leq0$. 
		Note that the initial input value of $\mathbf{V}^{(1)}_t$ is non-negative, 
		hence $\mathbf{V}^{(1)}_t\geq(I-e^{Ah})A^{-1}\mathbf{b}$ holds. This would mean $\mathbf{V}_{t+h}^{(1)}$ computed by the first part in Eq.~\eqref{eq:lifted_heston_split_first_sol} observes non-negativity under the specified assumptions.
%
	  \end{remark}
	  \begin{remark}\label{re:lifted_hes_bayer}
	   Though \citet{bayer2024efficient} follow a similar approach of splitting of operators 
	   in constructing the simulation schemes for $V^N$ and $V^{(i)}$ in their second 
	   order operator splitting method, they choose to use the randomized 
	   leapfrog splitting method to simulate the asset price $S_t=e^{X_t}$. This requires 
	   one additional uniformly distributed random variable. Instead of employing exact simulation 
	   of the CIR process, they adopt the algorithm of \citet{lileika2021second} for 
	   simulating the CIR process, which approximates the CIR process as a three-valued 
	   discrete random variable. Our second-order operator splitting scheme can be implemented at better ease. 
	  \end{remark}
	  \section{Second-order operator splitting scheme for the Barndorff–Nielsen and Shephard model}\label{s:hull_white_bns_split}\noindent
	
	  \noindent Simulation schemes for option pricing under the BN-S model are relatively less explored in literature. 
	  In this section, we derive the second-order operator splitting schemes for pricing
	  VIX derivatives and European path dependent options under the BN-S model with various specifications of 
	  the Background Driving Lévy Process (BDLP). 
	  \subsection{Model dynamics}
	  \noindent Let $S_t$ denote the asset price at time $t$ and $X_t = \ln S_t$ be the log-asset price. Under a risk-neutral measure $\mathbb{Q}$, the dynamics of the BN-S model are governed by the following system of stochastic differential equations:
\begin{subequations}
\label{eq:bns_system} 
\begin{align}
    \mathrm{d}X_t &= \left[ r - \lambda \kappa(\rho) - \frac{1}{2}V_t \right] \mathrm{d}t + \sqrt{V_t} \, \mathrm{d}W_t + \rho \, \mathrm{d}Z_{\lambda t}, \label{eq:price_dynamics} \\
    \mathrm{d}V_t &= -\lambda V_t \, \mathrm{d}t + \mathrm{d}Z_{\lambda t}, \label{eq:vol_dynamics}
\end{align}
\end{subequations}

\noindent where $r$ represents the risk-free interest rate, $\lambda > 0$ is the 
mean reversion parameter governing the persistence of volatility shocks, 
$W_t$ is a standard Brownian motion, $Z = \{Z_{\lambda t}\}_{t \ge 0}$ 
is the time-scaled BDLP, which is a subordinator 
(Lévy process with non-negative increments and no drift). Usually, when we refer to $Z_t$, we imply that the
 increments of $Z_t$ follow a specific infinitely divisible distribution, such as the Gamma or Tempered Stable (TS) distribution.
When it takes the form $Z_{\lambda t}$, it refers to the time-scaled version of the BDLP, where the time parameter is scaled by the mean reversion rate $\lambda$. 
The term represents the random jumps injected into the variance process, 
$\rho \le 0$ is the leverage parameter. It captures the "leverage effect" (negative correlation between returns and volatility) by allowing the volatility jumps $\text{d}Z_{\lambda t}$ to impact the price process $X_t$ simultaneously. Lastly, $\kappa(\rho)$ is the cumulant transform of the BDLP. 
Specifically, we have  
\begin{equation}
\kappa(\rho) = \int_{\mathbb{R}^+} (e^{\rho x} - 1) \, \nu(\text{d}x),
\end{equation} 
which serves as a martingale correction term to ensure that the discounted asset price is a local martingale.

The variance process $V_t$ in Eq.~\eqref{eq:vol_dynamics} is a stationary process with 
its marginal distribution determined by the choice of the BDLP. The solution to the stochastic 
differential equations can be written in an integral form as follows:
\begin{subequations}
\label{eq:exact_simulation_step} 
\begin{align}
    X_{t+h} &= X_t + [r  - \lambda \kappa(\rho)]h - \frac{I_t^h}{2} + \rho \Delta Z_{\lambda t} + \sqrt{I_t^h}\epsilon, \label{eq:log_price_dynamics} \\
    V_{t+h} &= e^{-\lambda h} V_t + \int_t^{t+h} e^{-\lambda(t+h-s)} \, \mathrm{d}Z_{\lambda s}. \label{eq:vol_integral}
\end{align}
\end{subequations}
Here, $I_t^h = \int_t^{t+h} V_s \, \mathrm{d}s$ denotes the integrated variance over the interval $[t, t+h]$, $\Delta Z_{\lambda t} = Z_{\lambda(t+h)} - Z_{\lambda t}$ is the increment of the time-scaled BDLP over the same interval, and $\epsilon \sim N(0,1)$ is a standard normal random variable independent of $Z$.
This explicit structure is the key to our second-order operator splitting scheme, which allows for the exact simulation of the stochastic volatility process when the integral term can be sampled efficiently.


The flexibility of the BN-S model allows us to specify the model dynamics 
in two distinct classes: D-OU class and OU-D class, where D can be Gamma or Tempered Stable (TS).  
In the D-OU framework, the marginal density $u(x)$ of the variance process $V_t$
is prespecified. Crucially, 
the existence of such model is guaranteed by the self-decomposability property, 
which explicitly links the marginal density $u(x)$ to the Lévy density $w(x)$ of 
the driving BDLP via the differential relation:
\begin{equation}\label{eq:levy_tran}
	w(x) = -u(x) - x u'(x).
\end{equation}
Conversely, the OU-D framework directly specifies the BDLP increments, 
allowing the variance distribution evolving consequently.

In our study, we focus exclusively on the following four specific BN-S model 
specifications constructed by setting the distribution $D$ to be either Gamma or 
TS (namely, Gamma-OU, TS-OU, OU-Gamma, and OU-TS). 
We explicitly exclude the Normal Tempered Stable (NTS) process from our analysis. 
This is because the NTS process allows for negative jumps, which violates the 
strict non-negativity requirement of the variance process.

\subsection{Exact simulation schemes}\noindent
To contextualize the efficiency of our proposed second-order operator splitting scheme,  
we first present the exact simulation schemes required for various BN-S specifications. 
The exact simulation of the variance process $V_{t+h}$ relies on the explicit solution shown in Eq.~\eqref{eq:vol_integral}:
\begin{equation}
    V_{t+h} = e^{-\lambda h} V_t + U_h, \quad \text{where} \quad U_h = \int_t^{t+h} e^{-\lambda(t+h-s)} \, \mathrm{d}Z_{\lambda s}.
\end{equation}
\noindent
The simulation mechanisms for the stochastic innovation $U_h$ 
(also referred as the $a$-remainder in the context of self-decomposability) 
generally fall into two distinct theoretical frameworks proposed in the literature. They 
include the MGF factorization approach pioneered by \cite{Dassio2019Gamma} and \cite{Dassio2021TS}, 
and the Lévy measure decomposition approach developed by \cite{sabino2020gamma} and \cite{sabino2020tempered}.

\cite{Dassio2019Gamma} approach the simulation problem by analytically deriving 
the MGF or the Laplace transform of the integrated process 
$U_h$. The core idea is to manipulate the resulting algebraic expression of 
the MGF, factoring it into a product of recognizable transforms, like
\begin{equation*}
    \mathbb{E}[e^{-u U_h}] = \phi_{X}(u) \times \phi_{Y}(u),
\end{equation*}
where $\phi_{X}$ and $\phi_{Y}$ correspond to the transforms 
of some standard distributions.
 This method essentially recognizes the innovation as the 
 sum of independent random variables. 
In contrast, Sabino and his co-authors propose a direct probabilistic decomposition based on 
the properties of self-decomposable distributions and their associated 
$a$-remainders. Instead of working in the Laplace domain, they analyze the 
Lévy measure $\nu_a(\text{d}x)$ of the innovation $U_h$ directly. Their strategy involves 
splitting $\nu_a(\text{d}x)$ into a singular component (capturing small jumps) and a 
regular component with finite mass (capturing large jumps). 

Taking the TS-OU process as a prime example, \cite{sabino2020gamma} illustrate that the 
Lévy measure of the $a$-remainder can be exactly partitioned 
as $\nu_a = \nu_{1} + \nu_{2}$. Here, $\nu_{1}$ corresponds to a rescaled
 TS distribution. Despite originating 
 from an infinite activity process, $\nu_2$ is proven to be a finite 
 integral. Consequently, the second component is identified as a 
 Compound Poisson process, allowing $U_h$ to be simulated exactly via a 
 sum of a TS variable and a Compound Poisson variable.

 We summarize the exact simulation schemes for the four types of BN-S model specifications in Table \ref{tab:exact_methods_formulas}. 
 It is important to note that though the scheme in \citeauthor{sabino2020gamma} (\citeyear{sabino2020gamma}, \citeyear{sabino2020tempered}) avoids the acceptance-rejection sampling inherent 
	in the approach in \citet{Dassio2019Gamma, Dassio2021TS}, our numerical experiments reveal that it is highly sensitive to 
	discretization bias, necessitating the choice of small time step $h$. As a result, its efficiency gains are undermined. The hierarchical dependency 
	(say, jump sizes $S_k$ conditioning on the auxiliary variable $V$) leads to high computational cost of the exact simulation schemes when compared to our proposed second-order operator splitting scheme.

\begin{table}[htbp]
    \centering
    \renewcommand{\arraystretch}{1.9} 
    
    \resizebox{\textwidth}{!}{%
    \begin{tabular}{@{}c l l l l@{}}
        \toprule
        \textbf{Class} & \textbf{Model} & \textbf{Method} & \textbf{Decomposition ($U_h$)} & \textbf{Component Specifications} \\ \midrule
        
        \multirow{3}{*}{\textbf{OU-D}} 
        & \multirow{2}{*}{\textbf{OU-Gamma}} 
        & \cite{James2013}
        & $\displaystyle e^{-\lambda h}G_1 + G_2$ 
        & \makecell[l]{$G_1, G_2 \sim \text{Gamma}(\cdot, \beta)$ \\ $I_t^h \sim \text{Dirichlet Mean}$ (Double CFTP)} \\ 
        
        & 
        & \cite{Dassio2019Gamma}
        & $\displaystyle \tilde{\Gamma} + \sum_{k=1}^{\tilde{N}} S_k$ 
        & \makecell[l]{$\tilde{\Gamma} \sim \Gamma(\alpha h, \beta e^{\lambda h}), \ \tilde{N} \sim \text{Pois}(\frac{1}{2}\alpha \lambda h^2)$ \\ $S_k \sim \text{Exp}(\beta e^{\lambda h \sqrt{U}})$} \\ \cmidrule(l){2-5} 
        
        & \textbf{OU-TS} 
        & \cite{Dassio2021TS} and \cite{sabino2020tempered}
        & $\displaystyle \tilde{Y} + \sum_{k=1}^{N} S_k$ 
        & \makecell[l]{$\tilde{Y} \sim \mathcal{TS}\left(\alpha, \frac{\beta}{a}, \frac{\theta}{\alpha\lambda}(1-a^\alpha)\right), \ a = e^{-\lambda h}$ \\ 
                       $N \sim \text{Pois}\left(\frac{\theta}{\alpha\lambda}\beta^\alpha\Gamma(1-\alpha) D_a\right), \ D_a = \frac{1 - a^\alpha + a^\alpha \log a^\alpha}{\alpha a^\alpha}$ \\ 
                       $S_k|V \sim \Gamma(1-\alpha, \beta V)$ \\ 
                       $V \sim f_V$ via Rejection Sampling \\ 
                       $f_V(v) = \frac{1}{D_a} \frac{v^\alpha - 1}{v} \qquad 1 \leq v \leq 1/a$} \\ \midrule \midrule
        
        \multirow{5}{*}{\textbf{D-OU}} 
        & \multirow{3}{*}{\textbf{Gamma-OU}} 
        & \cite{Dassio2019Gamma}
        & $\displaystyle \sum_{k=1}^{N} S_k$ 
        & \makecell[l]{$N \sim \text{Pois}(\lambda \alpha h)$ \\ $S_k \sim \text{Exp}(\beta e^{\lambda h U}), \ U \sim \mathcal{U}[0,1]$} \\ 
        
        & 
        & \cite{sabino2020gamma}
        & $\displaystyle G_h$ 
        & \makecell[l]{$N \sim \text{Polya}(\alpha, 1-e^{-\lambda h})$ \\ $G_h | N \sim \text{Erlang}(N, \beta e^{\lambda h})$} \\ 
        
        & 
        & {Common Method in} \cite{sabino2020gamma} 
        & $\displaystyle \sum_{k=1}^{N} e^{-(h-\tau_k)} Y_k$ 
        & \makecell[l]{$N \sim \text{Pois}(\lambda \alpha h)$ \\ $\tau_k \sim \mathcal{U}[0,h], \ Y_k \sim \text{Exp}(\beta)$} \\ \cmidrule(l){2-5} 
        
        & \multirow{2}{*}{\textbf{TS-OU}} 
        & \cite{Dassio2021TS} 
        & $\displaystyle \tilde{Y} + \sum_{k=1}^{N_1} S_k + \sum_{j=1}^{N_2} \tilde{S}_j$ 
        & \makecell[l]{$\tilde{Y} \sim \mathcal{TS}\left(\alpha, \frac{\beta}{a}, \theta(1-a^\alpha)\right)$ \\ 
                       $N_1 \sim \text{Pois}\left(\theta\beta^\alpha\Gamma(1-\alpha)D_a\right)$ \\ 
                       $N_2 \sim \text{Pois}\left(\theta\beta^\alpha\Gamma(1-\alpha)\ln(1/a)\right)$ \\ 
                       $S_k|V \sim \Gamma(1-\alpha, \beta V), \ V \sim f_V \text{ (see OU-TS)}$ \\ 
                       $\tilde{S}_j|V^* \sim \Gamma(1-\alpha, \beta V^*), \ V^* = e^{\lambda h U}, \ U \sim \mathcal{U}[0,1]$} \\ 
        
        & 
        & \cite{sabino2020tempered} 
        & $\displaystyle \tilde{Y} + \sum_{k=1}^{N} S_k$ 
        & \makecell[l]{$\tilde{Y} \sim \mathcal{TS}(\alpha, \beta, c(1-a^\alpha)), \ a = e^{-\lambda h}$ \\ 
                       $N \sim \text{Pois}(\Lambda_a), \Lambda_a = \theta \Gamma(1-\alpha)\frac{\beta^\alpha}{\alpha}(1-a^\alpha)$ \\ 
                       $S_k|V_k \sim \Gamma(1-\alpha, \beta V_k)$ \\ 
                       $V_k = (1+\frac{e^{\alpha \lambda h}-1}{\alpha}U_k)^{1/\alpha}, \ U_k \sim \mathcal{U}[0,1]$} \\
        \bottomrule
    \end{tabular}%
    }
    \caption{Summary of various exact simulation schemes for the four BN-S model specifications.}
    \label{tab:exact_methods_formulas}
\end{table}

\subsection{Second-order operator splitting scheme}


\noindent Unlike pricing VIX options, which depend solely on the future state of the instantaneous volatility,
 pricing of path dependent equity options requires simulation of the underlying asset 
price $S_t$, say discrete monitoring 
value $S_{t_m}$ at time point $t_m$. The asset price trajectory is generated by 
simulating the log-asset price dynamics $X_t = \ln S_t$ recursively. To update the asset price $X_{t+h}$ over a time step, one requires 
the joint realization of the spot volatility $V_{t+h}$ (propagating the variance process) and the 
integrated volatility $I_t^h$ (computing the current log-asset price increment). This highlights the fundamental 
challenge in the joint simulation of $X_t$ and $V_t$ .

The methods proposed by \cite{Dassio2019Gamma} and \cite{sabino2020gamma} 
focus on generating the exact transition of the spot volatility 
from $V_t$ to $V_{t+h}$. While they manage to sample the terminal value 
$V_{t+h}$ from its exact distribution, they {fail to provide the exact 
conditional distribution of the integrated variance $I_t^h$} given $V_{t+h}$. 
As a result, the exact simulation method of \cite{Dassio2019Gamma} and \cite{sabino2020gamma} cannot be 
used for pricing standard European options, let alone options with path dependent features.

\cite{James2013} propose a specialized exact simulation algorithm for the OU-Gamma BN-S model 
that addresses the joint simulation of $\Delta Z_{\lambda t}$ together with $\int_t^{t+h} e^{-\lambda(t+h-s)} dZ_{\lambda s}$ 
by transforming the stochastic path integration problem into a distributional sampling problem. 
Specifically, they explicitly decompose the joint distribution into independent 
components comprising the Gamma random variable and Dirichlet mean random variable. 
However, simulating the Dirichlet mean component is non-trivial, which necessitates the implementation of 
the so called Double Coupling From The Past algorithm. This sophisticated perfect simulation technique 
involves the complex retrospective coupling and rejection steps, leading to high 
computational costs and significant implementation complexity. 
To circumvent the computational bottleneck of such exact joint simulation, 
the core idea of our second-order operator splitting scheme is to decompose the infinitesimal generator of the 
BN-S dynamics into two analytically solvable 
sub-operators: {continuous part} and {jump part}.

\par\medskip

\noindent{\textit{Sub-operator 1: Continuous dynamics}}
\mbox{}\\
In this operator splitting step, we isolate the continuous components.
Specifically, the jump term $dZ_{\lambda t}$ is removed from both of the dynamic equations of $X_t$ and $V_t$, rendering 
the variance process deterministic:

\begin{equation}
\label{eq:split_step1} 
\begin{aligned}
    \mathrm{d}X_t^{(1)} &= \left[ r - \lambda \kappa(\rho) - \frac{1}{2}V_t^{(1)} \right] \mathrm{d}t + \sqrt{V_t^{(1)}} \, \mathrm{d}W_t, \\
    \mathrm{d}V_t^{(1)} &= -\lambda V_t^{(1)} \, \mathrm{d}t.
\end{aligned}
\end{equation}
Since $V_t^{(1)}$ follows the deterministic decay at the rate $\lambda$, its solution is 
$V_{t+h}^{(1)} = e^{-\lambda h}V_t^{(1)}$. The integrated variance ${I}_t^h$
over one time step can be found {explicitly} as follows: 
\begin{equation*}
    {I}_t^h = \int_t^{t+h} V_s^{(1)} \text{d}s = V_t^{(1)} \left( \frac{1 - e^{-\lambda h}}{\lambda} \right).
\end{equation*}
Subsequently, the solution to Eq.~\eqref{eq:split_step1} is explicitly given by
\begin{equation*}
    \begin{cases}
        X_{t+h}^{(1)} = X_t^{(1)} + [r - \lambda \kappa(\rho)]h - \frac{1}{2}{I}_t^h + \sqrt{{I}_t^h} \epsilon, \\
        V_{t+h}^{(1)} = e^{-\lambda h} V_t^{(1)},
    \end{cases}
\end{equation*}
where $\epsilon \sim N(0,1)$. The first step of operator splitting effectively solves for the 
integrated variance explicitly, 
avoiding the discretization bias associated with stochastic integrals.
\par\medskip
\noindent{\textit{{Sub-operator 2: Pure jump dynamics}}}
\mbox{}\\
The second step of operator splitting isolates the contribution of the Lévy jumps to both 
the log-asset price and volatility:
\begin{equation}
\label{eq:split_step2} 
\begin{aligned}
    \mathrm{d}X_t^{(2)} &= \rho \, \mathrm{d}Z_{\lambda t}, \\
    \mathrm{d}V_t^{(2)} &= \mathrm{d}Z_{\lambda t}.
\end{aligned}
\end{equation}
The solution is simply the addition of the respective Lévy increment:
\begin{equation*}
    \begin{cases}
        X_{t+h}^{(2)} = X_t^{(2)} + \rho \, \Delta Z_{\lambda t}^h, \\
        V_{t+h}^{(2)} = V_t^{(2)} + \Delta Z_{\lambda t}^h,
    \end{cases}
\end{equation*}
where $\Delta Z_{\lambda t}^h = Z_{\lambda(t+h)} - Z_{\lambda t}$ is simulated 
directly according to the specific BDLP 
distribution.

\par\medskip
\noindent{\textit{{Composition of operator splitting}}}
\mbox{}\\
Next, we combine the solutions of the two sub-operators. 
Let $\Phi_h^{(1)}$ and $\Phi_h^{(2)}$ denote the solution operators 
for Sub-operator 1 and Sub-operator 2 over a time step $h$, respectively. 
Following the Strang second-order operator splitting procedure as exemplified in Eq.~\eqref{eq:split}, 
the full step from $t$ to $t+h$ is approximated by
\begin{equation}
    (X_{t+h}, V_{t+h}) \approx \Phi_{h/2}^{(1)} \circ \Phi_h^{(2)} \circ \Phi_{h/2}^{(1)} (X_t, V_t).
\end{equation}

To illustrate this composition explicitly, we unravel the splitting procedure by formalizing the updates of the state variables step-by-step. 
By following the two-step splitting process for the variance process $V_t$, the continuous exponential decay factor is applied to the jump term to give
$$ V_{t+h} \approx \Big( {V_t e^{-\lambda h/2}} + {\Delta Z_{\lambda t}^h} \Big) {e^{-\lambda h/2}} = V_t e^{-\lambda h} + e^{-\lambda h/2}\Delta Z_{\lambda t}^h. $$

\noindent Consequently, the total integrated variance $I_t^h = I_1 + I_2$ over the full time step is derived by substituting the intermediate variance from Step 2 into the second integral $I_2$, where $I_1 = V_t \left( \frac{1 - e^{-\lambda h/2}}{\lambda} \right)$ and $I_2 = \left( V_t e^{-\lambda h/2} + \Delta Z_{\lambda t}^h \right) \left( \frac{1 - e^{-\lambda h/2}}{\lambda} \right)$ represent the integrated variance obtained in the first and second half-steps, respectively. This yields
$$ I_t^h = {\frac{1-e^{-\lambda h/2}}{\lambda}V_t} + {\frac{1-e^{-\lambda h/2}}{\lambda}\Big( V_t e^{-\lambda h/2} + \Delta Z_{\lambda t}^h \Big)} = \frac{1-e^{-\lambda h}}{\lambda}V_t + \frac{1-e^{-\lambda h/2}}{\lambda}\Delta Z_{\lambda t}^h. $$

\noindent Let $\mu = r - \lambda\kappa(\rho)$ denote the constant drift part. For the log-asset price process $X_t$, the total update is simply the sum of increments from the continuous and jump sub-operators:
$$ X_{t+h} \approx X_t + {\left[ \mu \frac{h}{2} - \frac{1}{2}I_1 + N(0, I_1) \right]} + {\rho \Delta Z_{\lambda t}^h} + {\left[ \mu \frac{h}{2} - \frac{1}{2}I_2 + N(0, I_2) \right]}. $$

\noindent This specific sequence of composition reveals that the jump innovation 
$\Delta Z_{\lambda t}^h$ is effectively incorporated at 
the midpoint of the exponential decay path of volatility. 
The resulting second-order operator splitting scheme is summarized as follows:
\begin{equation}
\begin{cases}
X_{t+h}^{\text{split}} = X_t + \mu h - \frac{1}{2} I_t^h + N(0, I_t^h) + \rho \Delta {Z}_{\lambda t}^h, \\
V_{t+h}^{\text{split}} = e^{-\lambda h} V_t + e^{-\lambda h / 2} \Delta {Z}_{\lambda t}^h,\\
I_t^h = \frac{1 - e^{-\lambda h}}{\lambda} V_t + \frac{1 - e^{-\lambda h / 2}}{\lambda} \Delta {Z}_{\lambda t}^h.
\end{cases}
\label{eq:bns_split_final}
\end{equation}


Comparing to the exact integral representation 
$V_{t+h} = e^{-\lambda h} V_t + \int_t^{t+h} e^{-\lambda(t+h-s)} \, \mathrm{d} Z_{\lambda s}$, we observe that the second-order operator splitting 
scheme implicitly approximates the stochastic integral 
using the {midpoint rule} in numerical quadrature. 
Specifically, evaluating the deterministic integrand $e^{-\lambda(t+h-s)}$ at the midpoint of the time interval $s = t + h/2$ 
yields the constant $e^{-\lambda h/2}$. Multiplying this midpoint evaluation by the total jump increment 
$\int_t^{t+h} \mathrm{d} Z_{\lambda s} = \Delta Z_{\lambda t}^h$ exactly recovers the term $e^{-\lambda h/2}\Delta Z_{\lambda t}^h$ 
as appearing in our second-order operator splitting scheme.

Since the pricing of VIX options relies critically on the accurate evaluation of the expected future variance $\mathbb{E}[V_T]$, it would be desirable to quantify the numerical bias introduced by the above midpoint approximation of the volatility process $V_t$. We formalize the analysis of bias in \Cref{prop:second_order_convergence}.



\begin{proposition}\label{prop:second_order_convergence}
Let $V_t$ be an Ornstein-Uhlenbeck (OU) type process driven by the Lévy process $Z_t$, as governed by the stochastic differential equation:
\begin{equation*}
    \mathrm{d}V_t = -\lambda V_t \, \mathrm{d}t + \mathrm{d}Z_t,
\end{equation*}
where $\lambda > 0$ is the mean reversion speed. Let $\bar{V}_{T}$ denote the approximation of $V_T$ at the terminal time $T = N h$ obtained using the second-order operator splitting scheme. The bias of the expected value satisfies
\begin{equation}
    \left| \mathbb{E}[V_{T}] - \mathbb{E}[\bar{V}_{T}] \right| = \mathcal{O}(h^2).
\end{equation}
\end{proposition}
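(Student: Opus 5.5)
Since both the exact dynamics and the splitting scheme are affine in $V$, the plan is to compute the two first moments in closed form and compare them. Write $m=\mathbb{E}[Z_1]$, assumed finite (true for the Gamma and TS BDLPs). Then $\mathbb{E}[Z_{s+h}-Z_s]=mh$, and the compensated integral below has zero mean. Taking expectations in the exact solution
\begin{equation*}
V_{t+h}=e^{-\lambda h}V_t+\int_t^{t+h}e^{-\lambda(t+h-s)}\,\mathrm{d}Z_s
\end{equation*}
gives the exact one-step recursion
\begin{equation*}
e_{n+1}=a^2e_n+m\,\frac{1-a^2}{\lambda},\qquad a=e^{-\lambda h/2},
\end{equation*}
where $e_n=\mathbb{E}[V_{nh}]$. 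In the notation of the proposition the driver is $Z_t$ rather than $Z_{\lambda t}$; the argument is identical, with $m$ replaced by $\lambda m$ in the time-scaled case.

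For the scheme, the midpoint form derived above gives $\bar V_{t+h}=e^{-\lambda h}\bar V_t+e^{-\lambda h/2}\Delta Z^h_t$. Taking expectations, $\bar e_n=\mathbb{E}[\bar V_{nh}]$ satisfies
\begin{equation*}
\bar e_{n+1}=a^2\bar e_n+m\,h\,a.
\end{equation*}
Both recursions have the same contraction factor $a^2$ and start from the same value $e_0=\bar e_0=V_0$. Only the constant forcing terms differ. Define the local defect
\begin{equation*}
\delta(h)=m\Bigl(\frac{1-e^{-\lambda h}}{\lambda}-he^{-\lambda h/2}\Bigr).
\end{equation*}
Expanding in $h$, the first bracket is $h-\lambda h^2/2+\lambda^2h^3/6+O(h^4)$ and the second is $h-\lambda h^2/2+\lambda^2h^3/8+O(h^4)$. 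Hence $\delta(h)=m\lambda^2h^3/24+O(h^4)$: the orders $h$ and $h^2$ cancel, which is exactly the midpoint-rule cancellation. Equivalently, $\delta$ is the midpoint quadrature error for $\int_0^h e^{-\lambda(h-s)}\,\mathrm{d}s$, so it is bounded by $|m|\lambda^2h^3/24$ for every $h>0$ by the standard midpoint remainder estimate, not just asymptotically.

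Now subtract the two recursions. The error $\varepsilon_n=e_n-\bar e_n$ satisfies $\varepsilon_{n+1}=a^2\varepsilon_n+\delta(h)$ with $\varepsilon_0=0$. Solving,
\begin{equation*}
\varepsilon_N=\delta(h)\sum_{j=0}^{N-1}a^{2j}.
\end{equation*}
Since $0<a<1$, the sum is at most $N=T/h$, and therefore $|\varepsilon_N|\le (T/h)\cdot|m|\lambda^2h^3/24=|m|\lambda^2Th^2/24=\mathcal{O}(h^2)$. Alternatively one can sum the geometric series in closed form as $(1-e^{-\lambda T})/(1-e^{-\lambda h})$; this gives an $\mathcal{O}(h^2)$ bound uniform in $T$.

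The only delicate step is the local defect. It must be genuinely $\mathcal{O}(h^3)$ with a constant uniform in $h$, and this is exactly what the midpoint-rule error bound provides. Everything else is a telescoping argument for a linear recursion. No stability issue arises, because the propagation factor $a^2$ is the same in both recursions and is strictly contracting.
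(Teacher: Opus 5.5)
Your proof is correct and follows the paper's argument essentially step by step: you compare the one-step recursions for the means, extract the local defect $\mu_Z\bigl(\frac{1-e^{-\lambda h}}{\lambda}-he^{-\lambda h/2}\bigr)=\frac{\mu_Z\lambda^2}{24}h^3+\mathcal{O}(h^4)$, and accumulate it through the geometric factor $(1-e^{-\lambda T})/(1-e^{-\lambda h})$. The midpoint-rule remainder bound $|\delta(h)|\le |m|\lambda^2h^3/24$, together with the bound $\sum_j a^{2j}\le N$, makes rigorous the final step that the paper states only asymptotically.
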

We refer readers to \ref{appendix: proof_second_order_vix} for the proof of \Cref{prop:second_order_convergence}. 

\subsection{Fourier-cosine method}

\noindent Pricing VIX options serves as an excellent test case to assess accuracy of the simulation of the variance process. Unlike the standard equity options whose payoffs depend on the log-asset price process $X$, the VIX options are derivatives written exclusively on the variance process $V$. By performing Monte Carlo simulation of the variance paths using our second-order operator splitting scheme, we can numerically evaluate these VIX option prices. Since our second-order operator splitting scheme is globally second-order consistent for the joint infinitesimal generator of $(X, V)$, the second-order operator splitting scheme would be highly effective for simulating the VIX dynamics as well.

To perform an assessment on numerical accuracy of our second-order operator splitting scheme, an accurate benchmark VIX option price is required. Given the affine structure of the BN-S model, the characteristic function of $\text{VIX}^2$ can be derived in closed form. 
The Fourier-cosine method of \cite{Fang2009fourierCos} is renowned for its exponential convergence rate and numerical 
robustness when pricing European derivatives with known characteristic 
functions, making it a good choice to provide the benchmark VIX
option prices for numerical comparison.

The affine structure of $\text{VIX}^2$ with respect to the spot variance $V_T$ under the BN-S framework provides the mathematical tractability required to obtain an analytical characteristic function, which is the fundamental prerequisite for applying the Fourier-cosine pricing technique. 
The affine structure and conditional characteristic function of $\text{VIX}^2$ are summarized in \Cref{prop:vix_affine_cf}.


\begin{proposition}
    \label{prop:vix_affine_cf}
    Let $\tau_{vix}$ be the time horizon of the VIX. The squared VIX admits an affine structure with respect to the spot variance $V_T$:
    \begin{equation}
        \mathrm{VIX}_T^2 = \mathcal{A}(\tau_{vix}) V_T + \mathcal{B}(\tau_{vix}),
    \end{equation}
    where the coefficient functions are given by:
	\begin{subequations}
\label{eq:vix_coefficients} 
\begin{align}
    \mathcal{A}(\tau_{vix}) &= \frac{1-e^{-\lambda \tau_{vix}}}{\lambda \tau_{vix}}, \label{eq:vix_coeff_A} \\
    \mathcal{B}(\tau_{vix}) &= \frac{1}{\lambda} [1 - \mathcal{A}(\tau_{vix})] \int_0^\infty x \, \nu(\mathrm{d}x) - 2 \int_0^\infty (1 + \rho x - e^{\rho x}) \, \nu(\mathrm{d}x). \label{eq:vix_coeff_B}
\end{align}
\end{subequations}
    Consequently, the conditional characteristic function of $\mathrm{VIX}_T^2$, denoted by $\Phi(\phi; V_t, \tau) = \mathbb{E}^{\mathbb{Q}}\left[ e^{i\phi \mathrm{VIX}_T^2} \mid V_t \right]$, preserves the exponential affine form:
    \begin{equation}\label{eq:exp_chf}
        \Phi(\phi; V_t, \tau) = \exp\left( A(\phi; \tau)V_t + C(\phi; \tau) \right),
    \end{equation}
    where the coefficient functions $A(\phi; \tau)$ and $C(\phi; \tau)$ are explicitly given by
\begin{subequations}\label{eq:cf_coefficients}
\begin{align}
A(\phi;\tau)
&=
i\phi\,\mathcal{A}(\tau_{\mathrm{vix}})
e^{-\lambda\tau},
\label{eq:cf_coefficient_A}
\\
C(\phi;\tau)
&=
i\phi\,\mathcal{B}(\tau_{\mathrm{vix}})
+
\int_0^\tau
\int_0^\infty
\left(
e^{A(\phi;s)y}-1
\right)
\nu(\mathrm{d}y)\,\mathrm{d}s.
\label{eq:cf_coefficient_C}
\end{align}
\end{subequations}
    Here, $\nu(\text{d}y)$ denotes the Lévy measure of the BDLP.
\end{proposition}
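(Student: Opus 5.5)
The plan is to derive the VIX affine representation from the conditional expectation of future variance, and then obtain the characteristic function from the Lévy–Khintchine formula applied to the OU representation of $V_T$. Throughout I take the model as in Eq.~\eqref{eq:bns_system}, with $\nu$ the Lévy measure of the time-scaled BDLP $Z_{\lambda t}$. Under the BN-S model, I define $\mathrm{VIX}_T^2$ as the risk-neutral expected annualized quadratic variation of $X$ over $[T,T+\tau_{vix}]$ (the log-contract definition). Since $X$ has a diffusive part $\sqrt{V_t}\,\mathrm{d}W_t$ and jumps $\rho\,\mathrm{d}Z_{\lambda t}$, the log-contract value splits into two pieces: the integrated spot variance, and a jump compensator term $2\int_0^\infty(1+\rho x-e^{\rho x})\,\nu(\mathrm{d}x)$ per unit time. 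The latter comes from $-2\mathbb{E}[\ln(S_{T+\tau}/F)]$ with the martingale correction $\lambda\kappa(\rho)$. I will state this definition explicitly and check that it produces the stated sign of $\mathcal{B}$.

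\textbf{Step 1 (affine structure).} From Eq.~\eqref{eq:vol_integral},
\[
\mathbb{E}[V_s\mid V_T]=e^{-\lambda(s-T)}V_T+\Bigl(\int_0^\infty x\,\nu(\mathrm{d}x)\Bigr)\frac{1-e^{-\lambda(s-T)}}{\lambda}.
\]
Integrating over $s\in[T,T+\tau_{vix}]$ and dividing by $\tau_{vix}$ gives the coefficient $\mathcal{A}(\tau_{vix})$ in Eq.~\eqref{eq:vix_coeff_A} on $V_T$. It also gives the constant $\frac{1}{\lambda}\bigl[1-\mathcal{A}(\tau_{vix})\bigr]\int_0^\infty x\,\nu(\mathrm{d}x)$. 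Adding the time-homogeneous jump contribution, which does not depend on $V_T$ because the BDLP is Lévy and independent of the current state, yields $\mathcal{B}(\tau_{vix})$ in Eq.~\eqref{eq:vix_coeff_B}. Fubini applies under the standing assumption $\int_0^\infty x\,\nu(\mathrm{d}x)<\infty$.

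\textbf{Step 2 (characteristic function).} Set $\tau=T-t$. Using $\mathrm{VIX}_T^2=\mathcal{A}V_T+\mathcal{B}$, we get
\[
\Phi=e^{i\phi\mathcal{B}}\,\mathbb{E}\bigl[e^{i\phi\mathcal{A}V_T}\mid V_t\bigr],
\]
with $V_T=e^{-\lambda\tau}V_t+\int_t^T e^{-\lambda(T-s)}\,\mathrm{d}Z_{\lambda s}$. The first term gives $A(\phi;\tau)V_t$ with $A=i\phi\,\mathcal{A}\,e^{-\lambda\tau}$. For the stochastic integral of a deterministic integrand against a subordinator, the exponential formula gives
\[
\mathbb{E}\exp\Bigl(\int f\,\mathrm{d}Z\Bigr)=\exp\Bigl(\int\!\!\int\bigl(e^{f(s)y}-1\bigr)\nu(\mathrm{d}y)\,\mathrm{d}s\bigr),
\]
which is valid for purely imaginary exponents since $|e^{iuy}-1|\le\min(2,|u|y)$. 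With $f(s)=i\phi\,\mathcal{A}\,e^{-\lambda(T-s)}$ and the change of variables $s\mapsto T-s$, the integrand becomes $e^{A(\phi;s)y}-1$ over $s\in[0,\tau]$, which reproduces $C(\phi;\tau)$ in Eq.~\eqref{eq:cf_coefficient_C}. As a cross-check, $A$ and $C$ should solve the Riccati-type ODEs $\partial_\tau A=-\lambda A$ and $\partial_\tau C=\int\bigl(e^{Ay}-1\bigr)\nu(\mathrm{d}y)$ with $A(\phi;0)=i\phi\,\mathcal{A}$ and $C(\phi;0)=i\phi\,\mathcal{B}$, which I will verify by direct differentiation.

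\textbf{Main obstacle.} The delicate part is the constant $\mathcal{B}$: pinning down the VIX definition and computing the jump term. I would apply Itô's formula for semimartingales to $\ln S$. The jump part of $-2\ln(S_{T+\tau}/S_T)$ has expectation $-2\rho\int x\,\nu+2\kappa(\rho)$, which equals $-2\int\bigl(1+\rho x-e^{\rho x}\bigr)\nu(\mathrm{d}x)$ per unit time after combining with the $rT$ forward adjustment. I will track whether the time scaling by $\lambda$ is absorbed into $\nu$ (I will take $\nu$ as the Lévy measure of $Z_{\lambda\cdot}$, consistent with Eq.~\eqref{eq:cf_coefficient_C}) and whether the normalization is per unit of $\tau_{vix}$. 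Both choices need to be matched to the stated formula.
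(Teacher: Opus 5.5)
Your proof is correct. Step~1 coincides with the paper's Part~1: the conditional mean of the OU solution, then Fubini. Step~2 takes a different route. The paper posits the exponential-affine ansatz, invokes Feynman--Kac to obtain the PIDE $\partial_\tau\Phi=-\lambda V\partial_V\Phi+\int_0^\infty(\Phi(V+y)-\Phi(V))\,\nu(\mathrm{d}y)$, and solves the resulting ODEs $\partial_\tau A=-\lambda A$, $\partial_\tau C=\int_0^\infty(e^{Ay}-1)\,\nu(\mathrm{d}y)$. You instead substitute the explicit solution $V_T=e^{-\lambda\tau}V_t+\int_t^T e^{-\lambda(T-s)}\,\mathrm{d}Z_{\lambda s}$ into the exponential formula for a deterministic integrand against a subordinator. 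Your computation is self-contained and rigorous as it stands: it needs no a priori smoothness of $\Phi$ or uniqueness for the PIDE, and $|e^{iuy}-1|\le\min(2,|u|y)$ settles integrability. The paper's ansatz, on the other hand, is the argument that extends to affine models without an explicit solution, and its ODEs are exactly your cross-check.

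The other difference is the constant $\mathcal{B}$. The paper simply \emph{defines} $\mathrm{VIX}_T^2$ to include the term $2\int_0^\infty(e^{\rho x}-1-\rho x)\,\nu(\mathrm{d}x)$, whereas you derive it from the log contract. That is more informative, but two points need fixing. First, drop the phrase ``expected quadratic variation'': with jumps that definition would contribute $\rho^2\int_0^\infty x^2\,\nu(\mathrm{d}x)$ instead. Only $-\frac{2}{\tau_{vix}}\mathbb{E}[\ln(S_{T+\tau_{vix}}/F)\mid\mathcal{F}_T]$ with $F=S_Te^{r\tau_{vix}}$ reproduces Eq.~\eqref{eq:vix_coeff_B}. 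Also, the jump term in your first paragraph has the wrong sign for what enters $\mathrm{VIX}_T^2$, though your later computation gets it right. Second, once $\nu$ is the L\'evy measure of $Z_{\lambda\cdot}$, as Eq.~\eqref{eq:cf_coefficient_C} and the first term of $\mathcal{B}$ force, the martingale correction must be $\kappa(\rho)$ computed against that same $\nu$. Reading the drift $\lambda\kappa(\rho)$ of Eq.~\eqref{eq:price_dynamics} literally would leave a spurious factor $\lambda$ on $\kappa$ in $\mathcal{B}$.
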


We refer readers to \ref{appendix:proof_vix_affine_cf} for the detailed proof of Proposition \ref{prop:vix_affine_cf}.


  
  
\begin{corollary}\label{coll:1}
The integral quantities $\int_0^{\infty} x \, \nu (\mathrm{d}x)$ and $\int_0^{\infty} (1 + \rho x - e^{\rho x}) \, \nu(\mathrm{d}x)$ in Eq.~\eqref{eq:vix_coeff_B} for different BDLPs are summarized in the following table. 

\begin{center}
\resizebox{\textwidth}{!}{%
$
\begin{array}{ccccc}
\hline
\text{Model} & \text{OU-}\Gamma & \Gamma\text{-OU} & \text{OU-TS} & \text{TS-OU} \\
\hline
\text{L\'evy measure} & 
\dfrac{\alpha e^{-\beta x}}{x} \, \mathrm{d}x & 
\alpha \beta e^{-\beta x} \, \mathrm{d}x & 
\dfrac{\alpha e^{-\beta x}}{x^{\theta+1}} \, \mathrm{d}x & 
\dfrac{\alpha e^{-\beta x}}{x^{\theta+1}} (\beta x + \theta) \, \mathrm{d}x \\[15pt]

\displaystyle \int_0^{\infty} x \, \nu (\mathrm{d}x) & 
\dfrac{\alpha}{\beta} & 
\dfrac{\alpha}{\beta} & 
\alpha \beta^{\theta-1} \Gamma(1-\theta) & 
\alpha \beta^{\theta-1} \Gamma(1-\theta) \\[15pt]

\displaystyle \int_{0}^{\infty} (1 + \rho x - e^{\rho x}) \, \nu(\mathrm{d}x) & 
-\dfrac{\rho \alpha }{\beta} + \alpha \ln \dfrac{\beta}{\beta-\rho} & 
\dfrac{\alpha \rho^2}{\beta (\beta-1)} & 
\alpha \beta^\theta \Gamma(-\theta) \left\{(1-\dfrac{\rho}{\beta})^\theta -1 + \dfrac{\rho \theta}{\beta}\right\} & 
\alpha \rho \Gamma(1-\theta) \left(-\beta^{\theta-1} + (\beta-\rho)^{\theta-1}\right) \\[15pt]
\hline
\end{array}
$
}
\end{center}
\end{corollary}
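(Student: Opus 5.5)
The plan is to evaluate the two integrals in closed form for each of the four Lévy measures listed in the table. Every one of them reduces to integrals of the form $\int_0^\infty x^{s-1}e^{-\beta x}\,\mathrm{d}x=\Gamma(s)\beta^{-s}$, extended by analytic continuation to $s\in(-1,0)$ where needed, together with the Frullani-type identity $\int_0^\infty (e^{-ax}-e^{-bx})\,x^{-1}\,\mathrm{d}x=\ln(b/a)$. Throughout I use $\rho\le 0$ and $\beta>0$. These guarantee $\beta-\rho>0$, so every exponential $e^{(\rho-\beta)x}$ decays and all integrals converge at infinity. Convergence at the origin follows because $1+\rho x-e^{\rho x}=O(x^2)$ as $x\to0$.

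For the first row I simply integrate $x\,\nu(\mathrm{d}x)$.
\begin{itemize}
\item For OU-$\Gamma$ this gives $\alpha\int_0^\infty e^{-\beta x}\,\mathrm{d}x=\alpha/\beta$.
\item For $\Gamma$-OU it gives $\alpha\beta\int_0^\infty xe^{-\beta x}\,\mathrm{d}x=\alpha/\beta$.
\item For OU-TS it gives $\alpha\int_0^\infty x^{-\theta}e^{-\beta x}\,\mathrm{d}x=\alpha\beta^{\theta-1}\Gamma(1-\theta)$, valid for $\theta<1$.
\item For TS-OU it gives $\alpha\int_0^\infty x^{-\theta}(\beta x+\theta)e^{-\beta x}\,\mathrm{d}x=\alpha[\beta\cdot\beta^{\theta-2}\Gamma(2-\theta)+\theta\beta^{\theta-1}\Gamma(1-\theta)]$. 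Using $\Gamma(2-\theta)=(1-\theta)\Gamma(1-\theta)$, this collapses to $\alpha\beta^{\theta-1}\Gamma(1-\theta)$.
\end{itemize}
The agreement between the first two and between the last two is exactly what the D-OU/OU-D relation \eqref{eq:levy_tran} predicts for the mean.

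For the second row I split $1+\rho x-e^{\rho x}=(1-e^{\rho x})+\rho x$. The linear piece gives $\rho\int_0^\infty x\,\nu(\mathrm{d}x)$, which is already known from the first row. The remaining piece is $\int_0^\infty(1-e^{\rho x})\,\nu(\mathrm{d}x)$, i.e.\ $-\kappa(\rho)$.
\begin{itemize}
\item For OU-$\Gamma$, Frullani gives $\alpha\ln\frac{\beta-\rho}{\beta}$.
\item For $\Gamma$-OU, direct integration gives $\alpha\beta\bigl(\frac1\beta-\frac1{\beta-\rho}\bigr)$. Adding $\rho\alpha/\beta$ yields $-\alpha\rho^2/(\beta(\beta-\rho))$.
\item For OU-TS, I use the continued Gamma integral $\int_0^\infty (e^{-\beta x}-e^{-(\beta-\rho)x})\,x^{-\theta-1}\,\mathrm{d}x=\Gamma(-\theta)[\beta^\theta-(\beta-\rho)^\theta]$. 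This can be justified by one integration by parts, which reduces it to the $\Gamma(1-\theta)$ case. The result combines with the linear term into $\alpha\beta^\theta\Gamma(-\theta)\{(1-\rho/\beta)^\theta-1+\rho\theta/\beta\}$ up to sign.
\item For TS-OU, the $\beta x$ part reuses the $\Gamma(1-\theta)$ integral and the $\theta$ part reuses the OU-TS computation. Since $\theta\Gamma(-\theta)=-\Gamma(1-\theta)$, the $\beta^\theta$ and $(\beta-\rho)^\theta$ terms cancel, leaving $\alpha\rho\Gamma(1-\theta)[(\beta-\rho)^{\theta-1}-\beta^{\theta-1}]$ after adding the linear piece.
\end{itemize}

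The main obstacle is bookkeeping rather than analysis. The first issue is the analytic continuation of $\Gamma$ to the negative argument $-\theta$. I would handle it cleanly by integrating by parts once so that only convergent $\Gamma(1-\theta)$ integrals appear. The second issue is tracking signs, since $\rho\le0$ and the combination is $1+\rho x-e^{\rho x}$ rather than $e^{\rho x}-1-\rho x$. I would verify each final expression by checking its Taylor expansion in $\rho$ at $\rho=0$. It must equal $-\tfrac{\rho^2}{2}\int x^2\,\nu(\mathrm{d}x)+O(\rho^3)$. This check will flag any sign or denominator slips in the tabulated entries; for instance, the $\Gamma$-OU entry should have denominator $\beta(\beta-\rho)$ with a negative sign. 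I would correct such entries accordingly.
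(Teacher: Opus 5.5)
The paper states this table without derivation, so direct integration is the only route. Your tools (Gamma integrals, Frullani, one integration by parts for the $x^{-\theta-1}$ kernel) are the right ones, and your first-row values and your $\Gamma$-OU computation are correct.

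The gap is that the second row of the statement is false as printed, and you catch only part of the problem. Since $e^{y}\ge 1+y$, the integrand $1+\rho x-e^{\rho x}$ is nonpositive, so all four integrals are $\le 0$. Yet for $\rho\le 0$ and $0<\theta<1$, the tabulated OU-$\Gamma$, OU-TS and TS-OU entries are all $\ge 0$. The same holds for the $\Gamma$-OU entry once $\beta-1$ is corrected to $\beta-\rho$. In fact each tabulated entry equals $\int_0^\infty(e^{\rho x}-1-\rho x)\,\nu(\mathrm{d}x)$, the quantity in the appendix definition of $\mathrm{VIX}_T^2$. That is the negative of what the row label says. Done correctly, your steps give:
OU-$\Gamma$: $\alpha\ln\frac{\beta-\rho}{\beta}+\frac{\alpha\rho}{\beta}$;
$\Gamma$-OU: $-\frac{\alpha\rho^2}{\beta(\beta-\rho)}$;
OU-TS: $-\alpha\beta^\theta\Gamma(-\theta)\bigl\{(1-\frac{\rho}{\beta})^\theta-1+\frac{\rho\theta}{\beta}\bigr\}$;
TS-OU: $\alpha\rho\Gamma(1-\theta)\bigl(\beta^{\theta-1}-(\beta-\rho)^{\theta-1}\bigr)$.

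Your TS-OU line is therefore wrong.
\begin{enumerate}
\item The $\beta x$ part of the measure contributes $\alpha\Gamma(1-\theta)[\beta^\theta-\beta(\beta-\rho)^{\theta-1}]$ to $\int(1-e^{\rho x})\,\nu(\mathrm{d}x)$.
\item The $\theta$ part contributes $\alpha\Gamma(1-\theta)[(\beta-\rho)^\theta-\beta^\theta]$.
\item Only the $\beta^\theta$ terms cancel; the rest combine as $(\beta-\rho)^\theta-\beta(\beta-\rho)^{\theta-1}=-\rho(\beta-\rho)^{\theta-1}$.
\item Adding the linear piece $\alpha\rho\Gamma(1-\theta)\beta^{\theta-1}$ then yields the negative of what you wrote. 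What you wrote is just the tabulated value.
\end{enumerate}
Your ``up to sign'' for OU-TS and the uncompared OU-$\Gamma$ total hide the same flip.

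Your own Taylor test would have flagged all four entries, not just $\Gamma$-OU. For TS-OU, $\int x^2\,\nu(\mathrm{d}x)=2\alpha\Gamma(2-\theta)\beta^{\theta-2}$, so the $\rho^2$ coefficient must be $-\alpha\Gamma(2-\theta)\beta^{\theta-2}$. Your expression instead expands with coefficient $+\alpha\Gamma(2-\theta)\beta^{\theta-2}$.

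The correct conclusion is that the entire second row must be negated, plus the $\beta-1\to\beta-\rho$ fix. Equivalently, the row can be relabelled as $\int_0^\infty(e^{\rho x}-1-\rho x)\,\nu(\mathrm{d}x)$, which then enters \eqref{eq:vix_coeff_B} with coefficient $+2$. Either way, the statement as printed cannot be proved for $\rho<0$.

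For the D-OU columns there is also a quicker route that makes the sign obvious. From $w=-u-xu'$ the tail is $\nu((x,\infty))=x\,u(x)$. One integration by parts then gives $\int_0^\infty(1+\rho x-e^{\rho x})\,\nu(\mathrm{d}x)=\rho\int_0^\infty(1-e^{\rho x})\,x\,u(x)\,\mathrm{d}x$, which is $\le 0$.
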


With the integral quantities explicitly evaluated for various Lévy processes in \Cref{coll:1}, the conditional characteristic function $\Phi(\phi; V_t, \tau)$ in Eq.~\ref{eq:exp_chf} is fully specified in closed form. Having obtained the exact analytical characteristic function, we can apply the Fourier-cosine method to compute the benchmark price of the VIX option. The Fourier-cosine algorithm for pricing the VIX call option is presented in \Cref{prop:vix_cf}.

\begin{proposition}
\label{prop:vix_cf}
Given the characteristic function $\Phi(\phi; V_t, \tau)$ of $\text{VIX}^2_T$, the price of the VIX call option with strike $K$ and maturity $T$ can be approximated using the COS series expansion:
\begin{equation}
    C_t \approx e^{-r(T-t)} \sideset{}{'}\sum_{n=0}^{N-1} \Re \left[ \Phi\left( \frac{n\pi}{b-a}; V_t, \tau \right) e^{-i \frac{n a \pi}{b-a}} \right] U_n,
\end{equation}
where $[a, b]$ is the truncation domain, $[a, b] = [\max(c_1 - L \sqrt{c_2+\sqrt{c_4}}, 0), c_1 + L\sqrt{c_2+\sqrt{c_4}}]$, where $c_j$ represents the $j$th cumulant of $\text{VIX}^2_T$ defined as 
$$ c_j = \frac{1}{i^j} \frac{\partial^j}{\partial \phi^j} \log \Phi(\phi; V_t, \tau) \Big|_{\phi=0}. $$
The coefficients $U_n$, corresponding to the payoff $(\sqrt{\text{VIX}_T^2} - K)^+$, are derived analytically as follows:
\begin{equation*}
U_n = 
\begin{cases}
  \displaystyle \frac{2}{b-a} \left( \frac{2}{3} b^{\frac{3}{2}} + \frac{1}{3} K^3 - K b \right) & \text{if } n = 0, \ a < K^2 \\[10pt]
  
  \displaystyle \frac{2}{b-a} \left( \frac{2}{3} b^{\frac{3}{2}} - \frac{2}{3} a^{\frac{3}{2}} - K b + K a \right) & \text{if } n = 0, \ a \ge K^2 \\[10pt]
  
  \displaystyle \frac{2}{b-a} \Re \Bigg\{ e^{-iw_na} \Bigg[
    \frac{\sqrt{b}-K}{iw_n} e^{iw_nb} - \mathbf{1}_{\{a \ge K^2\}}\frac{\sqrt{a} - K}{iw_n} e^{iw_na} + \Lambda_n
  \Bigg] \Bigg\} & \text{if } n \neq 0
\end{cases}
\end{equation*}
with $w_n = \frac{n\pi}{b-a}$ and the auxiliary term $\Lambda_n$ is given by
\begin{equation*}
    \Lambda_n = \frac{\sqrt{\pi}}{2(-iw_n)^{3/2}} \left( \operatorname{erfz}(\sqrt{-iw_n b}) - \operatorname{erfz}(\sqrt{-iw_n \max(a, K^2)}) \right).
\end{equation*}
Here, $\operatorname{erfz}(z) = \frac{2}{\sqrt{\pi}} \int_{0}^{z} e^{-t^2} \mathrm{d}t$ is the complex error function.
\end{proposition}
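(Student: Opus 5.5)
The plan is to follow the standard derivation of the Fourier-cosine method of \cite{Fang2009fourierCos}, applied to the random variable $y=\mathrm{VIX}^2_T$ rather than to a log-price. The payoff is expressed through $g(y)=(\sqrt{y}-K)^+$. Under the risk-neutral measure, $C_t=e^{-r(T-t)}\int_0^\infty g(y)f(y\mid V_t)\,\mathrm{d}y$, where $f$ is the conditional density of $\mathrm{VIX}^2_T$. By \Cref{prop:vix_affine_cf}, its characteristic function $\Phi(\phi;V_t,\tau)$ is known in closed form.

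\textbf{Steps 1 and 2: truncation and cosine expansion.} First, I truncate the integral to $[a,b]$. The truncation error is controlled by choosing $[a,b]$ from the cumulants $c_j$, which are obtained by differentiating $\log\Phi=A(\phi;\tau)V_t+C(\phi;\tau)$ at $\phi=0$. The lower end is clipped at $0$ because $\mathrm{VIX}^2_T\ge \mathcal{B}\ge0$. This choice is heuristic, as is standard for COS, so only the approximation sign is claimed. Second, on $[a,b]$ I expand $f$ in the cosine basis $\cos(w_n(y-a))$. Its coefficients are
\[
\frac{2}{b-a}\int_a^b f(y)\cos(w_n(y-a))\,\mathrm{d}y \approx \frac{2}{b-a}\Re\!\left[\Phi(w_n;V_t,\tau)e^{-iw_na}\right],
\]
where the approximation comes from extending the integral to $\mathbb{R}$. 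Truncating the series at $N$ terms and interchanging sum and integral gives the stated formula, with
\[
U_n=\frac{2}{b-a}\int_a^b g(y)\cos(w_n(y-a))\,\mathrm{d}y .
\]
The primed sum halves the $n=0$ term, as usual.

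\textbf{Step 3: the $n=0$ coefficient.} It remains to compute $U_n$ analytically. Since $g$ vanishes for $y<K^2$, the integral runs over $[\max(a,K^2),b]$. For $n=0$ the antiderivative of $\sqrt{y}-K$ is $\frac23y^{3/2}-Ky$.
\begin{itemize}
\item If $a<K^2$, evaluating between $K^2$ and $b$ gives $\frac23b^{3/2}-Kb+\frac13K^3$.
\item If $a\ge K^2$, evaluating between $a$ and $b$ gives the second case of the formula.
\end{itemize}

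\textbf{Step 4: the $n\neq0$ coefficients.} I write $\cos(w_n(y-a))=\Re[e^{-iw_na}e^{iw_ny}]$ and integrate by parts:
\[
\int (\sqrt{y}-K)e^{iw_ny}\,\mathrm{d}y=\frac{(\sqrt{y}-K)e^{iw_ny}}{iw_n}-\frac{1}{2iw_n}\int y^{-1/2}e^{iw_ny}\,\mathrm{d}y .
\]
The boundary term at $K^2$ vanishes because $\sqrt{K^2}-K=0$. This is why the lower boundary term carries the indicator $\mathbf{1}_{\{a\ge K^2\}}$. For the remaining integral I substitute $u=\sqrt{-iw_ny}$, which gives
\[
\int y^{-1/2}e^{iw_ny}\,\mathrm{d}y=\frac{\sqrt{\pi}}{\sqrt{-iw_n}}\operatorname{erfz}\!\left(\sqrt{-iw_ny}\right).
\]
Multiplying by $-\frac{1}{2iw_n}=\frac{1}{2(-iw_n)}$ yields $\Lambda_n$ evaluated between $\max(a,K^2)$ and $b$.

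\textbf{Main obstacle.} I expect the hardest step to be justifying the complex substitution in Step 4 with consistent branches. Specifically, $\sqrt{-iw_n y}$ and $(-iw_n)^{3/2}$ must be taken as principal branches so that $(-iw_n)^{1/2}\sqrt{y}=\sqrt{-iw_ny}$ for $y>0$. I would handle this by viewing the substitution as a contour rotation along the ray $\arg u=-\pi/4$. Since $\operatorname{erfz}$ is entire, path independence holds. Checking the derivative of the claimed antiderivative directly then confirms the identity without any contour argument.
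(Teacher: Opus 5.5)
Your derivation is correct and is exactly the standard Fang--Oosterlee COS argument that the paper invokes without writing out a proof. The $n=0$ antiderivatives, the integration by parts with the vanishing boundary term at $y=K^2$, and the substitution producing $\Lambda_n$ all check out. The principal branches are consistent because $w_n>0$, and in fact your derivative check works for any consistent choice of $\sqrt{-iw_n}$.

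Two implicit assumptions are worth stating. First, you need $b>K^2$; otherwise $U_n=0$ and the displayed formulas fail. Second, in the Gamma-OU specification $\mathrm{VIX}_T^2$ has an atom and hence no density. There, Step 2 should be phrased as a cosine expansion of the payoff $g$ integrated against the law of $\mathrm{VIX}_T^2$, which yields the same formula.
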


\section{Second-order convergence of the operator splitting schemes}
\label{sec:error-analysis}

\noindent
The operator splitting schemes derived for various stochastic volatility models share the same prototype
 structure. We now formulate this structure using the notion of transition semigroups and establish
a unified second-order convergence of the operator splitting schemes.
The proof of the second-order convergence of the operator splitting schemes consists of two steps. First, 
we show that the prototype structure 
has the local weak error of order $O(h^3)$, where $h$ is the time step. Let $T$ denote the time horizon 
of the computational problem. Second, a telescoping argument over
$N=T/h$ time steps gives an accumulated global weak error of order
\[
    N O(h^3)
    =
    \frac{T}{h} \, O(h^3)
    =
    O(h^2).
\]

Let $(X_t,V_t)$ denote the original Markov process on the state space
$\mathcal X\times\mathcal V$, where $X_t$ is the log-asset price and $V_t$ denotes the
variance state variable. For the Heston model, $4/2$ model, Hull-White model, and BN-S model, $V_t$ is
one-dimensional. For the lifted Heston model, $V_t$ is taken to be the vector
\[
    V_t=(V_{t,1},\ldots,V_{t,N})\in\mathcal V\subseteq\mathbb R_+^N .
\]
Let $\{\mathcal{P}_t:t\geq0\}$ be the transition semigroup of the original process underlying the 
stochastic volatility model as defined by
\begin{equation}
    \mathcal{P}_t f(x,v)
    =
    \mathbb E\!\left[f(X_t,V_t)\mid X_0=x,V_0=v\right],
    \label{eq:original-semigroup}
\end{equation}
associated with the infinitesimal generator $\mathcal G$.
Here, $f$ is the test function.
In each stochastic volatility model considered in this paper, the operator 
splitting procedure amounts to the splitting of the infinitesimal generator into
\begin{equation*}
    \mathcal G=\mathcal G^{(1)}+\mathcal G^{(2)},
    \label{eq:generator-decomposition}
\end{equation*}
where $\mathcal G^{(1)}$ and $\mathcal G^{(2)}$ are the infinitesimal generators of the two
sub-operators with transition semigroups $\{\mathcal{P}_t^{(1)}:t\geq0\}$ and
$\{\mathcal{P}_t^{(2)}:t\geq0\}$, respectively. Over one time step $h>0$, we define the composite transition semigroup associated
with the Strang operator splitting procedure by
\begin{equation}
    \mathcal{P}_h^{\text{split}}
    =
    \mathcal{P}_{h/2}^{(1)}
    \mathcal{P}_h^{(2)}
    \mathcal{P}_{h/2}^{(1)} .
    \label{eq:strang-one-step}
\end{equation}
For $T=Nh$, the corresponding composite $N$-step operator splitting approximation is $\mathcal{P}_T^{\text{split}} = \left(\mathcal{P}_h^{\mathrm{split}}\right)^N$. 
Accordingly, we have
\begin{equation}
    \mathcal{P}_T^{\text{split}} f(x,v)
	:=
\left(P_h^{\mathrm{split}}\right)^N f
    =
    \mathbb E\!\left[
    f(X_T^{\mathrm{split}},V_T^{\mathrm{split}})
    \mid X_0^{\mathrm{split}}=x,V_0^{\mathrm{split}}=v
    \right].
    \label{eq:split-semigroup}
\end{equation}

\noindent
In our later exposition, the regularity assumptions in \Cref{prop:terminal-weak-error} involve the notion of commutator. For two given operators $A$ and $B$, we define their commutator by
\begin{equation*}
    [A,B]=AB-BA.
\end{equation*}

\subsection{Connection with the model specific operator splitting schemes}
\label{subsec:model-applicability}

\noindent
The splitting infinitesimal generators $\mathcal G^{(1)}$ and $\mathcal G^{(2)}$ introduced above
are associated with the sub-generators in each specific stochastic volatility model. For the Heston, $4/2$, and Hull--White models, the state variables are $(X,V)$. 
The sub-operator 1 is chosen so that the integrated variance appearing in the provisional log-price
 can be derived in closed form. The sub-operator 2 contains the correlated
stochastic variance component, which is computed via simulation calculations. 
For the lifted Heston model, the same interpretation applies after replacing the scalar
variance state $V$ by the vector of variance factors $(V_1,\ldots,V_N)$.
For the BN-S model, the sub-operator 1 corresponds to the continuous OU decay component and
Brownian innovations, while the sub-operator 2 corresponds to the pure BDLP.

Therefore, the simulation schemes derived in the preceding sections dictate the model specific
choices of $\mathcal P_t^{(1)}$ and $\mathcal P_t^{(2)}$. Although the explicit
transition semigroups may differ across different stochastic volatility models, all operator splitting 
schemes share the common prototype structure as defined in Eq.~\eqref{eq:strang-one-step}. 
The convergence proof presented below can be applied without distinction
 to each of these model specific simulation schemes.

\subsection{Second-order convergence for the N-step operator splitting approximation}
\label{subsec:terminal-weak-error}
\noindent
We write $P_T=(P_h)^N$, $T=Nh$, as the $N$-step transition semigroup
of the original process. We would like to establish the second-order convergence
of $P_T^{\mathrm{split}}$ [see Eq.~\eqref{eq:split-semigroup}] to $P_T$.
Specifically, we examine the order of the weak error $ \|P_T^{\mathrm{split}}f-P_Tf\|_\infty,$
where $\|\cdot\|_\infty$ is the maximum norm and $f$ is a test function.
In the weak convergence analysis, $f$ is a test function applied to the
terminal state. Proposition~\ref{prop:terminal-weak-error} examines 
the discrepancy of the distribution of the operator splitting scheme at some 
terminal time to the exact solution. In the context of pricing a European option 
under stochastic volatility model, Proposition~\ref{prop:terminal-weak-error} establishes
that the order of operator splitting approximation in the corresponding option price is $O(h^2)$.

\begin{proposition}
\label{prop:terminal-weak-error}

Let $f:\mathcal X\times\mathcal V\to\mathbb R$ be a test function such that, for any
$s,r,u\in[0,T]$, the functions
\[
    \mathcal{P}_r^{(1)}
    \mathcal{P}_s^{(2)}
    \mathcal{P}_s^{(1)}
    \mathcal{P}_u f
    \quad\text{and}\quad
    \mathcal{P}_r^{(2)}
    \mathcal{P}_s^{(1)}
    \mathcal{P}_u f
\]
belong to the domains of the double commutators
\[
    [\mathcal G^{(1)},[\mathcal G^{(1)},\mathcal G^{(2)}]]
    \quad\text{and}\quad
    [\mathcal G^{(2)},[\mathcal G^{(2)},\mathcal G^{(1)}]],
\]
respectively. Suppose that $f$ satisfies the following boundness conditions 
\begin{subequations}\label{eq:commutator-bounds}
\begin{align}
\sup_{s,r,u\in[0,T]}
\left\|
[\mathcal G^{(1)},[\mathcal G^{(1)},\mathcal G^{(2)}]]
\mathcal{P}_r^{(1)}
\mathcal{P}_s^{(2)}
\mathcal{P}_s^{(1)}
\mathcal{P}_u f
\right\|_\infty
&< \infty,
\label{eq:commutator-bound-1}
\\[4pt]
\sup_{s,r,u\in[0,T]}
\left\|
[\mathcal G^{(2)},[\mathcal G^{(2)},\mathcal G^{(1)}]]
\mathcal{P}_r^{(2)}
\mathcal{P}_s^{(1)}
\mathcal{P}_u f
\right\|_\infty
&< \infty.
\label{eq:commutator-bound-2}
\end{align}
\end{subequations}
Then there exists a constant $C>0$, independent of $h$, such that
\begin{equation}
    \sup_{(x,v)\in\mathcal X\times\mathcal V}
    \left|
    \mathbb E\!\left[
        f(X_T^{\mathrm{split}},V_T^{\mathrm{split}})
        \mid
        X_0^{\mathrm{split}}=x,
        V_0^{\mathrm{split}}=v
    \right]
    -
    \mathbb E\!\left[
        f(X_T,V_T)
        \mid
        X_0=x,
        V_0=v
    \right]
    \right|
    \leq Ch^2.
    \label{eq:terminal-weak-error}
\end{equation}
That is,
$\|P_T^{\mathrm{split}}f-P_Tf\|_\infty=O(h^2)$.
\end{proposition}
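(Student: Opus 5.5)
We follow the two-step structure announced at the start of Section~\ref{sec:error-analysis}: first a local error bound of order $h^3$ for one Strang step applied to the functions $\mathcal P_u f$, then a telescoping sum over the $N=T/h$ steps. Since $\mathcal P_h^{(1)}$, $\mathcal P_h^{(2)}$, $\mathcal P_h^{\mathrm{split}}$ and $\mathcal P_h$ are all Markov transition operators, they are contractions in $\|\cdot\|_\infty$. This contraction property is what allows local errors to accumulate only additively.

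\emph{Telescoping.} We write
\[
(\mathcal P_h^{\mathrm{split}})^N f-\mathcal P_h^N f=\sum_{j=0}^{N-1}(\mathcal P_h^{\mathrm{split}})^{N-1-j}\bigl(\mathcal P_h^{\mathrm{split}}-\mathcal P_h\bigr)\mathcal P_{jh} f.
\]
By the contraction property,
\[
\|\mathcal P_T^{\mathrm{split}}f-\mathcal P_Tf\|_\infty\le \sum_{j=0}^{N-1}\|(\mathcal P_h^{\mathrm{split}}-\mathcal P_h)\mathcal P_{jh}f\|_\infty .
\]
It therefore suffices to prove $\|(\mathcal P_h^{\mathrm{split}}-\mathcal P_h)g\|_\infty\le C h^3$ for $g=\mathcal P_u f$, with $C$ uniform in $u\in[0,T]$. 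Then $N\cdot Ch^3=CTh^2$, which gives \eqref{eq:terminal-weak-error}.

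\emph{Local error.} We would derive an exact integral (Duhamel-type) representation of the Strang defect using only semigroup identities of the form $\frac{\mathrm d}{\mathrm ds}\mathcal P^{(i)}_s g=\mathcal G^{(i)}\mathcal P^{(i)}_s g=\mathcal P^{(i)}_s\mathcal G^{(i)} g$ on the relevant domains. The procedure has three parts.

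\begin{enumerate}
\item Define $E(s)=\mathcal P_{s/2}^{(1)}\mathcal P_s^{(2)}\mathcal P_{s/2}^{(1)}$ and use the variation-of-constants formula
\[
E(h)g-\mathcal P_hg=\int_0^h \mathcal P_{h-s}\bigl(E'(s)-\mathcal G E(s)\bigr)g\,\mathrm ds,
\]
where the remainder is $R(s)=E'(s)-\mathcal GE(s)$, with $\mathcal G=\mathcal G^{(1)}+\mathcal G^{(2)}$.
\item Compute $R(s)$ and rewrite the mismatch as conjugations of $\mathcal G^{(2)}$ by $\mathcal P^{(1)}_{s/2}$ and of $\mathcal G^{(1)}$ by $\mathcal P^{(2)}_s$. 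Apply the expansion $\mathcal P^{(1)}_{-r}\mathcal G^{(2)}\mathcal P^{(1)}_r$, written as a Taylor formula with integral remainder in $r$ (so no inverse semigroups are needed), to show that $R(0)=0$ and $R'(0)=0$. This is the symmetry of Strang splitting, which cancels the first-order commutator $[\mathcal G^{(1)},\mathcal G^{(2)}]$.
\item Express $R(s)g$ as $s^2$ times averaged integrals of the double commutators $[\mathcal G^{(1)},[\mathcal G^{(1)},\mathcal G^{(2)}]]$ and $[\mathcal G^{(2)},[\mathcal G^{(2)},\mathcal G^{(1)}]]$. These act on functions of the form $\mathcal P_r^{(1)}\mathcal P_s^{(2)}\mathcal P_s^{(1)}g$ or $\mathcal P_r^{(2)}\mathcal P_s^{(1)}g$, and are sandwiched between further contractive semigroup factors.
\end{enumerate}

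The hypotheses \eqref{eq:commutator-bound-1}--\eqref{eq:commutator-bound-2} bound these terms uniformly in $s,r,u$, which yields $\|R(s)g\|_\infty\le Cs^2$. Integrating over $[0,h]$ then gives the $O(h^3)$ local error.

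\emph{Main obstacle.} The delicate step is the algebra in the local error: producing a remainder formula in which the double commutators act precisely on the argument shapes $\mathcal P_r^{(1)}\mathcal P_s^{(2)}\mathcal P_s^{(1)}\mathcal P_uf$ and $\mathcal P_r^{(2)}\mathcal P_s^{(1)}\mathcal P_uf$ named in the statement. At the same time, every differentiation must be justified on unbounded generators, for which we use only the domain assumptions. The plan is to follow the standard Lie-derivative/Taylor expansion for Strang splitting, in the spirit of \citet{strang1968construction}. Each step is carried out with an integral remainder rather than an infinite series, differentiating $s\mapsto \mathcal P^{(1)}_{a(s)}\mathcal P^{(2)}_{b(s)}\mathcal P^{(1)}_{c(s)}g$ twice. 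The rescaling of the parameters, with $s$ replaced by $s/2$, is absorbed into the suprema over $[0,T]$.
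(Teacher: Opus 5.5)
Your proposal follows the paper's proof: both telescope $(\mathcal P_h^{\mathrm{split}})^N f-\mathcal P_h^N f$ over the $N$ steps using the contraction property of the Markov operators, and both obtain an $O(h^3)$ local error from an integral representation of the Strang defect in which the double commutators act on the intermediate functions, are bounded through the hypotheses after dropping the contractive outer factors, and are integrated in time. The only difference is that the paper quotes this representation directly from \citet[Lemma~2.4]{DescombesSchatzman2002}, whereas you plan to re-derive it by variation of constants. One small caution: your remark that the $s/2$ rescaling is absorbed by the suprema does not literally hold for the tied indices $\mathcal P^{(2)}_s\mathcal P^{(1)}_s$ in \eqref{eq:commutator-bound-1}, a looseness the paper's own statement shares.
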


\begin{proof}

By Eqs.~\eqref{eq:original-semigroup} and \eqref{eq:split-semigroup}, we have
\begin{align}
    &\sup_{(x,v)\in\mathcal X\times\mathcal V}
    \left|
    \mathbb E\!\left[
        f(X_T^{\mathrm{split}},V_T^{\mathrm{split}})
        \mid
        X_0^{\mathrm{split}}=x,
        V_0^{\mathrm{split}}=v
    \right]
    -
    \mathbb E\!\left[
        f(X_T,V_T)
        \mid
        X_0=x,
        V_0=v
    \right]
    \right|
    \notag\\
    &=
    \| \mathcal{P}_T^{\text{split}} f-\mathcal P_Tf\|_\infty
    \notag =
    \|(\mathcal{P}_h^{\text{split}})^N f-\mathcal (P_h)^Nf\|_\infty
    \notag =
    \left\|
    \sum_{i=0}^{N-1}
    \mathcal{P}_{(N-i-1)h}^{\text{split}}
    (\mathcal{P}_h^{\text{split}}-\mathcal P_h)
    \mathcal P_{ih} f
    \right\|_\infty
    \notag\\
    &\leq
    \sum_{i=0}^{N-1}
    \left\|
    (\mathcal{P}_h^{\text{split}}-\mathcal P_h)
    \mathcal P_{ih}f
    \right\|_\infty,
    \label{eq:weak-error-decomposition}
\end{align}
where we have used the contraction property of $\mathcal{P}_h^{\text{split}}$.

We next estimate the local weak error. By the Strang splitting residual formula in
\citet[Lemma~2.4]{DescombesSchatzman2002}, we obtain
\begin{align}
    &\left\|
    (\mathcal{P}_h^{\text{split}}-\mathcal P_h)
    \mathcal P_{ih}f
    \right\|_\infty
    \notag\\
    &\leq
    \int_0^h\!\!\int_0^s
    (s-r)
    \left\|
    \mathcal P_{h-s}
    \mathcal P_{s-r}^{(1)}
    [\mathcal G^{(1)},[\mathcal G^{(1)},\mathcal G^{(2)}]]
    \mathcal P_r^{(1)}
    \mathcal P_s^{(2)}
    \mathcal P_s^{(1)}
    \mathcal P_{ih}f
    \right\|_\infty
    \,\mathrm{d}r\,\mathrm{d}s
    \notag\\[4pt]
    &\quad+
    \int_0^h\!\!\int_0^s
    (s-r)
    \left\|
    \mathcal P_{h-s}
    \mathcal P_s^{(1)}
    \mathcal P_r^{(2)}
    [\mathcal G^{(2)},[\mathcal G^{(2)},\mathcal G^{(1)}]]
    \mathcal P_{s-r}^{(2)}
    \mathcal P_s^{(1)}
    \mathcal P_{ih}f
    \right\|_\infty
    \,\mathrm{d}r\,\mathrm{d}s
    \notag\\
    &\leq
    \int_0^h\!\!\int_0^s
    (s-r)
    \left\|
    [\mathcal G^{(1)},[\mathcal G^{(1)},\mathcal G^{(2)}]]
    \mathcal P_r^{(1)}
    \mathcal P_s^{(2)}
    \mathcal P_s^{(1)}
    \mathcal P_{ih}f
    \right\|_\infty
    \,\mathrm{d}r\,\mathrm{d}s
    \notag\\[4pt]
    &\quad+
    \int_0^h\!\!\int_0^s
    (s-r)
    \left\|
    [\mathcal G^{(2)},[\mathcal G^{(2)},\mathcal G^{(1)}]]
    \mathcal P_{s-r}^{(2)}
    \mathcal P_s^{(1)}
    \mathcal P_{ih}f
    \right\|_\infty
    \,\mathrm{d}r\,\mathrm{d}s
    \notag\\
    &\leq
    C_{\mathrm{loc}}
    \int_0^h\!\!\int_0^s
    (s-r)\,\mathrm{d}r\,\mathrm{d}s
    =
    \frac{1}{6}C_{\mathrm{loc}}h^3,
    \label{eq:local-third-order-error}
\end{align}
where we have used the contraction properties of $\mathcal P_t$, $\mathcal P_t^{(1)}$,
and $\mathcal P_t^{(2)}$, and $C_{\mathrm{loc}}>0$ is independent of $h$. Thus, the prototype Strang splitting has local weak error of order $O(h^3)$.
Substituting Eq.~\eqref{eq:local-third-order-error} into
Eq.~\eqref{eq:weak-error-decomposition}, we obtain the weak error bound of $O(h^2)$, where
\begin{align}
    \|\mathcal{P}_T^{\text{split}}f-\mathcal P_Tf\|_\infty
    &\leq
    \sum_{i=0}^{N-1}
    \frac{1}{6}C_{\mathrm{loc}}h^3
    \notag =
    \frac{1}{6}C_{\mathrm{loc}}Nh^3
    \notag =
    \frac{1}{6}C_{\mathrm{loc}}Th^2.
	\qedhere
\end{align}
\end{proof}

\subsection{Second-order convergence for pricing options with discretely monitored path dependent payoffs}
\label{subsec:path-dependent-weak-error}


\noindent
For a discretely monitored path dependent option, the payoff depends on the sequence of 
state variables on a finite collection of monitoring dates. Accordingly, we consider a
multivariate test function $g:(\mathbb X\times\mathbb V)^M\to\mathbb R$.
Proposition \ref{prop:second_order_asian} compares the expectation of $g$ under the operator splitting
approximation with its expectation under the original process. If $g$ is
chosen as a path dependent payoff function satisfying the assumptions stated below,
\Cref{prop:second_order_asian} shows that the order of operator splitting approximation in
the corresponding option price is $O(h^2)$.
The following backward recursion is introduced to reduce the
multi-step weak error analysis to a sequence of single-step
estimates under which the results in Proposition~\ref{prop:terminal-weak-error}
can be applied.

Let \(g_M=g_M^{\mathrm{split}}=g\), and for \(i=0,1,\ldots,M-1\), we define recursively
\begin{align*}
g_i(x_1,v_1,\ldots,x_i,v_i)
&=
\mathbb{E}\!\left[
g_{i+1}
\left(
x_1,v_1,\ldots,x_i,v_i,
X_{t_{i+1}},V_{t_{i+1}}
\right)
\,\middle|\,
(X_{t_i},V_{t_i})=(x_i,v_i)
\right],
\\
g_i^{\mathrm{split}}(x_1,v_1,\ldots,x_i,v_i)
&=
\mathbb{E}\!\left[
g_{i+1}^{\mathrm{split}}
\left(
x_1,v_1,\ldots,x_i,v_i,
X_{t_{i+1}}^{\mathrm{split}},
V_{t_{i+1}}^{\mathrm{split}}
\right)
\,\middle|\,
\left(
X_{t_i}^{\mathrm{split}},
V_{t_i}^{\mathrm{split}}
\right)
=
(x_i,v_i)
\right].
\end{align*}

\begin{proposition}\label{prop:second_order_asian}
	Suppose $g\!:(\mathbb{X} \times \mathbb{V})^M \to \mathbb{R}$ is a function such that for any $i = 1, 2, \ldots, M$ and $x_1, v_1, \ldots, x_{i-1}, v_{i-1}$, $g_i(x_1, v_1, \ldots, x_{i-1}, v_{i-1}, \cdot, \cdot)$ satisfies the assumption imposed in Proposition \ref{prop:terminal-weak-error} on $f$ with finite bounds as stated in Eqs.~(\ref{eq:commutator-bounds}a, b)
	uniformly for $(x_1, v_1, \ldots, x_{i-1}, v_{i-1} ) \in (\mathbb{X} \times \mathbb{V})^{i-1}$. We then have 
	\begin{equation}
	\begin{aligned}
		&\sup_{(x, v) \in \mathbb{X} \times \mathbb{V}}| \mathbb{E}[g(X_{t_1}^{split}, V_{t_1}^{split}, \ldots, X_{t_M}^{split}, V_{t_M}^{split})|X_0^{split}=x, V_0^{split}=v]  \\
		&\qquad\qquad\qquad\qquad\qquad- \mathbb{E}[g(X_{t_1}, V_{t_1}, \ldots,X_{t_M}, V_{t_M})|X_0=x, V_0=v]| \le Ch^2,
	\end{aligned}
	\label{eq:second_order_asian_weak_error}
	\end{equation}
	for some constant $C>0$ independent of $h$. 
\end{proposition}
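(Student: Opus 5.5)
The plan is a backward induction over the monitoring dates using the recursively defined functions $g_i$ and $g_i^{\mathrm{split}}$. Take $t_0=0<t_1<\cdots<t_M$, each gap a multiple of $h$, say $t_{i+1}-t_i=n_ih$ with $\sum_i n_i h=t_M\le T$. By the Markov property of both the original process and the split chain, the two expectations in Eq.~\eqref{eq:second_order_asian_weak_error} equal $g_0(x,v)$ and $g_0^{\mathrm{split}}(x,v)$ respectively. Writing $\vec y_i=(x_1,v_1,\ldots,x_i,v_i)$, the recursion reads
\[
g_i(\vec y_i)=\mathcal P_{n_ih}\,g_{i+1}(\vec y_i,\cdot,\cdot)(x_i,v_i),\qquad g_i^{\mathrm{split}}(\vec y_i)=(\mathcal P_h^{\mathrm{split}})^{n_i}\,g_{i+1}^{\mathrm{split}}(\vec y_i,\cdot,\cdot)(x_i,v_i).
\]
For $i=0$ the function $g_0$ has no path arguments, and the recursion starts from the initial state $(x,v)$.

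Define $e_i=\sup_{\vec y_i}|g_i^{\mathrm{split}}(\vec y_i)-g_i(\vec y_i)|$, so that $e_M=0$. Add and subtract $(\mathcal P_h^{\mathrm{split}})^{n_i}g_{i+1}(\vec y_i,\cdot,\cdot)$ to get
\[
|g_i^{\mathrm{split}}-g_i|\le \bigl|(\mathcal P_h^{\mathrm{split}})^{n_i}\bigl(g_{i+1}^{\mathrm{split}}-g_{i+1}\bigr)(\vec y_i,\cdot,\cdot)\bigr| + \bigl|\bigl((\mathcal P_h^{\mathrm{split}})^{n_i}-\mathcal P_{n_ih}\bigr)g_{i+1}(\vec y_i,\cdot,\cdot)\bigr|.
\]
The first term is bounded by $e_{i+1}$. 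This uses that $\mathcal P_h^{\mathrm{split}}$ is a composition of Markov operators, hence a contraction in $\|\cdot\|_\infty$; the same fact is used in the proof of Proposition~\ref{prop:terminal-weak-error}.

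For the second term, freeze $\vec y_i$ and apply Proposition~\ref{prop:terminal-weak-error} with horizon $n_ih$ and test function $f=g_{i+1}(\vec y_i,\cdot,\cdot)$. The hypothesis of the present proposition guarantees that $f$ satisfies Eqs.~(\ref{eq:commutator-bounds}a,b), with bounds uniform in $\vec y_i$. One point needs checking: in Proposition~\ref{prop:terminal-weak-error} the constant comes from the telescoping argument, where $C=\tfrac16 C_{\mathrm{loc}}\,n_ih$ and $C_{\mathrm{loc}}$ is the supremum of the commutator norms over $s,r,u\in[0,T]$. Since $n_ih\le T$, the suprema are taken over the same range, so the commutator suprema for $f$ dominate $C_{\mathrm{loc}}$. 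Because these suprema are uniform in $\vec y_i$, $C_{\mathrm{loc}}$ can be chosen independent of $\vec y_i$ and $i$. This gives $\sup_{\vec y_i}|\cdot|\le \tfrac16 C_{\mathrm{loc}}(t_{i+1}-t_i)h^2$.

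It follows that $e_i\le e_{i+1}+\tfrac16 C_{\mathrm{loc}}(t_{i+1}-t_i)h^2$. Summing from $i=M-1$ down to $0$ gives $e_0\le \tfrac16 C_{\mathrm{loc}}\,t_Mh^2\le \tfrac16 C_{\mathrm{loc}}Th^2$, which is the claim with $C$ independent of $h$ and even of $M$.

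The main obstacle is the index bookkeeping in the hypothesis. The statement assumes conditions on the exact functions $g_i$, whereas the split functions $g_i^{\mathrm{split}}$ enter only through $e_{i+1}$. Arranging the add-and-subtract as above makes the commutator conditions needed only for the exact $g_{i+1}$, which is what the hypothesis supplies, and no regularity of $g^{\mathrm{split}}$ is required. A secondary concern is measurability and boundedness of each $g_i$ in its frozen arguments, so that the supremum over $\vec y_i$ makes sense. This follows from boundedness of $g$ and the contraction property.
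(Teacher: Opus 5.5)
Your proposal is correct and follows the paper's proof. It uses the same backward recursion on $g_i$ and $g_i^{\mathrm{split}}$ and the same add-and-subtract of the split expectation of the exact $g_{i+1}$: the first term is controlled by the inductive bound via the contraction property, and the second by Proposition~\ref{prop:terminal-weak-error} applied to $g_{i+1}$ with its first $i$ arguments frozen. Your explicit tracking of the constant, which gives $\tfrac16 C_{\mathrm{loc}}\,t_M h^2$ and hence a bound independent of $M$, is a modest sharpening of the paper's $C_i=C_{i+1}'+C_{i+1}$ bookkeeping.
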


\begin{proof}
We prove the second-order bound by backward recursion on the 
index $i$, starting from $i=M$ and proceeding to $i=0$.
First, since \(g_M^{\mathrm{split}}=g_M=g\), we have
\begin{align*}
\sup_{
(x_1,v_1,\ldots,x_M,v_M)
\in
(\mathbb{X}\times\mathbb{V})^M
}
\left|
g_M^{\mathrm{split}}(x_1,v_1,\ldots,x_M,v_M)
-
g_M(x_1,v_1,\ldots,x_M,v_M)
\right|
=0.
\end{align*}
For \(i=0,1,2,\ldots,M-1\), suppose that there exists a constant
\(C_{i+1}>0\), independent of \(h\), such that
\begin{align*}
\sup_{
(x_1,v_1,\ldots,x_{i+1},v_{i+1})
\in
(\mathbb{X}\times\mathbb{V})^{i+1}
}
\Big|
&
g_{i+1}^{\mathrm{split}}
(x_1,v_1,\ldots,x_{i+1},v_{i+1}) -
g_{i+1}
(x_1,v_1,\ldots,x_{i+1},v_{i+1})
\Big|
\leq C_{i+1}h^2.
\end{align*}
Consider the backward recursion, where
{
	\allowdisplaybreaks
	\begin{align*}
&
\sup_{
(x_1,v_1,\ldots,x_i,v_i)
\in
(\mathbb{X}\times\mathbb{V})^i
}
\left|
g_i^{\mathrm{split}}(x_1,v_1,\ldots,x_i,v_i)
-
g_i(x_1,v_1,\ldots,x_i,v_i)
\right|
\nonumber\\
={}&
\sup_{
(x_1,v_1,\ldots,x_i,v_i)
\in
(\mathbb{X}\times\mathbb{V})^i
}
\Bigg|
\mathbb{E}\!\left[
g_{i+1}^{\mathrm{split}}
\left(
x_1,v_1,\ldots,x_i,v_i,
X_{t_{i+1}}^{\mathrm{split}},
V_{t_{i+1}}^{\mathrm{split}}
\right)
\right.
\left.
\middle|\,
\left(
X_{t_i}^{\mathrm{split}},
V_{t_i}^{\mathrm{split}}
\right)
=(x_i,v_i)
\right]
\nonumber\\
&\hspace{1.2cm}
-
\mathbb{E}\!\left[
g_{i+1}
\left(
x_1,v_1,\ldots,x_i,v_i,
X_{t_{i+1}},
V_{t_{i+1}}
\right)
\,\middle|\,
(X_{t_i},V_{t_i})=(x_i,v_i)
\right]
\Bigg|
\nonumber\\
\leq{}&
\sup_{
(x_1,v_1,\ldots,x_i,v_i)
\in
(\mathbb{X}\times\mathbb{V})^i
}
\Bigg|
\mathbb{E}\!\left[
g_{i+1}
\left(
x_1,v_1,\ldots,x_i,v_i,
X_{t_{i+1}}^{\mathrm{split}},
V_{t_{i+1}}^{\mathrm{split}}
\right)
\right.
\left.
\middle|\,
\left(
X_{t_i}^{\mathrm{split}},
V_{t_i}^{\mathrm{split}}
\right)
=(x_i,v_i)
\right]
\nonumber\\
&\hspace{1.2cm}
-
\mathbb{E}\!\left[
g_{i+1}
\left(
x_1,v_1,\ldots,x_i,v_i,
X_{t_{i+1}},
V_{t_{i+1}}
\right)
\,\middle|\,
(X_{t_i},V_{t_i})=(x_i,v_i)
\right]
\Bigg|
+
C_{i+1}h^2
\nonumber\\
\leq{}&
C_{i+1}'h^2+C_{i+1}h^2
=
C_i h^2,
\end{align*}
}

\noindent 
where the result in Proposition~\ref{prop:terminal-weak-error} is used in the last inequality,
\(C_{i+1}'>0\) is a constant independent of \(h\), and
$C_i=C_{i+1}'+C_{i+1}.$ Backward recursion implies that the error bound holds for \(i=0\), that is, 
$\left|g_0^{\mathrm{split}}-g_0\right|
\leq Ch^2.$ This is essentially the result in Eq.~\eqref{eq:second_order_asian_weak_error}.
\end{proof}

	 \section{Numerical experiments}\label{s:num}\noindent
	 In this section, we present the numerical experiments that were performed to 
	 illustrate both efficiency and robustness of our second-order operator 
	 splitting schemes under the Heston model, 4/2 model, HW model, 
	 lifted Heston model, and BN-S model. 

	In our numerical tests, unless otherwise specified, we run $N = 10^5$ paths for each simulation run and repeated the process $m = 50$ times. For each simulation run, we compute the mean of the option prices and the corresponding CPU time, resulting in $m$ estimators for the option price and $m$ CPU time values. From these $m$ samples, we calculate the average option price and average CPU time. We then compute the root mean square error (RMSE) by
	 \begin{equation}
		\text{RMSE}=\sqrt{\text{StDev}^2+\text{Bias}^2},
	 \end{equation}
	where $\text{Bias} = \text{simulation value} - \text{benchmark value}$ and $\text{StDev}$ denotes the standard deviation of the simulation estimator.

	\subsection{Heston model, 4/2 model and Hull-White model}\label{subsection:heston_4_2_hull_white}\noindent

	\noindent We first show the numerical results for the Heston type models, including the Heston model, 4/2 model and lifted Heston model. 
	We compare our second-order operator splitting scheme with several established methods, including the Euler discretization scheme, trapezoidal rule, and Hilbert interpolation scheme in \citet{Zeng2023}. 

We consider pricing of the monthly monitored European arithmetic Asian call option with one-year maturity under the Heston model, where the option payoff depends on the discrete average of the asset prices. Specifically, we take the average of the prices observed at the end of each month alongside the initial price $S_0$. This results in a total of 13 prices used for calculating the average, which is $\frac{S_0 + S_1 + \dots + S_{12}}{13}$. 
In Figure~\ref{fig:heston_asian}, we show the plots of RMSE versus CPU time for pricing the Asian option using various simulation schemes. 
 The parameter values of the Asian call option are taken from \citet{Zeng2023}: $S_0=100,\,V_0=0.010201,\,\theta=0.019,\,\sigma=0.61,\,\rho=-0.7,\,k=6.21,\,T=1,\,r=0.0319\text{ and strike }K=100.$ The benchmark option value is 3.5665, obtained
 by \cite{Zeng2023} using very large number of simulation paths.
 From the plots, we conclude that the second-order operator splitting scheme is the most efficient one in terms of accuracy-speed tradeoff when pricing
 the Asian call option under the Heston model. The trapezoidal scheme shows relatively better performance when compared with the Hilbert 
 interpolation scheme. The Euler scheme exhibits the weakest performance when compared with the other three simulation schemes. 

	\begin{figure}[H]
		\centering
		\includegraphics[width=0.6\linewidth]{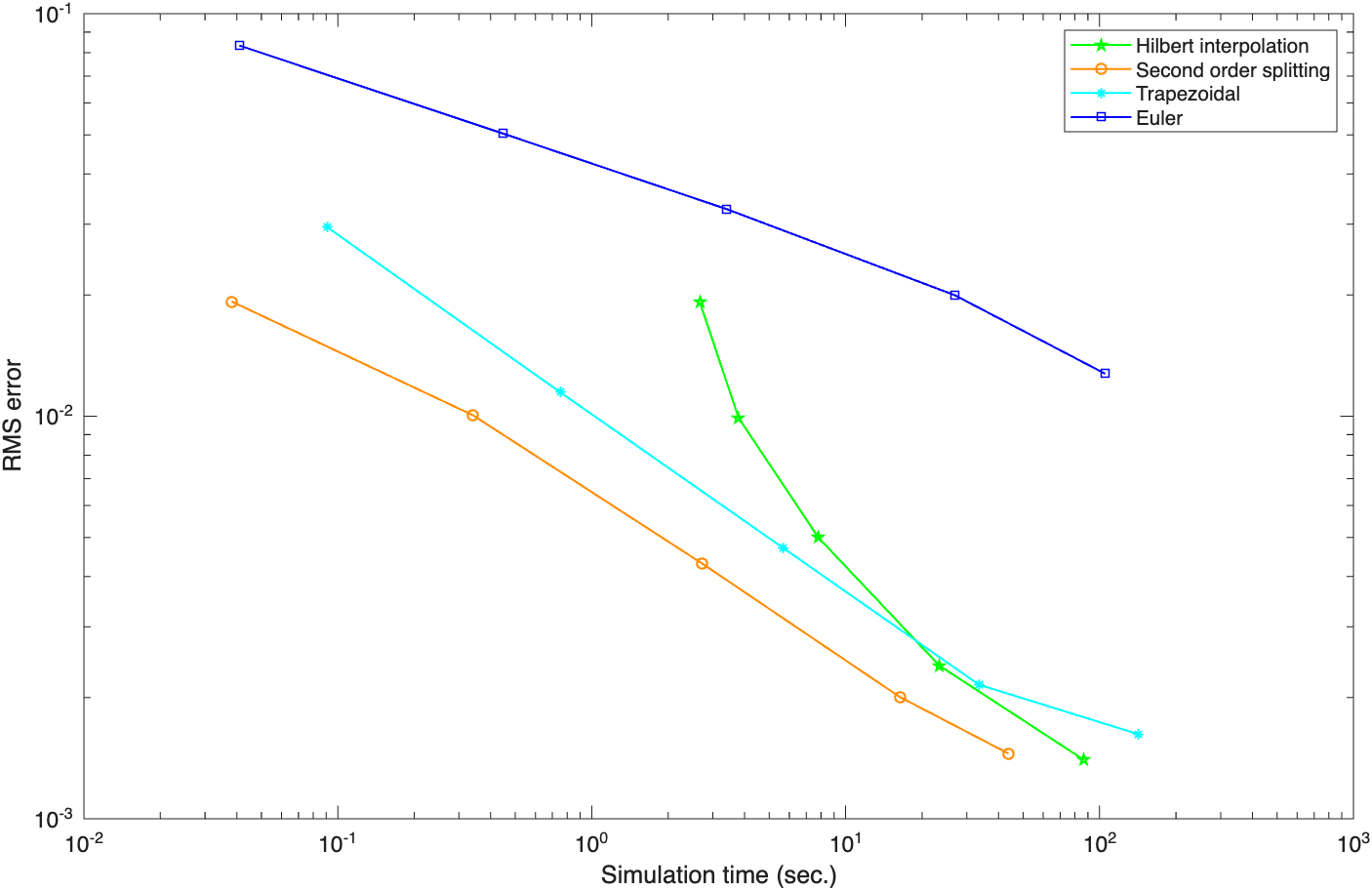}
		\caption{Plots of RMSE versus CPU time for various simulation schemes for pricing monthly monitored arithmetic Asian call option under the Heston model. The sample sizes and time steps for the second-order operator splitting scheme and trapezoidal rule are $(5*10^4,1/12)$, $(2*10^5,1/24)$, $(8*10^5,1/48)$, $(32*10^5,1/72)$ and $(64*10^5,1/96)$, and those of the Euler scheme are $(5*10^4,1/96)$, $(2*10^5,1/192)$, $(8*10^5,1/384)$, $(32*10^5,1/768)$ and $(64*10^5,1/1536)$.}
		\label{fig:heston_asian}
	\end{figure}

	In Figure~\ref{fig:32_barrier}, we show the plots of RMSE versus CPU time for pricing the monthly monitored European knock-out barrier put option under the 3/2 model.
	 The parameter values of the barrier put option are taken from \citet{Zeng2023}: $S_0=100,\,V_0=16.6597,\,\theta=19.3038,\,\sigma=-8.56,\,\rho=-0.99,\,k=4.9790,\,T=1,\,r=0.0048\text{ and strike }K=100.$ The benchmark option value is 1.9580.
	 Again, the second-order operator splitting scheme is shown to be more efficient than the Hilbert interpolation scheme and Euler 
	 scheme by a wide margin. Under the scenario of taking larger number of simulation paths and finer time steps, the Hilbert interpolation
	 scheme shows significant drop in RMSE, surpassing the performance of the Euler scheme. 
	In conclusion, these plots for the RMSE versus CPU time of the path dependent barrier put option reveals that the second-order operator splitting scheme shows better 
	performance in terms of accuracy-speed tradeoff than the Hilbert interpolation scheme of \citet{Zeng2023} under the 3/2 model.\\
	
	\begin{figure}[H]
	\centering
	\includegraphics[width=0.6\linewidth]{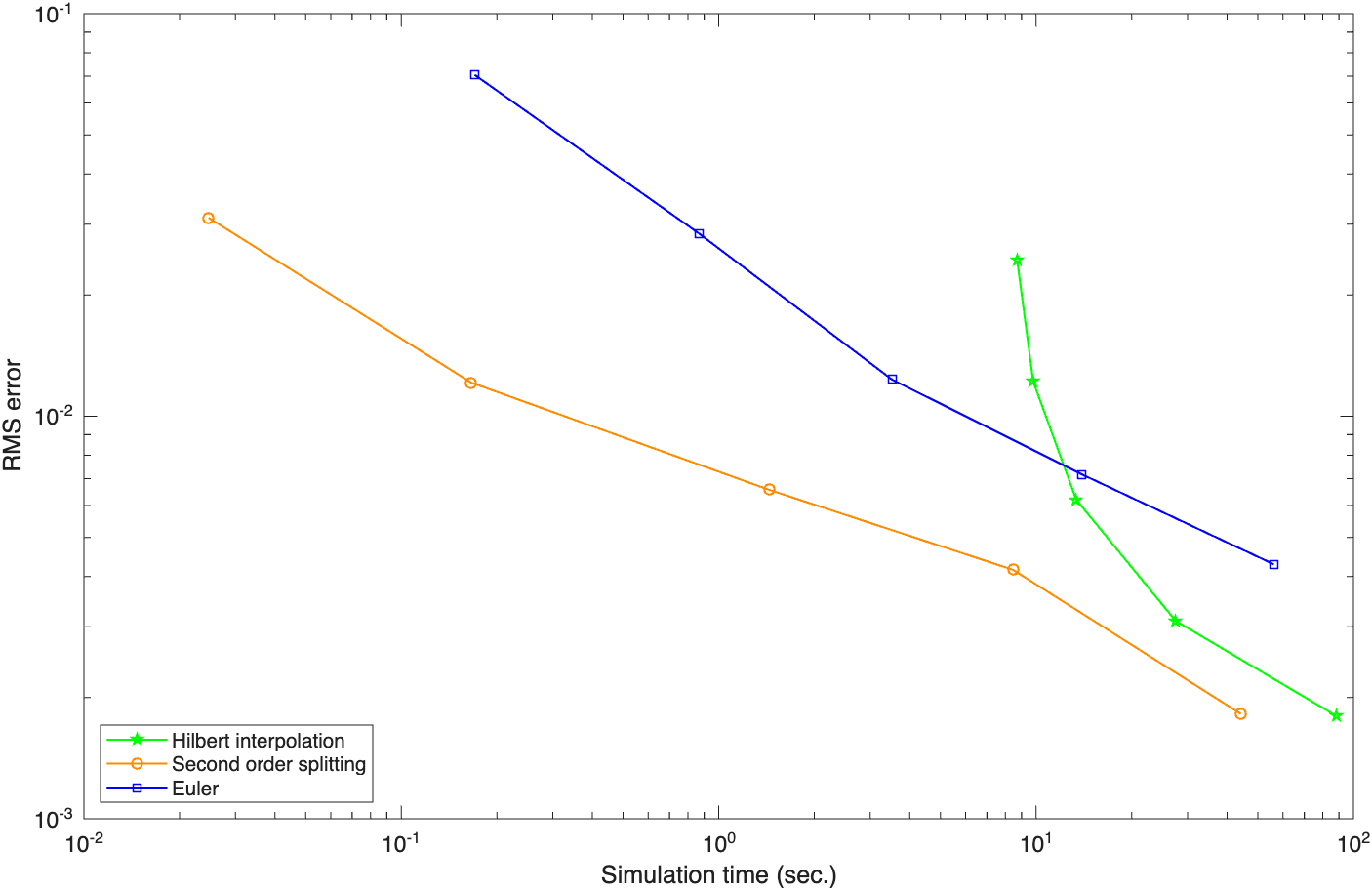}
	\caption{Plots of RMSE versus CPU time for various simulation schemes for pricing monthly monitored knock-out barrier put option under the 3/2 model. The sample sizes and time steps for the second-order operator splitting scheme are $(5*10^4,1/12)$, $(2*10^5,1/24)$, $(8*10^5,1/48)$, $(16*10^5,1/72)$ and $(32*10^5,1/96)$, and those of the Euler scheme are $(10^5,1/96)$, $(2*10^5,1/192)$, $(4*10^5,1/384)$, $(8*10^5,1/768)$ and $(16*10^5,1/1536)$.}
	\label{fig:32_barrier}
\end{figure}

	 Next, we compare the performance in terms of accuracy-speed tradeoff of our second-order operator splitting scheme, Euler scheme, trapezoidal scheme and Algorithm~2 
	 in \citet{brigone2024104861} for pricing European call option under the HW model. The parameter values are taken from \citet{brigone2024104861}: $S_0=100,\,V_0=0.04,\,\mu=0.2,\,\xi=2,\,\rho=-0.2,\,T=1,\,r=0.02,\,\text{strike }K=100$, corresponding to set B for Brignone-Gonzato Algorithm~2. The benchmark option value is 8.0361.
	The plots of RMSE versus CPU time for various simulation schemes are shown in Figure~\ref{fig:hull_white_rms_set_B}. We observe that both the trapezoidal rule and second-order operator splitting scheme perform much better than Algorithm~2 in \citet{brigone2024104861},
	while the second-order operator splitting scheme is slightly better than the trapezoidal rule. For the same level of RMSE, 
	the CPU time required for the second-order operator splitting scheme is the lowest among all simulation schemes. 
	Unfortunately, the sophisticated Brignone-Gonzato algorithm even does not perform better than the Euler scheme. 
	\begin{figure}[H]
		\centering
		\includegraphics[width=0.6\linewidth]{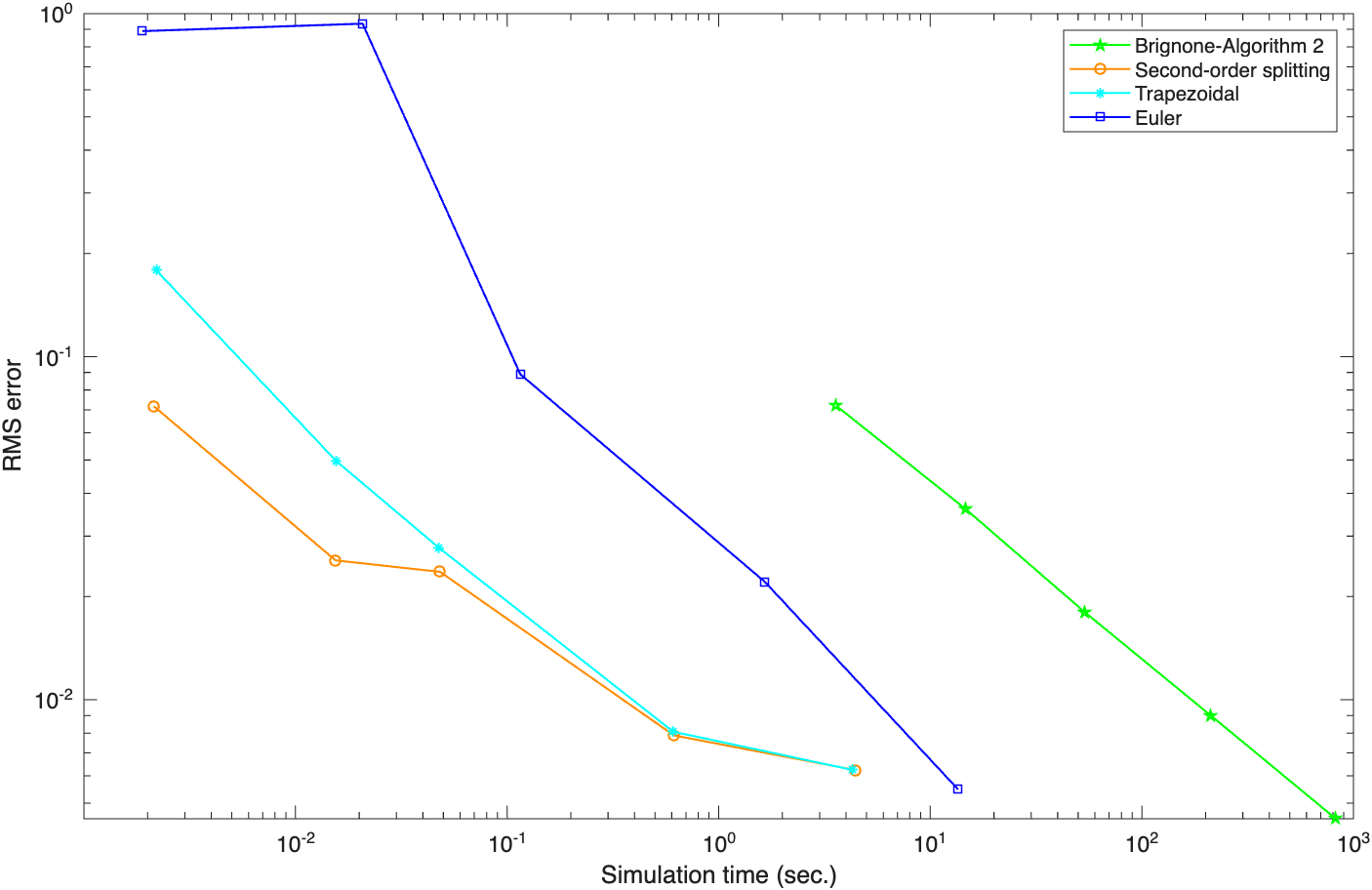}
		\caption{Plots of RMSE versus CPU time for various simulation schemes of pricing European call option for the Hull-White model under parameter set B in \citet{brigone2024104861}. The sample sizes and size of time steps for the second-order operator splitting scheme and trapezoidal rule are $(10^5,1)$, $(4*10^5,1/2)$, $(8*10^5,1/4)$, $(32*10^5,1/16)$ and $(64*10^5,1/64)$, and those of the Euler scheme are $(10^5,1)$, $(4*10^5,1/4)$, $(8*10^5,1/16)$, $(32*10^5,1/64)$ and $(64*10^5,1/256)$.}
		\label{fig:hull_white_rms_set_B}
	\end{figure}
	
	\noindent
	
	\subsection{Lifted Heston model}
	\label{subsection:liftedHeston}
	\noindent Among the simulation schemes for pricing options under the lifted Heston model, we compare our second-order operator splitting scheme with the weak simulation scheme proposed by \citet{bayer2024efficient}, and the drift-implicit Euler scheme. The formulation of the drift-implicit Euler scheme as recommended by \citet{bayer2024efficient} is given by
	\begin{equation}\label{eq:euler_lifted_implicit}
		\begin{cases}
			X_{t+h}=X_t-\frac{(V^N_t)^{+} h}{2}+\sqrt{(V^N_t)^{+}}(\rho \Delta W_t^h+\sqrt{1-\rho^2} \Delta W_t^{\perp,h}),\\
			V^{(i)}_{t+h}=V^{(i)}_t-x_i(V_{t+h}^{(i)}-v_0^i)h+(\theta-\lambda V^{N}_{t+h})h+\nu\sqrt{(V_t^{N})^{+}}\Delta W_t^h,
		\end{cases}
	\end{equation}
	where $(V_t^{N})^{+}=\text{max}(V_t^N,0),$ $\Delta W_t^h=W_{t+h}-W_t$, $\Delta W_t^{\perp,h}=W^{\perp}_{t+h}-W^{\perp}_t$ and $V^N_{t+h}=\sum_{i=1}^Nw_iV^{(i)}_{t+h}$. Here, $W_t$ and $W^{\perp}_t$ are uncorrelated Brownian motions. Hence, the solution of $V^{(i)}_{t+h}$ can be solved by an $N$-dimensional linear system. 
	
	In our numerical experiments, we price out-of-money and at-the-money European call options 
	under the lifted Heston model. 
	The parameter values 
	are taken from \citet{bayer2024efficient}: 
	$\lambda=0.3,\,\nu=0.3,\,\theta=0.02,\,V^N_0=0.02,\,\rho=-0.7,\,S_0=1,\,[x_1,x_2,x_3]=[0.63781,9.6554,681.37]\text{ and }[w_1,w_2,w_3]=[0.66909,3.3694,184.50].$ 
	The initial value of $v^i_0$ is calculated as $\frac{V_0^N}{w_i}$. The benchmark option 
	values for $K = 0.6$ and $K = 1$, obtained from the Fourier-cosine method, are 0.40291 and 
	0.056555, respectively. As revealed from the results for the comparison of biases among various simulation schemes in Table~\ref{t:otm_lifted_bias}, 
	for the same time step $h$, the biases for the second-order operator splitting scheme and weak simulation 
	scheme in \citet{bayer2024efficient} are very small (order of $10^{-5}$) and do not have too much 
	difference. 
	The bias of the drift-implicit Euler scheme appears 
	to increase when the time step $h$ is less than $\frac{1}{x_3}$ 
	(around $\frac1{650}$), and it decreases beyond that threshold. A similar phenomenon is 
	also observed in \citet{bayer2024efficient}. Notably, the bias for out-of-money call 
	options using the drift-implicit Euler scheme is quite large when compared with those of the second-order
	operator splitting scheme and weak simulation scheme.
	The CPU times reported in Table~\ref{t:cpu_lifted_bias} indicate that 
	the second-order operator splitting scheme is more efficient than the weak simulation method. For the same 
	time step, the CPU time required for the drift-implicit Euler scheme may be about one third that of the 
	second-order operator splitting scheme. However, biases of the drift-implicit Euler scheme are much more significant. 
	
	In \Cref{fig:rms_lifted_heston}, we show the plots of RMSE against CPU time 
	for pricing European call options under the lifted Heston model. We find that the second-order operator splitting scheme outperforms the weak 
	simulation scheme for pricing in-the-money and at-the-money call options. Surprisingly, 
	the RMSE for the drift-implicit Euler scheme is seen to increase with CPU time before 
	eventually decrease.
	In conclusion, the second-order operator splitting scheme is the most efficient simulation method for pricing European call options under the lifted Heston model.
	 
    

\begin{table}[H]
\centering
\resizebox{0.55\textwidth}{!}{%
\begin{tabular}{lcccc}
\toprule
 & \textbf{Splitting} & \textbf{Weak} & & \textbf{Drift-Implicit Euler} \\
\cmidrule(lr){2-3} \cmidrule(lr){5-5}
$h$ & Bias & Bias & $h$ & Bias \\
\midrule
1/5     & 0.95  & 0.69 & 1/400   & 248.92 \\
1/10    & 0.61  & 0.97 & 1/650   & 256.84 \\
1/500   & -0.55 & 1.32 & 1/5000  & 111.48 \\
1/800   & -0.21 & 0.90 & 1/10000 & 63.79 \\
\bottomrule
\end{tabular}%
}
\caption{Bias ($\text{in units of } 10^{-5}$) for pricing out-of-the-money European call option ($K = 1.4$) under the lifted Heston model. `Splitting' refers to the second-order operator splitting scheme, and `Weak' refers to the weak simulation scheme in \citet{bayer2024efficient}.}
\label{t:otm_lifted_bias}
\end{table}

\begin{table}[H]
\centering
\resizebox{0.55\textwidth}{!}{%
\begin{tabular}{lcccc}
\toprule
 & \textbf{Splitting} & \textbf{Weak} & & \textbf{Drift-Implicit Euler} \\
\cmidrule(lr){2-3} \cmidrule(lr){5-5}
$h$ & CPU Time & CPU Time & $h$ & CPU Time \\
\midrule
1       & 0.006 & 0.011 & 1       & 0.004  \\
1/5     & 0.020 & 0.044 & 1/400   & 0.557  \\
1/10    & 0.037 & 0.084 & 1/650   & 0.989  \\
1/500   & 1.828 & 4.327 & 1/5000  & 7.614  \\
1/800   & 3.225 & 6.920 & 1/10000 & 15.222 \\
\bottomrule
\end{tabular}%
}
\caption{Comparison of CPU Time (s) among various simulation schemes for computing the at-the-money European call option under the lifted Heston model. `Splitting' refers to the second-order operator splitting scheme, and `Weak' refers to the weak simulation scheme in \citet{bayer2024efficient}.}
\label{t:cpu_lifted_bias}
\end{table}

    

\begin{figure}[H]
    \centering
    \begin{minipage}[t]{0.47\textwidth}
        \centering
        {\small (a) strike $K = 0.6$: in-the-money} \vspace{0.5em} \\ 
        \includegraphics[width=\textwidth]{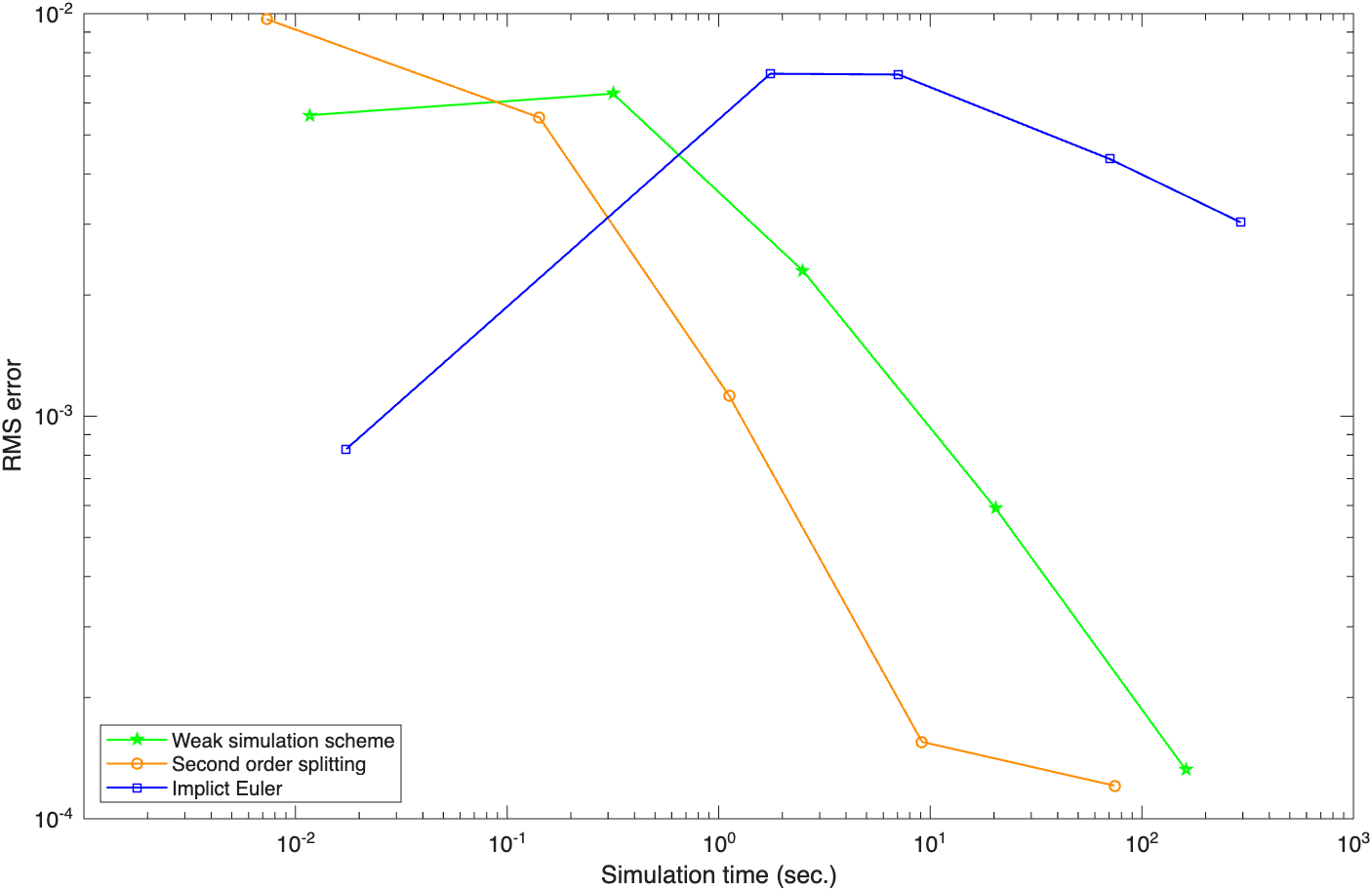}
    \end{minipage}
    \hfill 
    \begin{minipage}[t]{0.47\textwidth}
        \centering
        {\small (b) strike $K = 1$: at-the-money} \vspace{0.5em} \\ 
        \includegraphics[width=\textwidth]{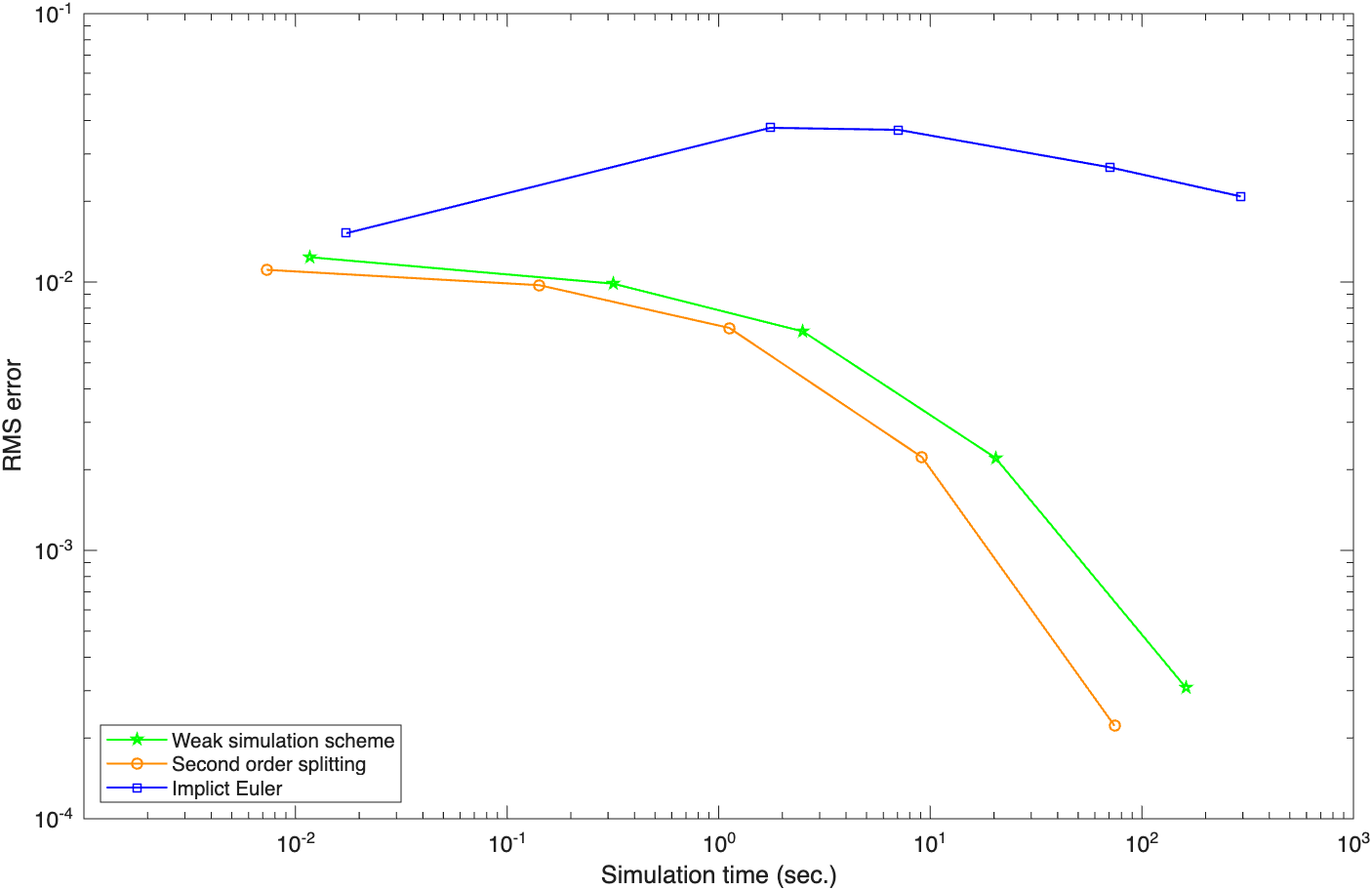}
    \end{minipage}
    
    \caption{Plots of RMSE versus CPU time for European call options under the lifted Heston model with strike prices $K=0.6$ (in-the-money) and $K=1$ (at-the-money). The sample sizes and time steps for the second-order splitting scheme and weak simulation scheme are $(10^5,1)$, $(2*10^5,1/16)$, $(4*10^5,1/64)$, $(8*10^5,1/256)$ and $(16*10^5,1/1024)$, and those of the implicit Euler scheme are $(10^5,1/10)$, $(2*10^5,1/500)$, $(4*10^5,1/1000)$, $(8*10^5,1/5000)$ and $(16*10^5,1/10000)$.}
    \label{fig:rms_lifted_heston}
\end{figure}

	\subsection{Barndorff-Nielsen and Shephard model}
	\label{subsection:BNS}
	\noindent
	In our numerical experiments for the BN-S model, we price the VIX call options and Asian call options
	under the D-OU type BN-S models (Gamma-OU and TS-OU) and OU-D type BN-S models (OU-Gamma and OU-TS). 
	To examine the tradeoff between RMSE and CPU time, we conducted simulation runs using the following configurations: ($N=10^4,\,h=1$), ($N=4*10^4,\,h=1/2$), ($N=16*10^4,\,h=1/4$), ($N=64*10^4,\,h=1/8$) and ($N=256*10^4,\,h=1/16$). These configurations yield 5 values for each of RMSE and CPU time, which are used to generate the 5 points in the corresponding figures.
	The formula for calculating the percentage relative error is given by
	\begin{equation}
		\text{percentage relative error}=\frac{\text{simulation value}-\text{benchmark value}}{\text{benchmark value}} \times 100\%.
	\end{equation} 
	
	 To implement the second-order operator splitting scheme, 
	 it is necessary to have a simulation scheme for the BDLP $Z_t$, 
	 which is relatively straightforward for the OU-D process. This can be easily achieved by simulating 
	 a gamma random variable for the OU Gamma process. For the D-OU processes, 
	 we need the Lévy density of the Lévy process $Z$ to simulate the BDLP. 
	 According to \citet{barndorff-nielsen2001non}, we can calculate the Lévy density of the BDLP 
	 by Eq.~\eqref{eq:levy_tran}. For convenience, we provide the L\'{e}vy densities of the BDLPs for the four BN-S model specifications in the table below.

\begin{table}[H]
    \centering
    \renewcommand{\arraystretch}{0.9} 
    \resizebox{0.8\textwidth}{!}{%
    \begin{tabular}{ll}
    \toprule    
    \textbf{BN-S model specification} & \textbf{BDLP description} \\
    \midrule
    OU - Gamma & $\Gamma(\alpha t , \beta)$ \\
    Gamma - OU & $L_t = \sum_{n=1}^{N_t} x_n, \quad x_n \sim \exp(\beta), \quad N_t \sim \text{Poisson}(\lambda \alpha t)$ \\
    OU - TS & $\mathcal{TS}(\alpha, \beta, \theta t)$ \\
    TS - OU & $\begin{aligned} 
                 &\mathcal{TS}(\alpha, \beta, \lambda \alpha \theta t) + \sum_{n=1}^{\tilde{N}_t}\tilde{x}_n, \\
                 &\tilde{N}_t \sim \text{Poisson}(\lambda \theta \beta^{\alpha}\Gamma(1-\alpha)t), \quad \tilde{x}_n \sim \Gamma(1-\alpha, \beta)
               \end{aligned}$ \\
    \bottomrule 
    \end{tabular}%
    }
\end{table}


\subsubsection{VIX call option}
\noindent For pricing of VIX call options, we use the Fourier-cosine method to obtain the benchmark value, using the knowledge of the characteristic function. The implementation details of the Fourier-cosine method are shown in Proposition \ref{prop:vix_cf}. Here, the model parameters are set as follows: 
$\lambda=0.5, \rho=-10$, risk-free rate $r=0.03$, strike price $K=0.15$, initial volatility $V_0=0.04$, maturity $T=1$. 
    The other parameters in the BDLPs are chosen as follows: 
    (a) OU-Gamma: $(\alpha, \beta) = (2, 2)$; 
    (b) Gamma-OU: $(\alpha, \beta) = (2, 2)$; 
    (c) OU-TS: $(\alpha, \beta, \theta) = (0.5, 2, 0.25)$; 
    (d) TS-OU: $(\alpha, \beta, \theta) = (0.25, 2, 0.25)$.

In the simulation calculations of the moments, we choose not to use the standard 
centered difference scheme for approximating the successive order of derivatives 
of the characteristic function, since numerical differentiation is prone to significant numerical instability and 
precision loss. Instead, we adopt the analytical algorithm proposed 
in \cite{kyriakou2023unified} by numerical inversion of an adaptively modified MGF. 
The core of this methodology lies in the direct application of numerical 
integration of the Laplace transform of the stochastic process, 
thereby circumventing the laborious procedure of performing high-order 
analytical differentiation of the MGF. 
In practical implementation, by recursively adjusting the scaling parameter $\alpha_n$ and 
the pre-determined error control parameter $\gamma$, the computational error 
can be well maintained at an extremely high precision 
level of approximately $10^{-11}$. Specifically, the moments $\mu_n$ are sequentially evaluated via 
 the numerical inversion of the Laplace transform, while $\alpha_n$ is dynamically updated utilizing the preceding 
 lower-order moments through $\alpha_n = (n-1)\mu_{n-2}/\mu_{n-1}$ for $n \ge 3$.

In all cases, the second-order operator splitting scheme demonstrates 
better performance. As revealed from Table~\ref{tab:vix_pricing_all},
the second-order operator splitting scheme mitigates the systematic discretization errors inherent in the Euler scheme, particularly on coarse grids. For instance, in the OU-Gamma process at $h=1$ (Table~\hyperref[tab:ou_gamma]{\ref{tab:vix_pricing_all}(a)}), the Euler method yields a large percentage relative error of 0.7657\%, whereas the second-order operator splitting scheme yields a much lower level of -0.0294\%. Importantly, such accuracy improvement does not incur higher computational cost, with both methods requiring nearly identical CPU time (0.0031s versus 0.0030s). This demonstrates that the second-order operator splitting scheme accurately captures the intrinsic dynamics of the process without relying on the computationally intensive fine grids demanded by the Euler method.

Furthermore, the second-order operator splitting scheme demonstrates better efficiency and robustness when compared to the simulation schemes in \citet{Dassio2019Gamma,Dassio2021TS} and \citeauthor{sabino2020gamma} (\citeyear{sabino2020gamma}, \citeyear{sabino2020tempered}). While the scheme of Qu and his co-authors achieves comparably low relative errors, the second-order operator splitting scheme is shown to be computationally efficient. In the OU-TS process at $h=1/16$, it saves approximately 14\% in execution time (0.121s versus 0.141s). 
Additionally, the schemes of Sabino and his co-authors suffer significant precision degradation on coarse grids. For instance, in the TS-OU process at $h=1$, their scheme exhibits a relative error spike of $-0.3805\%$. In contrast, the second-order operator splitting scheme maintains a stable, low percentage relative error at low level of $-0.0134\%$.
This combination of computational scalability and coarse-grid stability establish that the second-order operator splitting scheme is a highly reliable and efficient choice for pricing VIX call options under the BN-S model.

\begin{table}[H] 
\centering
\footnotesize 
\setlength{\tabcolsep}{3.5pt} 
\renewcommand{\arraystretch}{0.95} 

\phantomsection
\centerline{(a) OU-Gamma}\label{tab:ou_gamma}
\vspace{3pt}
\begin{tabular}{lccccccccc}
\toprule
 & \multicolumn{3}{c}{\textbf{Splitting}} & \multicolumn{3}{c}{\textbf{Qu}} & \multicolumn{3}{c}{\textbf{Euler}} \\
\cmidrule(lr){2-4} \cmidrule(lr){5-7} \cmidrule(lr){8-10}
$h$ & RelErr & StDev & Time & RelErr & StDev & Time & RelErr & StDev & Time \\
\midrule
1    & -0.0294 & 1.84 & 0.31 & 0.0004  & 2.04 & 0.70 & 0.7657 & 2.79 & 0.30 \\
1/2  & -0.0059 & 2.33 & 0.33 & -0.0004 & 2.02 & 0.68 & 0.3187 & 2.52 & 0.32 \\
1/4  & -0.0022 & 2.38 & 1.83 & 0.0003  & 1.92 & 2.32 & 0.1469 & 2.38 & 1.82 \\
1/8  & -0.0003 & 2.33 & 3.01 & 0.0005  & 2.15 & 3.95 & 0.0705 & 2.09 & 3.02 \\
\bottomrule
\end{tabular}

\vspace{1.5em} 

\phantomsection
\centerline{(b) Gamma-OU}\label{tab:gamma_ou}
\vspace{3pt}
\resizebox{\textwidth}{!}{%
\begin{tabular}{lccccccccccccccc}
\toprule
 & \multicolumn{3}{c}{\textbf{Splitting}} & \multicolumn{3}{c}{\textbf{Qu}} & \multicolumn{3}{c}{\textbf{Sabino 1}} & \multicolumn{3}{c}{\textbf{Sabino 4}} & \multicolumn{3}{c}{\textbf{Euler}} \\
\cmidrule(lr){2-4} \cmidrule(lr){5-7} \cmidrule(lr){8-10} \cmidrule(lr){11-13} \cmidrule(lr){14-16}
$h$ & RelErr & StDev & Time & RelErr & StDev & Time & RelErr & StDev & Time & RelErr & StDev & Time & RelErr & StDev & Time \\
\midrule
1    & -0.0009 & 0.646 & 0.63 & -0.0001 & 0.604 & 0.66 & 0.0001 & 0.666 & 1.08 & 0.0000 & 0.555 & 2.17 & 0.0252 & 0.793 & 0.61 \\
1/2  & -0.0002 & 0.503 & 0.86 & -0.0001 & 0.635 & 0.81 & -0.0000 & 0.674 & 1.60 & 0.0001 & 0.630 & 3.10 & 0.0105 & 0.517 & 0.85 \\
1/4  & -0.0001 & 0.629 & 1.10 & -0.0000 & 0.614 & 1.01 & -0.0000 & 0.491 & 2.40 & 0.0000 & 0.644 & 4.61 & 0.0048 & 0.645 & 1.09 \\
1/8  & -0.0000 & 0.632 & 1.55 & 0.0000 & 0.614 & 1.51 & -0.0000 & 0.616 & 3.92 & 0.0000 & 0.551 & 7.26 & 0.0024 & 0.587 & 1.53 \\
\bottomrule
\end{tabular}%
}

\vspace{1.5em}

\phantomsection
\centerline{(c) OU-TS}\label{tab:ou_ts}
\vspace{3pt}
\resizebox{\textwidth}{!}{%
\begin{tabular}{lcccccccccccc}
\toprule
 & \multicolumn{3}{c}{\textbf{Splitting}} & \multicolumn{3}{c}{\textbf{Qu}} & \multicolumn{3}{c}{\textbf{Sabino}} & \multicolumn{3}{c}{\textbf{Euler}} \\
\cmidrule(lr){2-4} \cmidrule(lr){5-7} \cmidrule(lr){8-10} \cmidrule(lr){11-13}
$h$ & RelErr & StDev & Time & RelErr & StDev & Time & RelErr & StDev & Time & RelErr & StDev & Time \\
\midrule
1    & -0.0423 & 2.112 & 2.552 & -0.0017 & 2.079 & 3.329 & 0.0918  & 1.776 & 3.485  & 1.1104 & 2.302 & 2.531 \\
1/2  & -0.0101 & 2.142 & 2.726 & -0.0015 & 1.898 & 3.131 & 0.0231  & 2.119 & 3.206  & 0.4584 & 2.383 & 2.719 \\
1/4  & -0.0034 & 1.589 & 3.949 & -0.0004 & 1.713 & 4.487 & 0.0056  & 1.830 & 4.495  & 0.2099 & 1.632 & 3.912 \\
1/8  & -0.0009 & 1.669 & 6.365 & 0.0012  & 1.848 & 7.394 & -0.0004 & 1.584 & 7.256  & 0.1012 & 2.162 & 6.354 \\
\bottomrule
\end{tabular}%
}

\vspace{1.5em}

\phantomsection
\centerline{(d) TS-OU}\label{tab:ts_ou}
\vspace{3pt}
\resizebox{\textwidth}{!}{%
\begin{tabular}{lcccccccccccc}
\toprule
 & \multicolumn{3}{c}{\textbf{Splitting}} & \multicolumn{3}{c}{\textbf{Qu}} & \multicolumn{3}{c}{\textbf{Sabino}} & \multicolumn{3}{c}{\textbf{Euler}} \\
\cmidrule(lr){2-4} \cmidrule(lr){5-7} \cmidrule(lr){8-10} \cmidrule(lr){11-13}
$h$ & RelErr & StDev & Time & RelErr & StDev & Time & RelErr & StDev & Time & RelErr & StDev & Time \\
\midrule
1    & -0.0134 & 1.476 & 1.238  & 0.0102  & 1.833 & 1.317  & -0.3805 & 1.192 & 1.150  & 0.2569 & 1.999 & 1.138 \\
1/2  & -0.0033 & 1.575 & 1.882  & 0.0380  & 1.508 & 2.146  & -0.2502 & 1.207 & 1.915  & 0.1078 & 1.995 & 1.892 \\
1/4  & -0.0033 & 1.667 & 3.378  & -0.0129 & 1.448 & 3.862  & -0.1492 & 1.520 & 3.427  & 0.0453 & 1.663 & 3.391 \\
1/8  & 0.0008  & 1.419 & 6.423  & -0.0194 & 1.439 & 7.379  & -0.0792 & 1.769 & 6.401  & 0.0288 & 1.967 & 6.398 \\
\bottomrule
\end{tabular}%
}

\vspace{0.8em} 
\caption{
    Comparison of percentage relative error (RelErr in $\%$), standard deviation ($\text{StDev in units of } 10^{-4}$), 
    and CPU time ($\text{Time in units of } 10^{-2}$ s) for pricing VIX call option among 4 specifications of the BN-S model: 
    (a) OU-Gamma, (b) Gamma-OU, (c) OU-TS, and (d) TS-OU. We compare the results from the  
    second-order operator splitting scheme (``Splitting'') and standard Euler discretization scheme 
    (``Euler'') against the exact simulation benchmarks denoted by ``Qu'' and ``Sabino''. Specifically, 
    these benchmarks represent the following methods: in (a), ``Qu'' refers to Algorithm~4.1 in 
    \cite{Dassio2019Gamma}; in (b), ``Qu'' is Algorithm~3.1 in \cite{Dassio2019Gamma}, 
    while ``Sabino 1'' and ``Sabino 4'' refer to Algorithm~1 and Algorithm~4 in \cite{sabino2020gamma}, 
    respectively; in (c), ``Qu'' corresponds to Algorithm~3.1 in \cite{Dassio2021TS} and 
    ``Sabino'' corresponds to Algorithm~2 in \cite{sabino2020tempered}; in (d), ``Qu'' 
    represents Algorithm~4.1 in \cite{Dassio2021TS} and ``Sabino'' represents Algorithm~1 
    in \cite{sabino2020tempered}. 
}
\label{tab:vix_pricing_all}
\end{table}

We further examine the convergence behavior of the second-order operator splitting scheme for the VIX call option using the numerical result from the Fourier-cosine algorithm as the benchmark value. 
For each time step size \(h\), we compute the absolute error $\left|\bar P_h-P_{\mathrm{bench}}\right|$,
where \(\bar P_h\) denotes the Monte Carlo price estimator obtained from a given simulation scheme and \(P_{\mathrm{bench}}\) is the Fourier-cosine benchmark value.


\begin{figure}[H]
    \centering
    \begin{minipage}[t]{0.4\textwidth} 
        \centering
        {\small (a) OU-Gamma} \vspace{0.5em} \\ 
        \includegraphics[width=\textwidth]{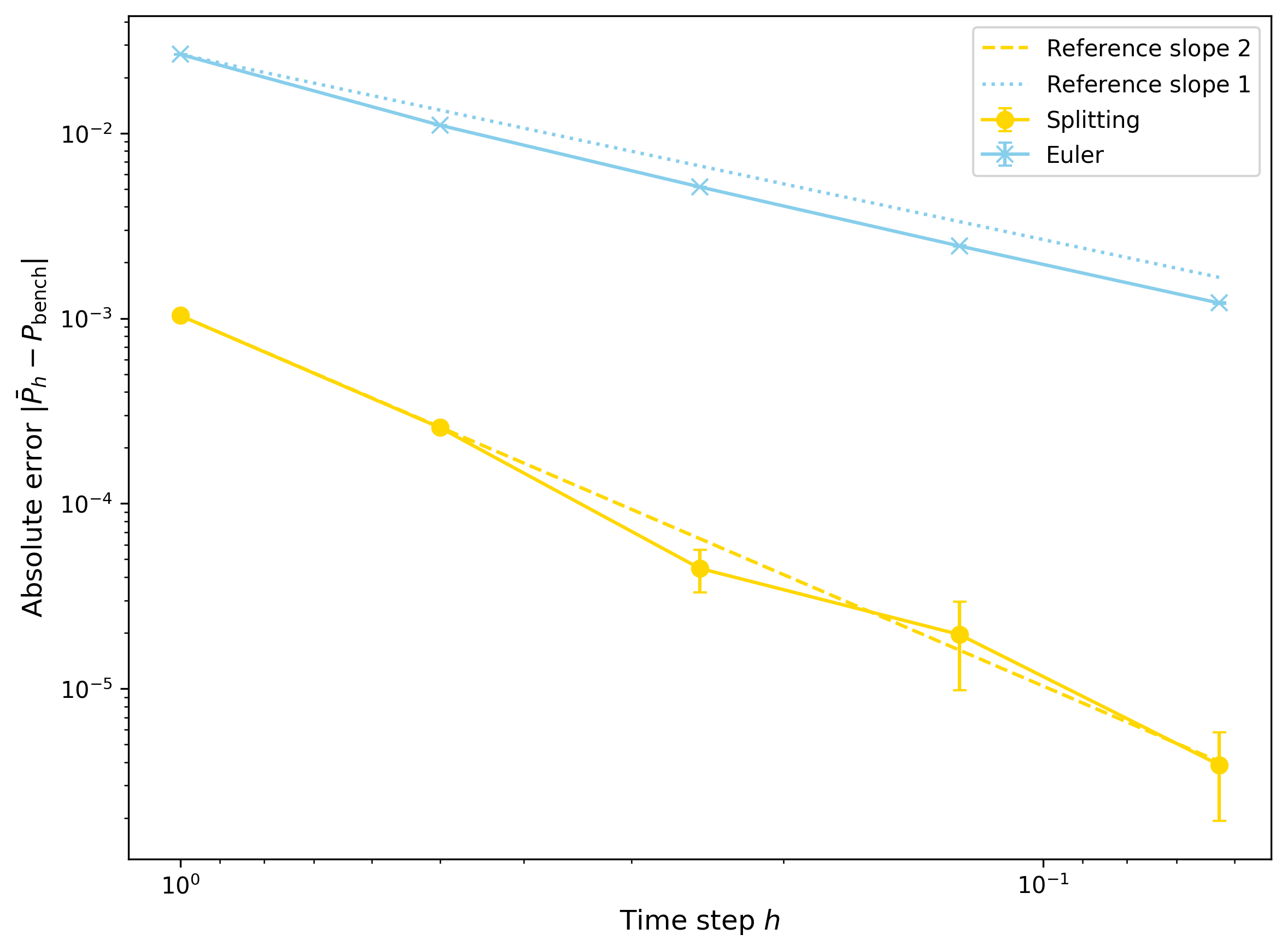}
        \label{fig:convergence_vix_ou_gamma}
    \end{minipage}
    \hfill
    \begin{minipage}[t]{0.4\textwidth}
        \centering
        {\small (b) Gamma-OU} \vspace{0.5em} \\
        \includegraphics[width=\textwidth]{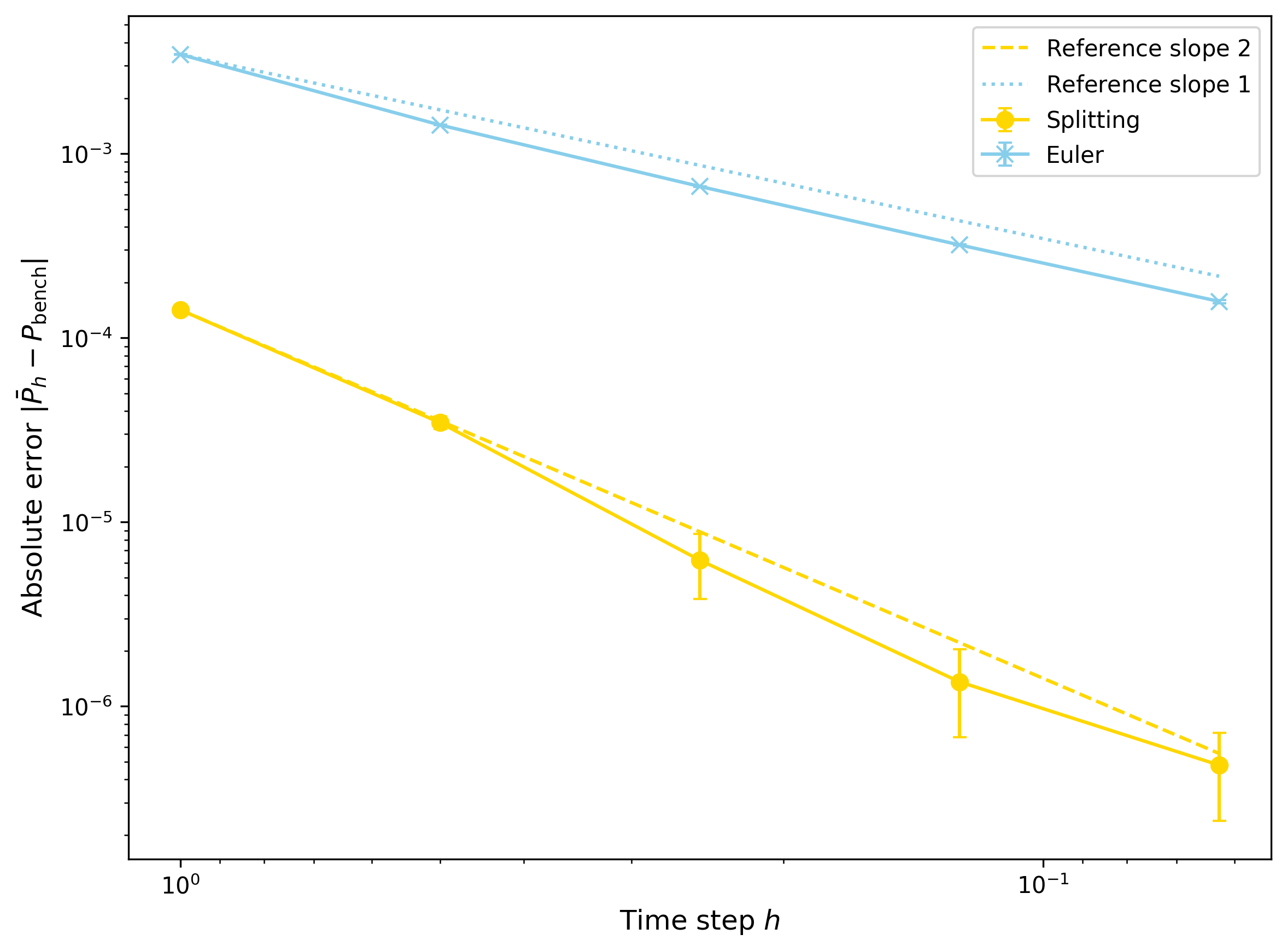}
        \label{fig:convergence_vix_gamma_ou}
    \end{minipage}

    \begin{minipage}[t]{0.4\textwidth}
        \centering
        {\small (c) OU-TS} \vspace{0.5em} \\
        \includegraphics[width=\textwidth]{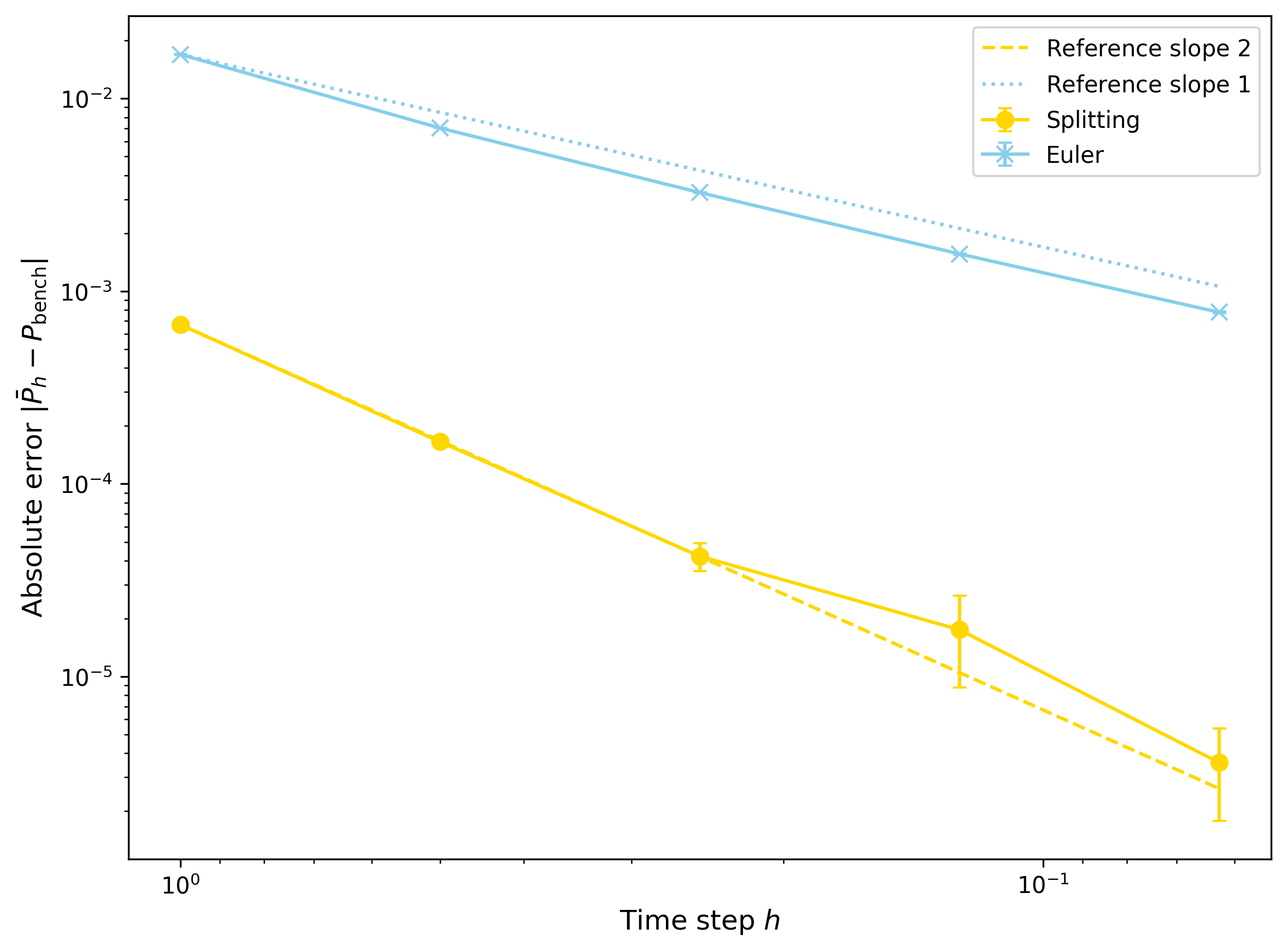}
        \label{fig:convergence_vix_ou_ts}
    \end{minipage}
    \hfill
    \begin{minipage}[t]{0.4\textwidth}
        \centering
        {\small (d) TS-OU} \vspace{0.5em} \\
        \includegraphics[width=\textwidth]{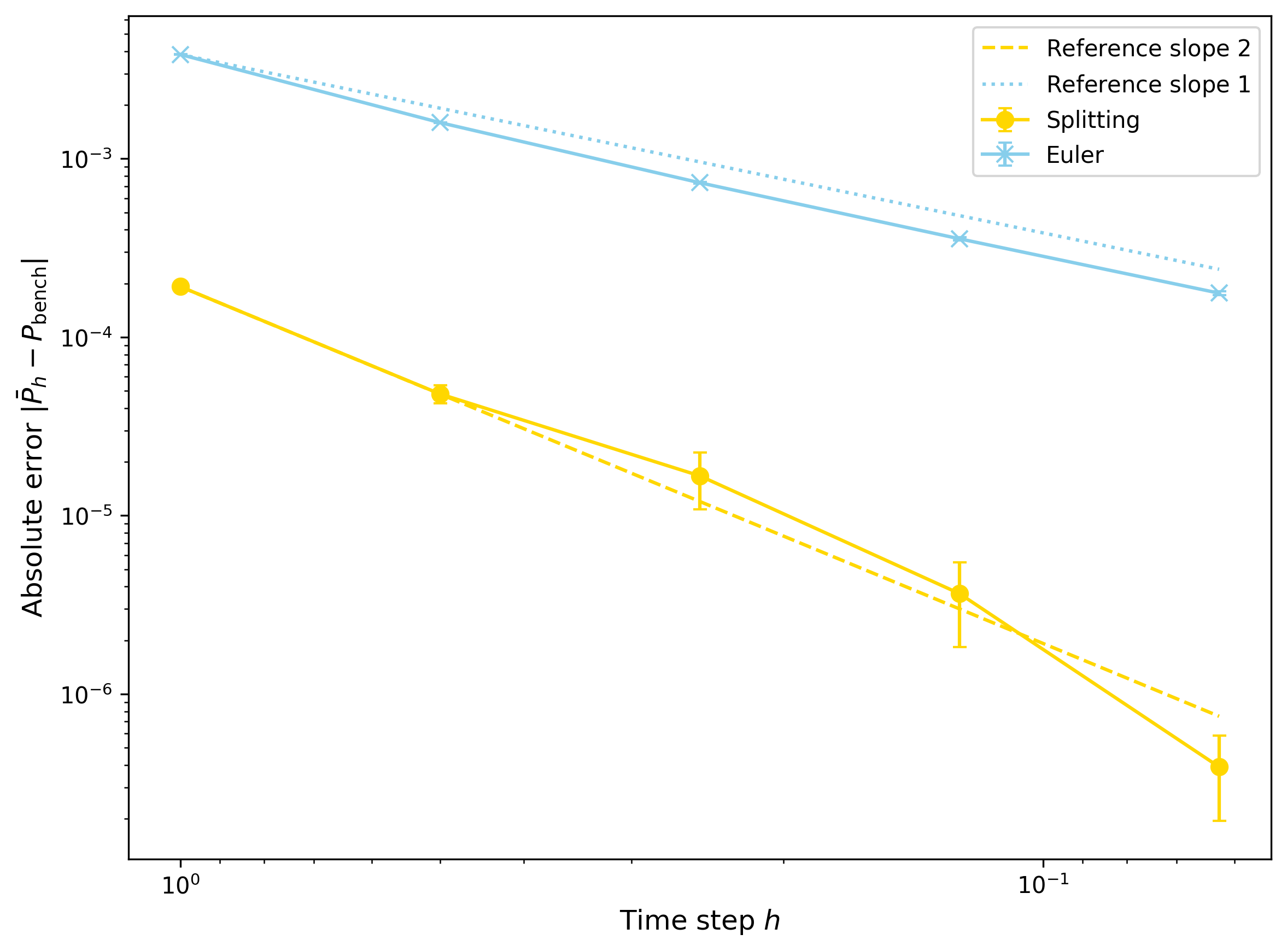}
        \label{fig:convergence_vix_ts_ou}
    \end{minipage}
    
	\caption{Test for order of convergence of operator splitting scheme and Euler scheme for pricing VIX call options under the BN-S model. We plot the absolute error \(|\bar P_h-P_{\mathrm{bench}}|\) against time step $h$, where \(P_{\mathrm{bench}}\) is computed by the Fourier-cosine method.}

	\label{fig:vix-weak-convergence}
\end{figure}

Suppose the weak error behaves asymptotically as
\[
\left|\bar{P}_h-P_{\mathrm{bench}}\right| \approx C h^p,
\]
exhibiting $p$th-order convergence, then its graph against \(h\) has slope \(p\) on a log--log scale. To facilitate a visual
assessment of the convergence order, we include reference lines proportional
to \(h^2\) and \(h\) for the operator splitting scheme and the Euler discretization scheme, respectively.
Each reference line is chosen to pass through the first point in the plot of the corresponding
method. Hence, the reference lines indicate the
decay rate of the absolute error that would be observed if the errors subsequently follows the predicted second-order
and first-order convergence, respectively.

Figure~\ref{fig:vix-weak-convergence} shows the log-log plots of the absolute error against the time step \(h\) for the VIX call option. The numerical plots reveal that the error of the operator splitting scheme exhibits approximately the second-order convergence, while the Euler scheme exhibits the first-order convergence. 

\subsubsection{Pricing Asian call options}


\noindent The exact simulation methods of \citet{Dassio2019Gamma,Dassio2021TS} and \citeauthor{sabino2020gamma} (\citeyear{sabino2020gamma}, \citeyear{sabino2020tempered}) focus on simulating the stochastic evolution 
of the spot volatility but fail to provide the exact conditional distribution of the integrated 
variance $I_t^h$. Therefore, they cannot be used to price Asian option. Consequently, we can only benchmark our second-order operator splitting scheme 
against the standard Euler discretization scheme for pricing discretely monitored European Asian options. 
The benchmark value is based on the numerical result obtained from the Euler scheme with a time step size of $h = \frac{1}{100}$ and $10^8$ simulation paths.
Here, the model parameters are set as follows: $M=4, \lambda=2.5, \rho=-5$, risk-free rate $r=0.02$, strike price $K=1$, initial volatility $V_0=0.04$, maturity $T=1$. The other parameters in BDLPs are calibrated as follows: 
    {(a) OU-Gamma:} $(\alpha, \beta) = (0.1, 15)$; 
    {(b) Gamma-OU:} $(\alpha, \beta) = (0.25, 15)$; 
    {(c) OU-TS:} $(\alpha, \beta, \theta) = (0.25, 15, 0.1)$; 
    {(d) TS-OU:} $(\alpha, \beta, \theta) = (0.25, 15, 0.1)$.

\begin{figure}[H]
    \centering
    \begin{minipage}[t]{0.4\textwidth} 
        \centering
        {\small (a) OU-Gamma} \vspace{0.5em} \\ 
        \includegraphics[width=\textwidth]{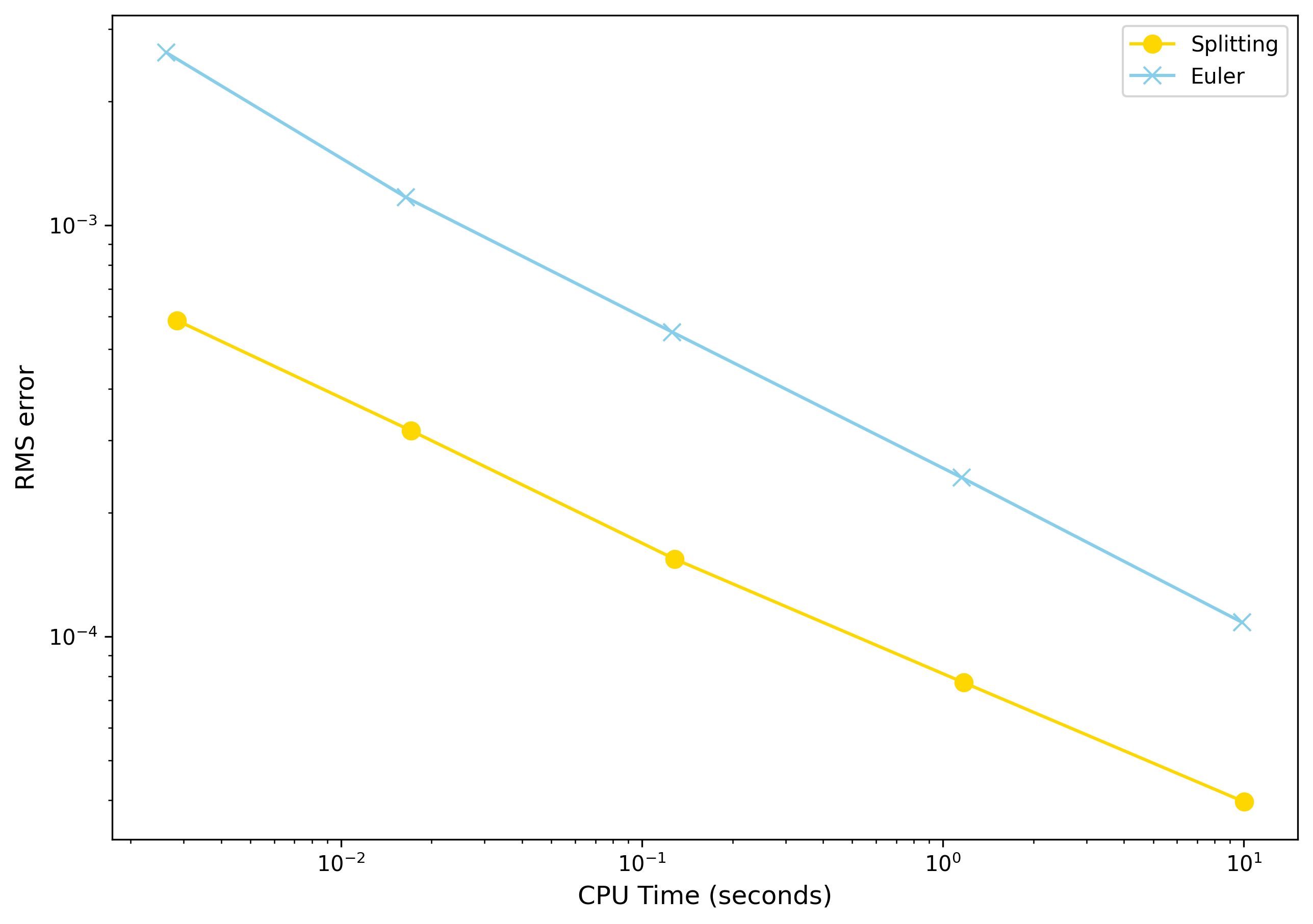}
        \label{fig:rmse_ou_gamma_asian}
    \end{minipage}
    \hfill
    \begin{minipage}[t]{0.4\textwidth}
        \centering
        {\small (b) Gamma-OU} \vspace{0.5em} \\
        \includegraphics[width=\textwidth]{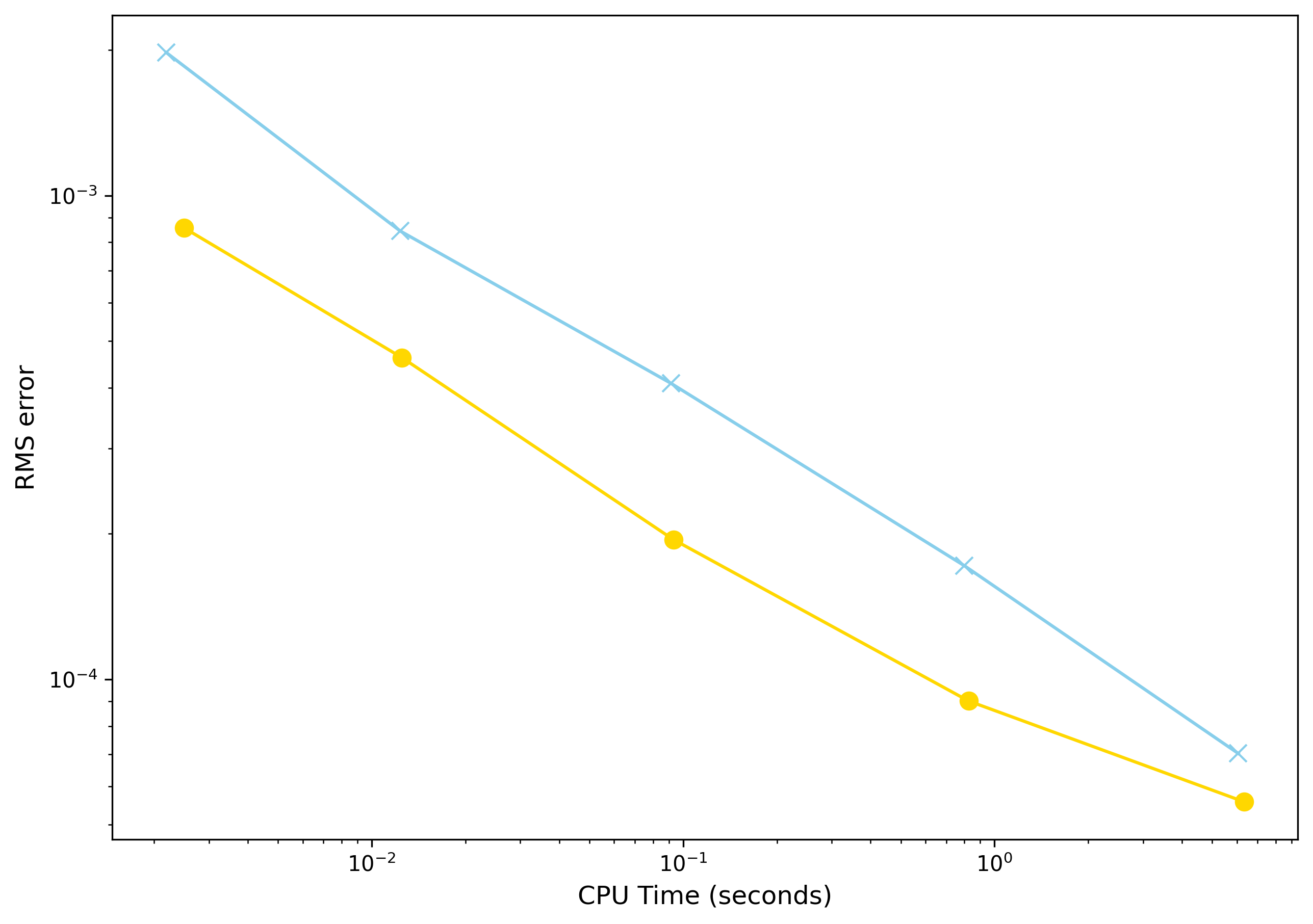}
        \label{fig:rmse_gamma_ou_asian}
    \end{minipage}

    \begin{minipage}[t]{0.4\textwidth}
        \centering
        {\small (c) OU-TS} \vspace{0.5em} \\
        \includegraphics[width=\textwidth]{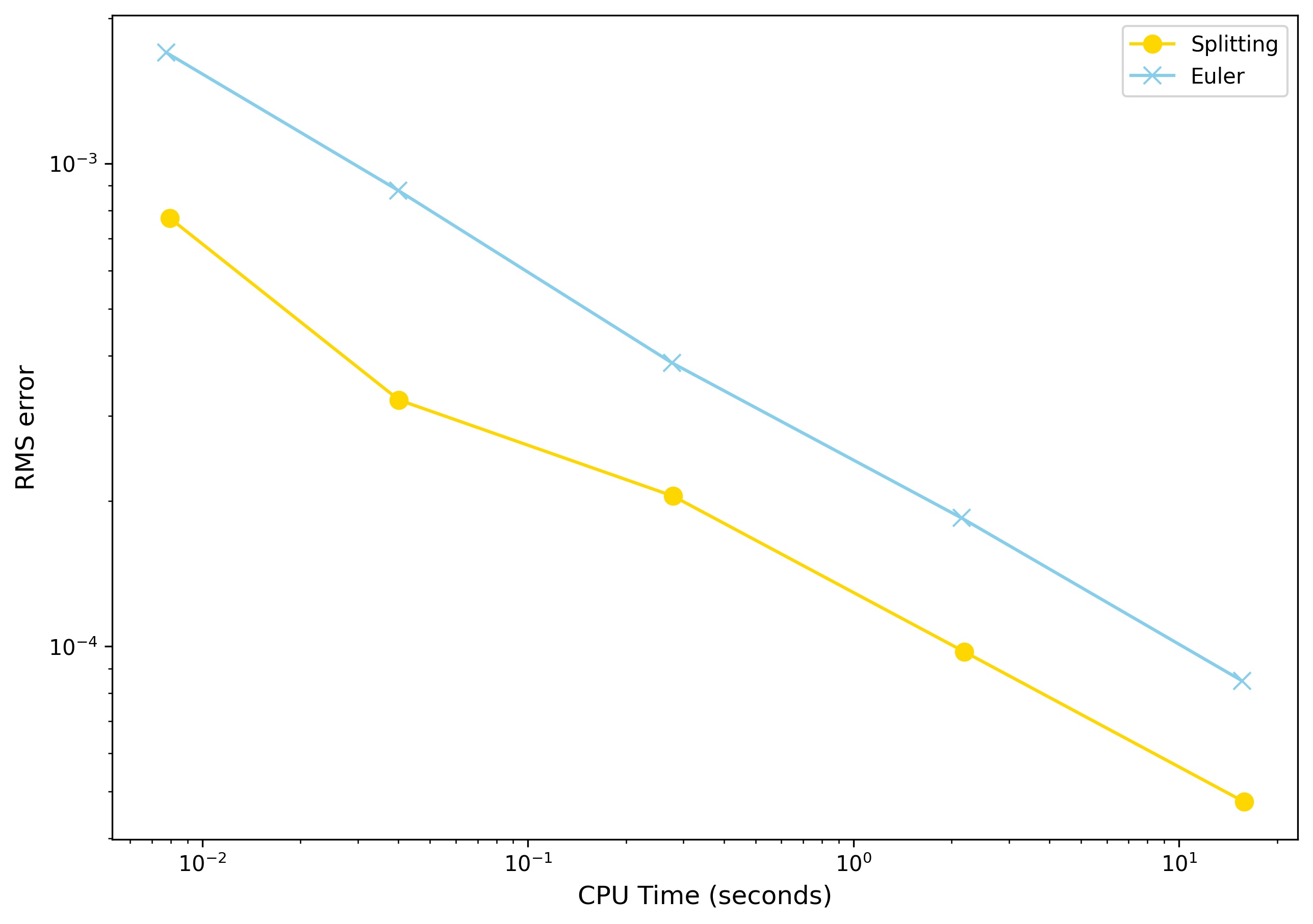}
        \label{fig:rmse_ou_ts_asian}
    \end{minipage}
    \hfill
    \begin{minipage}[t]{0.4\textwidth}
        \centering
        {\small (d) TS-OU} \vspace{0.5em} \\
        \includegraphics[width=\textwidth]{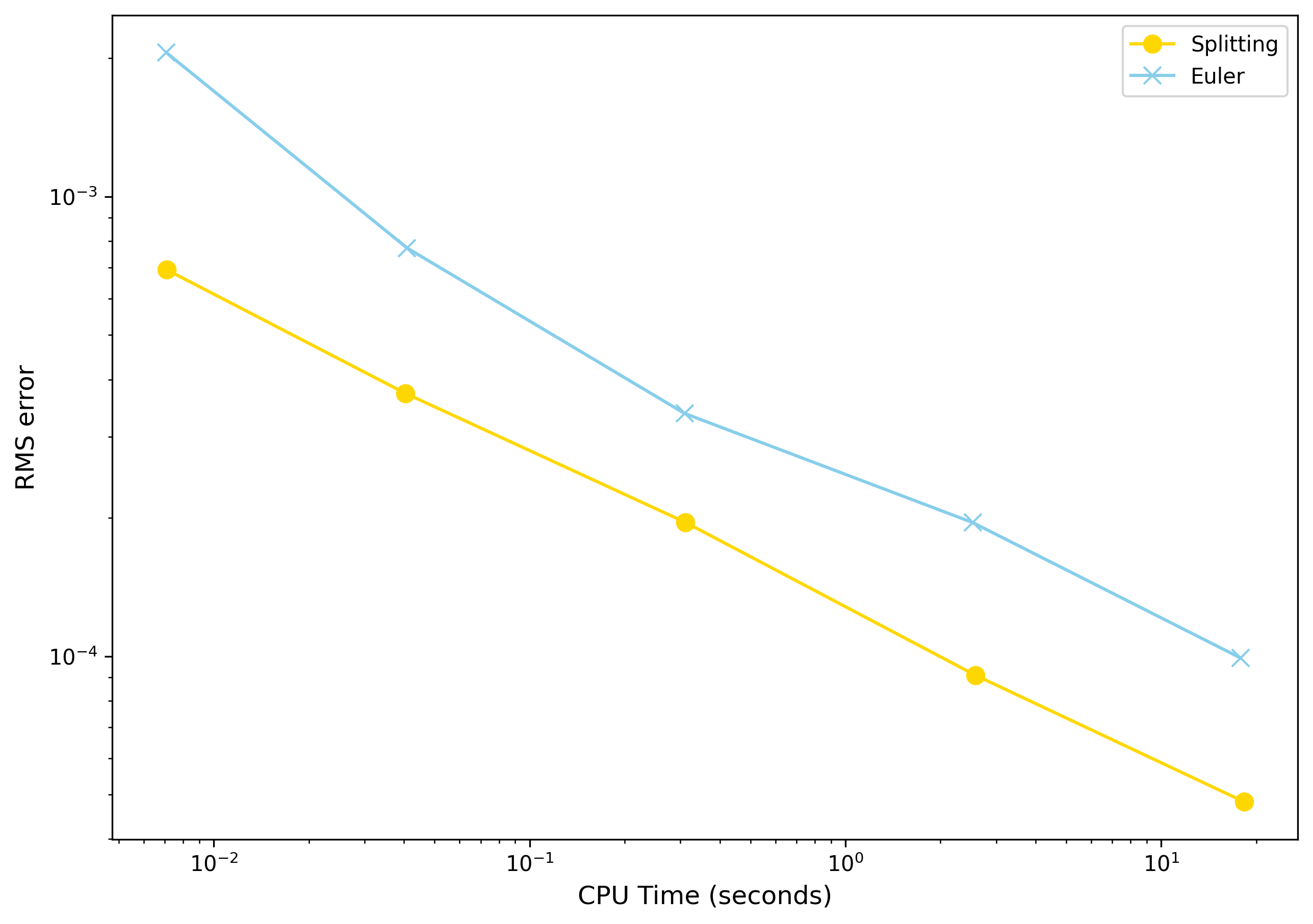}
        \label{fig:rmse_ts_ou_asian}
    \end{minipage}
    
    \caption{Plots of RMSE versus CPU time for the operator splitting scheme and Euler scheme under different specifications of the BN-S model for pricing European Asian call option: (a) OU-Gamma, (b) Gamma-OU, (c) OU-TS, and (d) TS-OU.}
    \label{fig:asian_convergence}
\end{figure}

As illustrated by the plots of RMSE versus CPU time for pricing European Asian call option under different specifications of the BN-S model in \Cref{fig:asian_convergence}, the second-order splitting scheme consistently yields smaller RMSE for the same CPU time. For instance, evaluating the OU-Gamma process at the coarsest time step ($h=1$), both methods require approximately equal amount of execution time (0.0021s). However, the second-order operator splitting scheme produces the RMSE of $5.87 \times 10^{-4}$, which is considerably lower than the RMSE of $2.63 \times 10^{-3}$ observed for the Euler scheme. This indicates that the second-order operator splitting 
scheme performs better in terms of accuracy-speed tradeoff.

We next examine the convergence behavior of the operator splitting scheme for
discretely monitored European Asian call options under the four BN-S specifications considered above. 
For such test of order of convergence, we use a fine-grid operator splitting scheme as the benchmark, 
which is different from the Euler benchmark used in the preceding RMSE comparison. 
Specifically, each monitoring
interval is divided into \(N_{\mathrm{bench}}=240\) substeps, corresponding to the
benchmark time step
$h_{\mathrm{bench}}
=
\frac{T}{M N_{\mathrm{bench}}}.$
To improve the robustness of the convergence analysis, the coarse-grid
approximation and the fine-grid benchmark are generated using the same
underlying random increments. The coarse and fine paths are therefore
constructed from consistent random sources, so that the difference between
their corresponding option prices mainly reflects the error of the
operator splitting approximation.


For a simulation method \(m\), we measure the error by
\[
\varepsilon_m(h)
=
\left|
\mathbb{E}\!\left[
G_h^{m}-G_{h_{\mathrm{bench}}}
\right]
\right|,
\]
where \(G_h^{m}\) denotes the discounted European Asian call option price generated by method \(m\)
with time step \(h\), and \(G_{h_{\mathrm{bench}}}\) denotes the option price generated
by the fine-grid operator splitting benchmark. The reference lines are
constructed in the same manner as in the VIX call option convergence test. 
Figure~\ref{fig:asian-coupled-convergence} shows the log-log plots of the absolute error
against the time step \(h\) for the discretely monitored European Asian call option under the four
BN-S specifications.
The numerical results show that the errors in the operator splitting scheme exhibit
approximately at the predicted second-order convergence, whereas the errors in the Euler scheme exhibit
a slower decay broadly consistent with first-order convergence.
This observed behavior is consistent with the second-order convergence of the operator splitting scheme established
in Section~\ref{sec:error-analysis}.
In conclusion, these results demonstrate that the second-order operator splitting scheme is a highly efficient and robust approach for simulating the complex heavy-tailed jump dynamics under the BN-S model when compared with the standard Euler scheme.

\begin{figure}[H]
    \centering
    \begin{minipage}[t]{0.4\textwidth} 
        \centering
        {\small (a) OU-Gamma} \vspace{0.5em} \\ 
        \includegraphics[width=\textwidth]{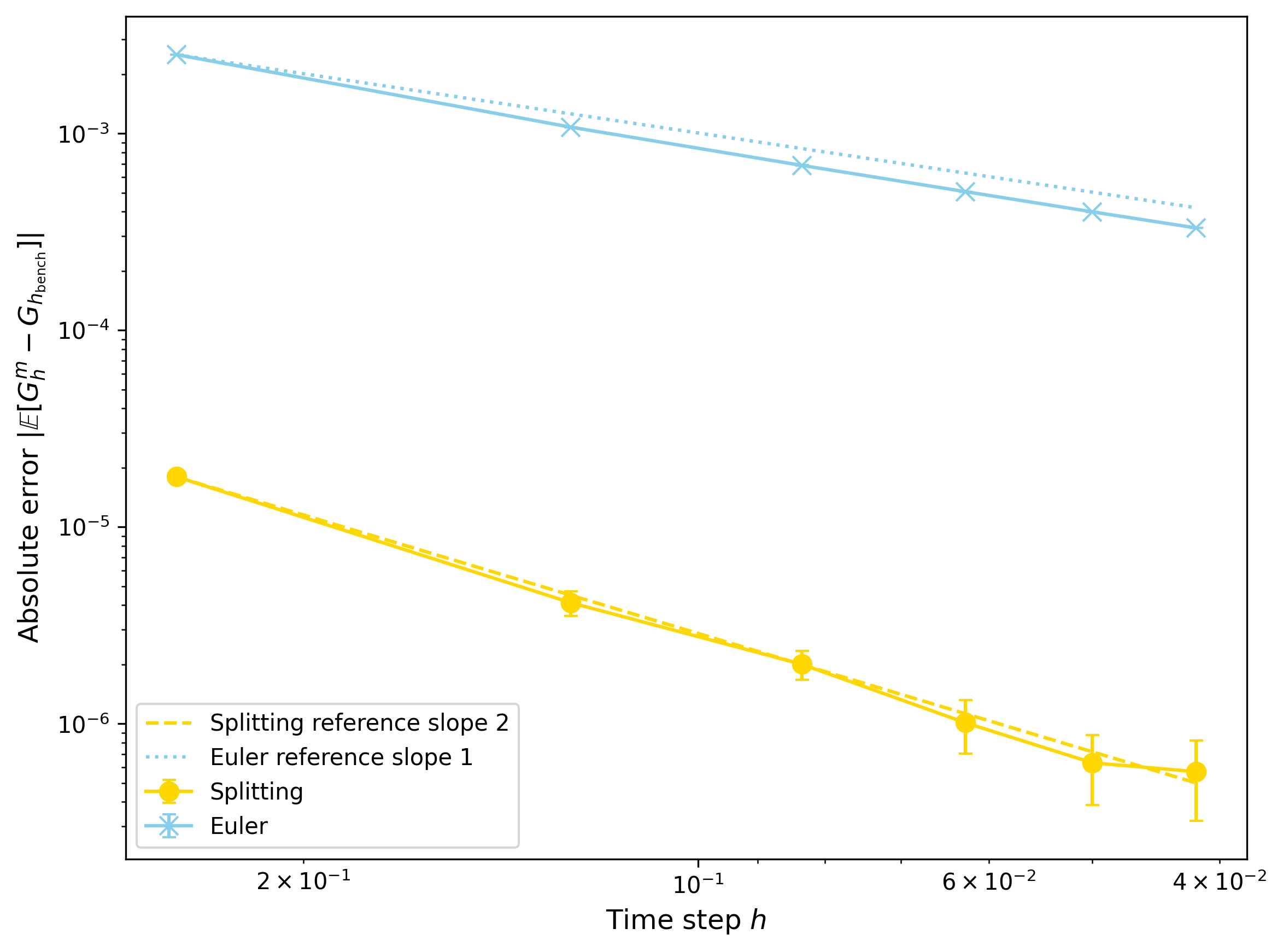}
        \label{fig:convergence_asian_ou_gamma}
    \end{minipage}
    \hfill
    \begin{minipage}[t]{0.4\textwidth}
        \centering
        {\small (b) Gamma-OU} \vspace{0.5em} \\
        \includegraphics[width=\textwidth]{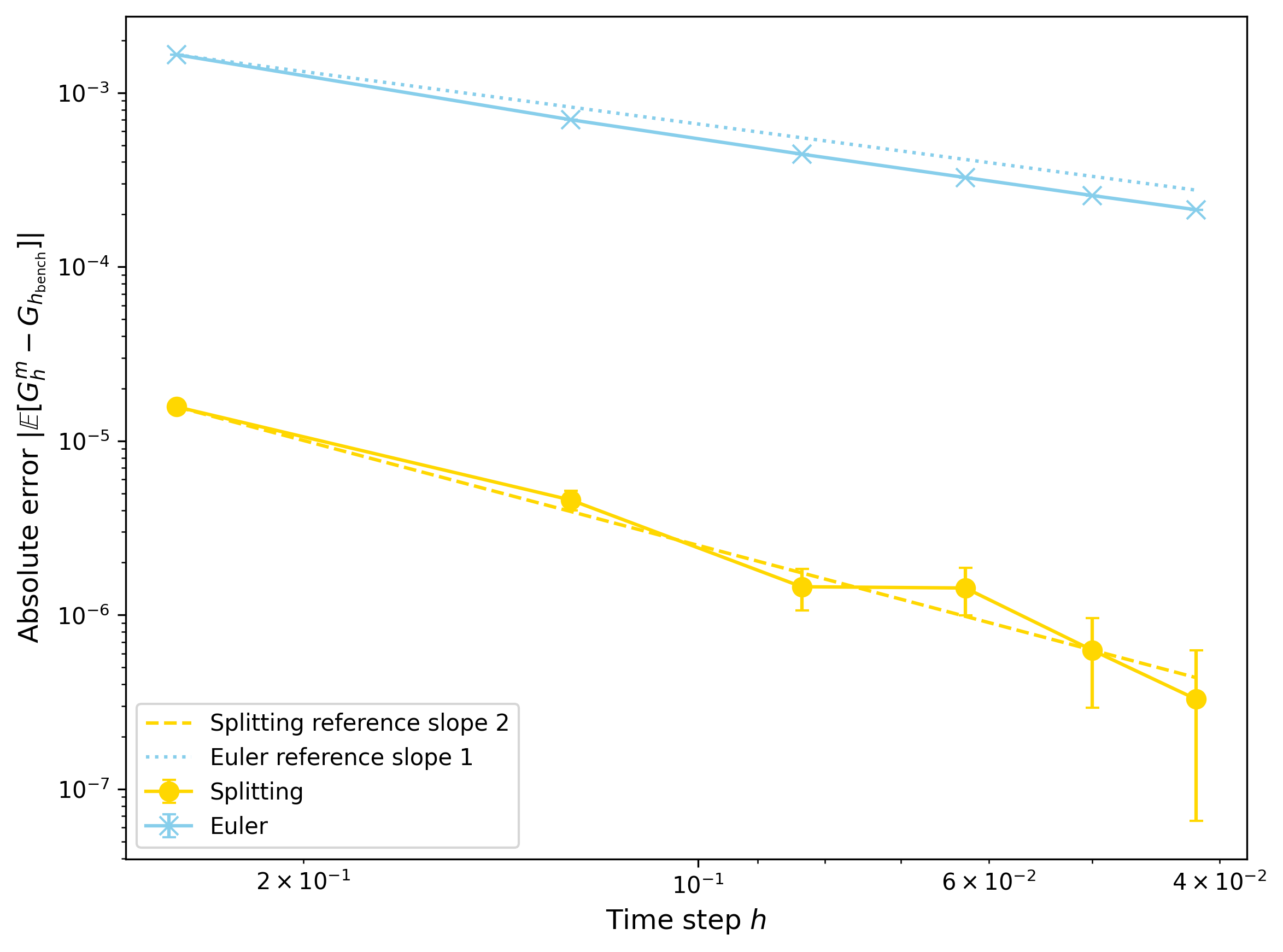}
        \label{fig:convergence_asian_gamma_ou}
    \end{minipage}

    \begin{minipage}[t]{0.4\textwidth}
        \centering
        {\small (c) OU-TS} \vspace{0.5em} \\
        \includegraphics[width=\textwidth]{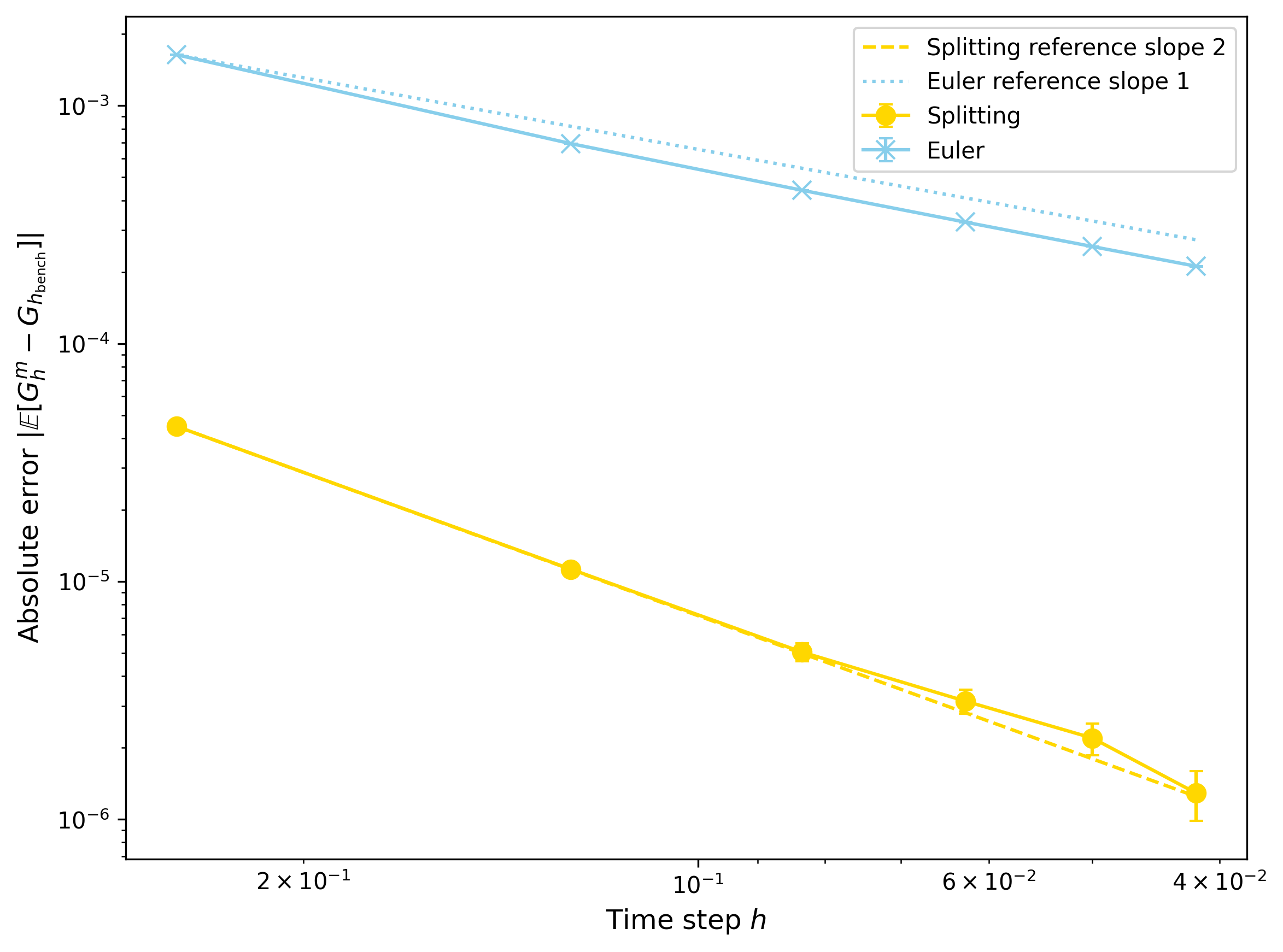}
        \label{fig:convergence_asian_ou_ts}
    \end{minipage}
    \hfill
    \begin{minipage}[t]{0.4\textwidth}
        \centering
        {\small (d) TS-OU} \vspace{0.5em} \\
        \includegraphics[width=\textwidth]{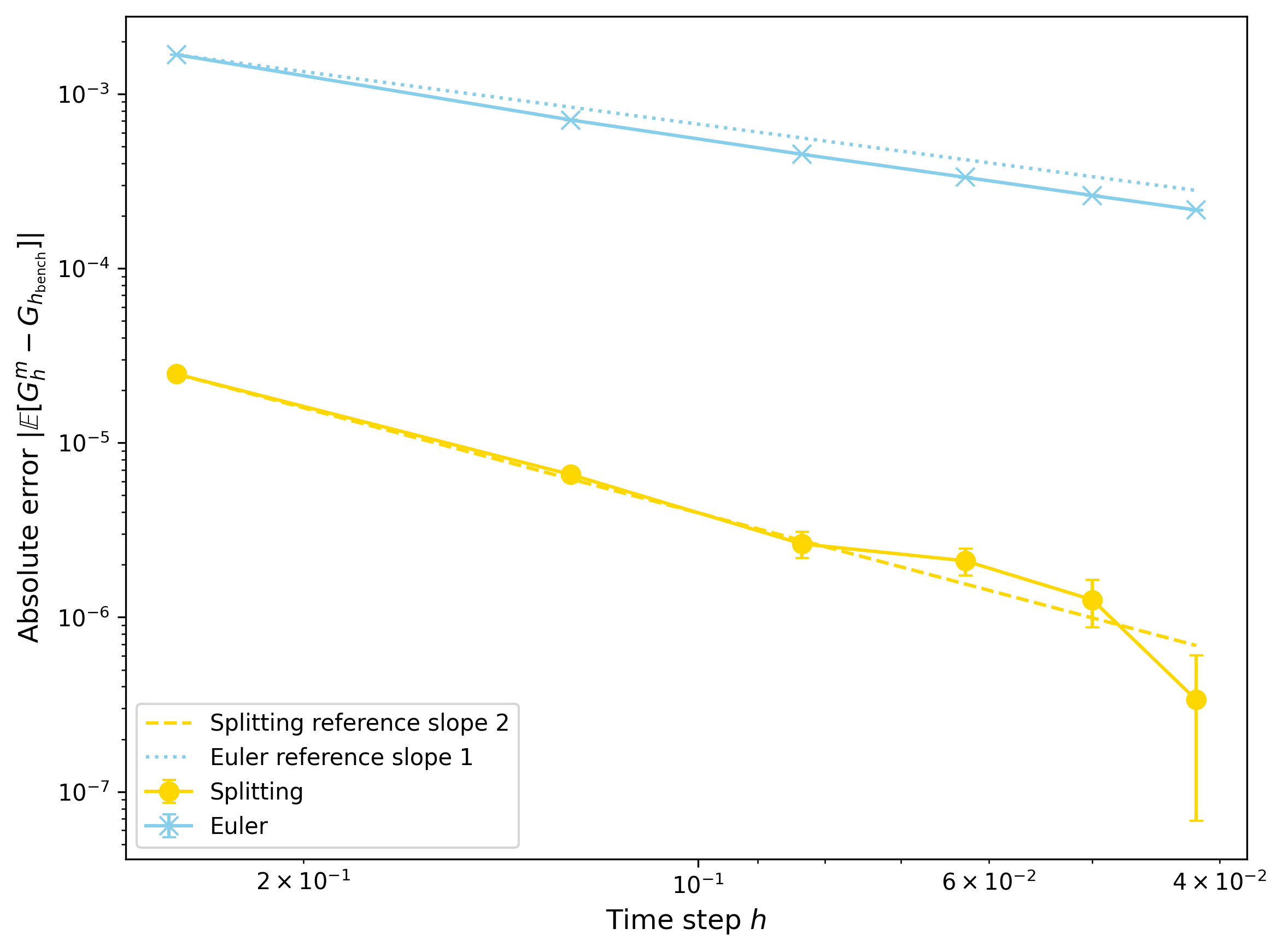}
        \label{fig:convergence_asian_ts_ou}
    \end{minipage}
    
\caption{Test for order of convergence of the operator splitting scheme and Euler scheme for pricing discretely monitored European Asian call options under the BN-S model.
We plot the absolute error
$|\mathbb{E}[G_h-G_{h_{\mathrm{bench}}}]|$ against time step $h$. }
	\label{fig:asian-coupled-convergence}
\end{figure}

\section{Conclusion}\label{s:conclusion}\noindent
There have been continual efforts over the past two decades to develop efficient, accurate, and reliable simulation schemes for pricing options under various stochastic volatility models. The dynamics of a typical stochastic volatility model involve the joint evolution of asset price and variance. The simulation of the asset price normally requires numerical evaluation of integrated variance conditional on terminal variance value as part of the numerical procedure. Earlier simulation schemes employ the path wise Fourier inversion of conditional characteristic function of integrated variance or numerical approximation via moment matched distribution of integrated variance.  

In this paper, we propose a general framework of deriving efficient and accurate simulation schemes for option pricing under most common types of stochastic volatility models. We design ingenious approaches of operator splitting of the coupled system of the dynamic equations of asset price and variance. The operator splitting approach is applicable to the Heston type models, Hull-White model and non-Gaussian Ornstein-Uhlenbeck models (or called Barndorff-Nielsen and Shephard models). Our second-order operator splitting schemes are easily implementable without much programming effort, avoiding sophisticated mathematical framework of deriving various forms of conditional characteristic function or moment matching of integrated variance. Our extensive numerical tests of pricing various types of options reveal the strong competitiveness of the operator splitting schemes in terms of tradeoffs in numerical accuracy versus computational cost when compared with other simulation schemes and the biased Euler scheme. The runtime efficiency can be several hundred times faster.  Our operator splitting approach is particularly successful in pricing path dependent options and VIX options under the Barndorff-Nielsen and Shephard model, where efficient and accurate simulation schemes are not commonly available for various specifications of the Background Driving L\'evy Process. Overall speaking, the operator splitting schemes can be derived systematically without much tedious and sophisticated mathematical derivation. The implementation of the operator splitting schemes typically involves straightforward numerical procedures, avoiding Fourier inversion for each simulation path, moment matching algorithms, or acceptance-rejection sampling.  
We also establish second-order weak convergence of the operator splitting
schemes under the stated regularity assumptions, with corresponding
implications for pricing European and discretely monitored path dependent
options whose payoff functions satisfy these assumptions.

\section*{Acknowledgment}
\noindent The work of Yue Kuen Kwok was supported by Project 2023CX10X1 and the Guangzhou-HKUST (GZ) Joint Funding Program under Grant Number 2024A03J0630.

\section*{Declaration of Interest Statement}
\noindent There is no interest to declare.

\bibliography{ref_split_trap}
    \appendix
    \section{Exact simulation of the CIR process}\label{ap:CIR_simulation}\noindent
    We present a summary of the exact simulation of the CIR process in Proposition~3.1.2 and 3.1.3 of \citet{alfonsi2015affine}. Consider the CIR process $V_t$ as governed by following SDE:
    \begin{equation*}
    	\text{d}V_t=(\alpha-\beta V_t)\,\text{d}t+\sigma\sqrt{V_t}\,\text{d}W_t,
    \end{equation*}
    with initial value $V_0$, $\alpha\geq0$ and $\sigma>0$. We define 
    \begin{equation*}
    	\zeta_\beta(t)=\begin{cases}
    		\frac{1-e^{-\beta t}}{\beta},\,\beta\neq0\\
    		t,\,\beta=0,
    	\end{cases}
    \end{equation*}
    and $c_t=\frac4{\sigma^2\zeta_\beta(t)}.$
    
    \textbf{Case 1: }Suppose $4\alpha\geq\sigma^2$, $V_t$ can be simulated as follows:
    \begin{equation}
    	V_t\sim\left(e^{-\beta t/2}\sqrt{V_0}+\frac{\sigma}{2}\sqrt{\zeta_\beta(t)}G\right)^2+Z,
    \end{equation}
    where $G\sim N(0,1)$, $Z\sim\Gamma(\frac{2\alpha}{\sigma^2}-\frac12,\frac{c_t}{2})$ if $4\alpha>\sigma^2$, and $Z=0$ if $4\alpha=\sigma^2$.
    
    \textbf{Case 2: }Suppose $4\alpha<\sigma^2$, $V_t$ can be simulated as follows:
    \begin{equation}
    	V_t\sim\mathbf{1}_{U\leq e^{-d_t V_0/2}}\tilde{Z}+\mathbf{1}_{U>e^{-d_t V_0/2}}\tilde{V}_t,
	\end{equation}
	where $U\sim uniform[0,1]$, $d_t=e^{-\beta t}c_t$, $\tilde{Z}\sim\Gamma(\frac{2\alpha}{\sigma^2},\frac{c_t}{2})$ if $\alpha>0$ and $\tilde{Z}=0$ if $\alpha=0$.\\
	Here, $\tilde{V}_t$ is a CIR process followed by
	\begin{equation*}
		\text{d}\tilde{V}_t=(\tilde{\alpha}-\beta \tilde{V}_t)\,\text{d}t+\sigma\sqrt{\tilde{V}_t}\,\text{d}W_t,\,\tilde{V}_0=V_0+\frac{2\ln U}{d_t},
	\end{equation*}
	where $\tilde{\alpha}=\alpha+\frac{\sigma^2}{2}$. Now, for $\tilde{V}_t$, we observe $4\tilde{\alpha}>\sigma^2$, so it can be simulated according to Case 1.

	\section{Proof of Proposition~\ref{prop:heston}}\label{ap:proof_heston}\noindent
By the Strang-Marchuk operator splitting procedure as illustrated in Eq.~\eqref{eq:split}, we deduce the solution of the second-order operator splitting scheme as follows:
\begin{equation}\label{eq:second_split}
\left\{
\begin{aligned}
    (\tilde{X}, \tilde{V}) &= \left( X^{(1)}(X_t, V_t, h/2), V^{(1)}(V_t, h/2) \right), \\
    (\hat{X}, \hat{V}) &= \left( X^{(2)}(\tilde{X}, \tilde{V}, h), V^{(2)}(\tilde{V}, h) \right), \\
    (X_{t+h}^{\mathrm{split}}, V_{t+h}^{\mathrm{split}}) &= \left( X^{(1)}(\hat{X}, \hat{V}, h/2), V^{(1)}(\hat{V}, h/2) \right).
\end{aligned} 
\right.
\end{equation}

\noindent We unravel the operator splitting procedure by formalizing the updates of the state variables step-by-step using the exact analytical solutions provided in Eqs.~(\ref{eq:heston_split_sol}a, b).

\vspace{0.5em}
\noindent\textit{Step 1: First half-step of Sub-operator 1 over $h/2$.} \\
By applying Eq.~\eqref{eq:heston_split_first_sol} over a fractional step $h/2$ with initial values $X_t$ and $V_t$, we obtain the intermediate states $\tilde{X}_{t+h/2}^{(1)}$ and $\tilde{V}_{t+h/2}^{(1)}$. 
The variance $\tilde{V}_{t + \frac{h}{2}}^{(1)}$ and the corresponding integrated variance $I_t^{(1), h/2}$ are given by
\[
\tilde{V}_{t+h/2}^{(1)} = e^{-kh/2}(V_t - \theta) + \theta, \quad \text{and} \quad I_t^{(1), h/2} = \theta \frac{h}{2} + \frac{V_t - \theta}{k}(1 - e^{-kh/2}).
\]
The intermediate log-asset price is updated as
\[
\tilde{X}_{t+h/2}^{(1)} = X_t + r\frac{h}{2} - \frac{1}{2}I_t^{(1), h/2} + \sqrt{(1-\rho^2)I_t^{(1), h/2}} \epsilon_1, \quad \epsilon_1 \sim N(0,1).
\]

\vspace{0.5em}
\noindent\textit{Step 2: Full step of Sub-operator 2 over $h$.} \\
Taking $\tilde{X}_{t+h/2}^{(1)}$ and $\tilde{V}_{t+h/2}^{(1)}$ as the inputs, we apply Eq.~\eqref{eq:heston_split_second_sol} over the full time step $h$. The intermediate variance $\hat{V}_{t+h}^{(2)}$ is simulated from the exact CIR distribution:
\[
\hat{V}_{t+h}^{(2)} \sim CIR(0, 0, \sigma, h, \tilde{V}_{t+h/2}^{(1)}).
\]
The log-asset price $\hat{X}_{t+h}^{(2)}$ is obtained by adding the correlation component:
\[
\hat{X}_{t+h}^{(2)} = \tilde{X}_{t+h/2}^{(1)} + \frac{\rho}{\sigma}(\hat{V}_{t+h}^{(2)} - \tilde{V}_{t+h/2}^{(1)}).
\]

\vspace{0.5em}
\noindent\textit{Step 3: Second half-step of Sub-operator 1 over $h/2$.} \\
Finally, applying Eq.~\eqref{eq:heston_split_first_sol} again over a fractional step $h/2$ using $\hat{X}_{t+h}^{(2)}$ and $\hat{V}_{t+h}^{(2)}$ as inputs yields the terminal states $X_{t+h}^{\text{split}}$ and $V_{t+h}^{\text{split}}$. The integrated variance in this step, denoted by $\hat{I}_{t+h/2}^{(1), h/2}$, is given by
\[
\hat{I}_{t+h/2}^{(1), h/2} = \theta \frac{h}{2} + \frac{\hat{V}_{t+h}^{(2)} - \theta}{k}(1 - e^{-kh/2}).
\]
The terminal variance is $V_{t+h}^{\text{split}} = e^{-kh/2}(\hat{V}_{t+h}^{(2)} - \theta) + \theta$. The final log-asset price is found to be 
\[
X_{t+h}^{\text{split}} = \hat{X}_{t+h}^{(2)} + r\frac{h}{2} - \frac{1}{2}\hat{I}_{t+h/2}^{(1), h/2} + \sqrt{(1-\rho^2)\hat{I}_{t+h/2}^{(1), h/2}} \epsilon_2, \quad \epsilon_2 \sim N(0,1).
\]

\vspace{0.5em}
\noindent\textit{Composition to yield Eq.~\eqref{eq:sol_heston_split}} \\
Consequently, the total integrated variance $I_t^{\text{split},h}$ is the sum of the integrated variances from the two fractional steps, where
\[
\begin{aligned}
I_t^{\text{split},h} = I_t^{(1), h/2} + \hat{I}_{t+h/2}^{(1), h/2} = \theta h + \frac{V_t + \hat{V}_{t+h}^{(2)} - 2\theta}{k}(1 - e^{-kh/2}).
\end{aligned}
\]
By substituting $\tilde{X}_{t+h/2}^{(1)}$ into $\hat{X}_{t+h}^{(2)}$, and subsequently into $X_{t+h}^{\text{split}}$, we aggregate the updates of the log-asset price. Since the sum of the independent normal variables $N(0, (1-\rho^2)I_t^{(1), h/2})$ and $N(0, (1-\rho^2)\hat{I}_{t+h/2}^{(1), h/2})$ is equivalent to $N(0, (1-\rho^2)I_t^{\text{split},h})$ in distribution, the final updates for the state variables are given by
\[
\begin{cases}
X_{t+h}^{\text{split}} = X_t + rh - \frac{1}{2}I_t^{\text{split},h} + \frac{\rho}{\sigma}(\hat{V}_{t+h}^{(2)} - \tilde{V}_{t+h/2}^{(1)}) + N\left(0, (1-\rho^2)I_t^{\text{split},h}\right), \\
V_{t+h}^{\text{split}} = e^{-kh/2}(\hat{V}_{t+h}^{(2)} - \theta) + \theta.
\end{cases}
\]

\section{Proof of \Cref{prop:second_order_convergence}}\label{appendix: proof_second_order_vix}
\noindent Let $\mu_Z = \mathbb{E}[Z_1]$. We compare the increments of the exact mean $\mathbb{E}[V_{t+h}]$ and the approximate mean $\mathbb{E}[\bar{V}_{t+h}]$ over single time step $h$. The exact solution and the second-order operator splitting approximation are given by
\begin{align*}
    \mathbb{E}[V_{t+h}] &= e^{-\lambda h} \mathbb{E}[V_t] + \mu_Z \frac{1 - e^{-\lambda h}}{\lambda}, \\
    \mathbb{E}[\bar{V}_{t+h}] &= e^{-\lambda h} \mathbb{E}[\bar{V}_t] + \mu_Z h e^{-\lambda h/2}.
\end{align*}
The local truncation error $e_{loc}(h)$ is the difference between the above two increment terms. Using the Taylor expansions $\frac{1-e^{-\lambda h}}{\lambda} = h - \frac{\lambda h^2}{2} + \frac{\lambda^2 h^3}{6} + {O}(h^4)$ and $h e^{-\lambda h/2} = h - \frac{\lambda h^2}{2} + \frac{\lambda^2 h^3}{8} + {O}(h^4)$, we obtain
\begin{equation*}
    e_{loc}(h) = \mu_Z \left( \frac{1 - e^{-\lambda h}}{\lambda} - h e^{-\lambda h/2} \right) = \frac{\mu_Z \lambda^2}{24} h^3 + {O}(h^4).
\end{equation*}
The global bias at $T = Nh$ is the accumulation of these local errors, observing decay factor $e^{-\lambda h}$ over each time step. We then obtain
\begin{equation*}
    \left| \mathbb{E}[V_T] - \mathbb{E}[\bar{V}_T] \right| = \left| \sum_{k=0}^{N-1} e^{-\lambda (N-1-k)h} e_{loc}(h) \right| = |e_{loc}(h)| \frac{1 - e^{-\lambda T}}{1 - e^{-\lambda h}}.
\end{equation*}
Since the geometric sum factor has the same order as $\frac{1}{1 - e^{-\lambda h}} \approx \frac{1}{\lambda h} = {O}(h^{-1})$, the global bias satisfies the relation:
\begin{equation*}
    \text{Global Bias} \approx {O}(h^3) \cdot {O}(h^{-1}) = {O}(h^2). \qedhere
\end{equation*} 

	\section{Proof of Proposition~\ref{prop:vix_affine_cf}}\label{appendix:proof_vix_affine_cf}\noindent
	The proof consists of two parts: establishing the affine structure of the $\text{VIX}_T^2$ index and deriving the corresponding characteristic function.

	\vspace{1em}
\noindent \textit{Part 1: Affine structure of $\text{VIX}_T^2$} 

\noindent The squared VIX index is defined via the risk-neutral expectation of the integrated variance plus a jump correction term:
\begin{equation*}
    \text{VIX}_T^2 = \frac{1}{\tau_{vix}} \mathbb{E}^{\mathbb{Q}} \left[ \int_{T}^{T + \tau_{vix}} V_s \, \mathrm{d}s \bigg| \mathcal{F}_{T} \right] + 2 \int_{0}^{\infty} (e^{\rho x} - 1 - \rho x) \, \nu(\mathrm{d}x).
\end{equation*}
Using the integrated form of the variance process $V_s = V_T e^{-\lambda (s-T)} + \int_T^s e^{-\lambda (s-u)} \, \mathrm{d}Z_u$ and Fubini's theorem, the expected integrated variance is given by:
\begin{equation*}
    \mathbb{E}^{\mathbb{Q}} \left[ \int_{T}^{T + \tau_{vix}} V_s \, \mathrm{d}s \bigg| \mathcal{F}_{T} \right] = V_T \frac{1 - e^{-\lambda \tau_{vix}}}{\lambda} + \left( \int_0^\infty x \nu(\mathrm{d}x) \right) \frac{\lambda \tau_{vix} - (1 - e^{-\lambda \tau_{vix}})}{\lambda^2}.
\end{equation*}
Grouping the terms involving the Lévy measure $\nu(\mathrm{d}x)$, we obtain the affine relation between $\text{VIX}_T^2$ and $V_T$:
\begin{equation}
    \text{VIX}_T^2 = \mathcal{A}(\tau_{vix}) V_T + \mathcal{B}(\tau_{vix}),
    \label{eq:affine_vix}
\end{equation}
where $\mathcal{A}(\tau_{vix}) = \frac{1 - e^{-\lambda \tau_{vix}}}{\lambda \tau_{vix}}$ and $\mathcal{B}(\tau_{vix})$ collects the sum of constant drift and jump adjustments. 

\vspace{1em}
\noindent \textit{Part 2: Characteristic function} 

\noindent Recall the conditional characteristic function: $\Phi(\phi; V_t, \tau) = \mathbb{E}^{\mathbb{Q}}[e^{i\phi \text{VIX}_T^2} | V_t]$, where $\tau = T-t$. Based on the affine structure in Eq.~\eqref{eq:affine_vix}, we conjecture the exponential affine form:
    \begin{equation*}
        \Phi(\phi; V_t, \tau) = \exp\left( A(\phi; \tau)V_t + C(\phi; \tau) \right).
    \end{equation*}
    According to the Feynman-Kac Representation Theorem, $\Phi$ satisfies the following Partial Integro-Differential Equation (PIDE):
    \begin{equation*}
        \frac{\partial \Phi}{\partial \tau} = -\lambda V_t \frac{\partial \Phi}{\partial V_t} + \int_0^{\infty} (\Phi(V_t + y, \tau) - \Phi(V_t, \tau)) \, \nu(\mathrm{d}y),
    \end{equation*}
    subject to the terminal condition $\Phi(\phi; V_T, 0) = \exp(i\phi (\mathcal{A}(\tau_{vix}) V_T + \mathcal{B}(\tau_{vix})))$. Substituting the ansatz into the PIDE and separating terms involving $V_t$ yields the following system of ODEs:
    \begin{equation*}
    \begin{cases}
        \frac{\partial A}{\partial \tau} = -\lambda A(\phi; \tau), & A(\phi; 0) = i\phi \mathcal{A}(\tau_{vix}) \\
        \frac{\partial C}{\partial \tau} = \int_0^{\infty} (e^{A(\phi; \tau)y} - 1) \, \nu(\mathrm{d}y), & C(\phi; 0) = i\phi \mathcal{B}(\tau_{vix}).
    \end{cases}
    \end{equation*}
    Solving these ODEs explicitly gives the solutions of the coefficients functions:
    \begin{equation}
        A(\phi; \tau) = i\phi \mathcal{A}(\tau_{vix}) e^{-\lambda \tau}, \quad C(\phi; \tau) = i\phi \mathcal{B}(\tau_{vix}) + \int_0^{\tau} \int_0^{\infty} (e^{A(\phi, s)y} - 1) \, \nu(\mathrm{d}y) \, \mathrm{d}s.
    \end{equation}

\end{document}